\documentclass[
preprint,
amsmath,amssymb,
aps,
]{revtex4-2}

\usepackage{graphicx}
\usepackage{dcolumn}
\usepackage{bm}
\usepackage{hyperref}

\usepackage{amsthm}

\usepackage{dcolumn}
\usepackage{bm}
\usepackage{hyperref}
\usepackage{subcaption}
\usepackage{caption}
\usepackage{diagbox}
\usepackage{booktabs}
\usepackage{adjustbox}
\usepackage{multirow}
\usepackage[mathlines]{lineno}
\usepackage{ragged2e}
\usepackage{pifont}
\usepackage{makecell}
\usepackage{algpseudocode}
\DeclareCaptionType{algorithm}[Algorithm][List of Algorithms]
\usepackage{soul}
\usepackage{cleveref}
\usepackage{tikz}
\usepackage{array}
\usepackage{centernot}
\usepackage{cancel}
\usepackage{enumitem}

\newtheorem{theorem}{Theorem}
\newtheorem{lemma}{Lemma}
\newtheorem{proposition}{Proposition}
\newtheorem{corollary}{Corollary}

\newtheorem{definition}{Definition}

\newcommand{\ket}[1]{\lvert #1\rangle}
\newcommand{\bra}[1]{\langle #1\rvert}
\newcommand{\braket}[2]{\langle #1\vert #2\rangle}
\newcommand{\proj}[1]{\lvert #1\rangle\!\langle #1\rvert}
\newcommand{\Tr}{\operatorname{Tr}}
\newcommand{\id}{\mathsf{I}}
\newcommand{\eps}{\varepsilon}
\newcommand{\GHZ}{\mathrm{GHZ}}
\newcommand{\BM}{\mathrm{BM}}

\newcommand{\wt}{\operatorname{wt}}

\definecolor{codegreen}{rgb}{0,0.6,0}
\definecolor{codegray}{rgb}{0.5,0.5,0.5}
\definecolor{codepurple}{rgb}{0.58,0,0.82}
\definecolor{backcolour}{rgb}{0.95,0.95,0.92}

\begin{document}


\title{Disjoint Bell measurements enable near-projective\\GHZ certification}


\author{Hyunho Cha}
\email{Contact author: ovalavo@snu.ac.kr}
\author{Jungwoo Lee}
\email{Contact author: junglee@snu.ac.kr}
\affiliation{NextQuantum Innovation Research Center, Seoul National University, Seoul 08826, Republic of Korea}


\begin{abstract}
Certifying multipartite entangled states is a basic task in quantum information processing, but the achievable copy complexity depends crucially on the measurements available to the verifier. The strongest possible certification measurement for a known pure target state \(|\psi\rangle\) is the two-outcome projector \(\{|\psi\rangle\langle\psi|,\mathsf{I}-|\psi\rangle\langle\psi|\}\), which is copy-optimal but often experimentally unrealistic or outside the intended measurement model. In this work, we introduce BM-Cert, a single-copy verification protocol for the \(n\)-qubit Greenberger--Horne--Zeilinger (GHZ) state using only disjoint two-qubit Bell-basis measurements, together with one single-qubit \(X\)-basis measurement when \(n\) is odd. Surprisingly, a simple combinatorial effect yields perfect completeness and a verification spectral gap \(\nu_\mathrm{BM}(n)=1-O(1/n)\), so our depth-2 protocol already approaches the ideal projective verification asymptotically as \(n\) grows. This contrasts with local Pauli GHZ verification, whose optimal spectral gap remains bounded away from \(1\). Thus, allowing only two-qubit entangling measurements on disjoint pairs is enough to achieve asymptotically ideal projective certification. The same Bell-matching outcomes also yield BM-Fid, an unbiased estimator of the GHZ fidelity whose leading Hoeffding coefficient in the sample complexity tends to the ideal value achieved by direct projection. For the open-boundary linear nearest neighbor setting, we further introduce Brick-Cert, a disjoint 2-local certification protocol whose spectral gap \(4/5\) is optimal within that restricted architecture.
\end{abstract}

\maketitle


\section{Introduction}

A quantum computer or quantum network is often designed to output a particular state. In an experiment, noise and calibration errors can make the actually produced state differ from the intended one \cite{geller2020rigorous, cai2023quantum, dasgupta2024impact, khan2024multiaxis, yu2025efficient, haupt2026mitigating}. \emph{State-preparation certification} asks for a statistically justified assessment of whether the produced states are close to the target state \cite{somma2006lower, aolita2015reliable, buadescu2019quantum, eisert2020quantum, kliesch2021theory, huang2025certifying}.
The target in this work is the Greenberger--Horne--Zeilinger (GHZ) state
\(\ket{G_n}=(\ket{0^n}+\ket{1^n})/\sqrt2\), a standard multipartite entangled state with roles in quantum networks, secret sharing, teleportation, nonlocality, and entanglement certification \cite{greenberger1989going, greenberger1990bell, mermin1990extreme, karlsson1998quantum, bouwmeester1999observation, hillery1999quantum, dur2000three, pan2000experimental, toth2005entanglement, hahn2020anonymous, li2020optimal, han2021optimal, pickston2023conference}. Full tomography learns an entire density matrix and is generally far more expensive than merely checking closeness to one target \cite{james2001measurement, blume2010optimal, cramer2010efficient, gross2010quantum, da2011practical, christandl2012reliable, haah2016sample, o2016efficient}. Direct fidelity estimation \cite{da2011practical, flammia2011direct, zhang2021direct, barbera2025sampling, cha2025efficient, cha2026operator}, shadow tomography \cite{aaronson2018shadow, huang2020predicting, zhang2021experimental, nguyen2022optimizing, hu2023classical, ippoliti2024classical, king2025triply}, and quantum state verification \cite{pallister2018optimal,takeuchi2018verification,liu2019efficient,zhu2019efficienthyp,zhu2019efficient,li2020optimal,zhang2020experimental,govcanin2022sample,yu2022statistical,li2023robust,li2026universal} are major alternatives.
Other efficient schemes for characterizing structured multipartite states include entanglement detection protocols \cite{kiesel2005experimental, toth2005detecting, toth2005entanglement, chen2007experimental, vallone2007realization, gong2019genuine, thomas2022efficient} and permutationally invariant tomography \cite{toth2010permutationally, schwemmer2014experimental}.

If there were no measurement restriction, the verifier would simply measure
\(\{\proj{G_n},\id-\proj{G_n}\}\). One copy then passes with probability exactly equal to the fidelity \(\bra{G_n}\rho\ket{G_n}\). This is information-theoretically optimal but hides almost all experimental difficulty inside one global projector. Recent work also shows that nondemolition or memory-assisted models can approach the performance of global measurements by using stronger resources \cite{liu2021universally,chen2025quantum}. We study a more restricted and explicit family:
\begin{quote}
\emph{On each copy, the verifier may only perform disjoint two-qubit Bell measurements, plus one single-qubit \(X\)-basis measurement if one qubit is unmatched. The verifier may randomize the pairing and may postprocess the classical outcomes arbitrarily, but may not measure across different copies and may not implement a global projector or inverse GHZ-preparation circuit.}
\end{quote}

\begin{figure}
    \centering
    \includegraphics[width=0.9\linewidth]{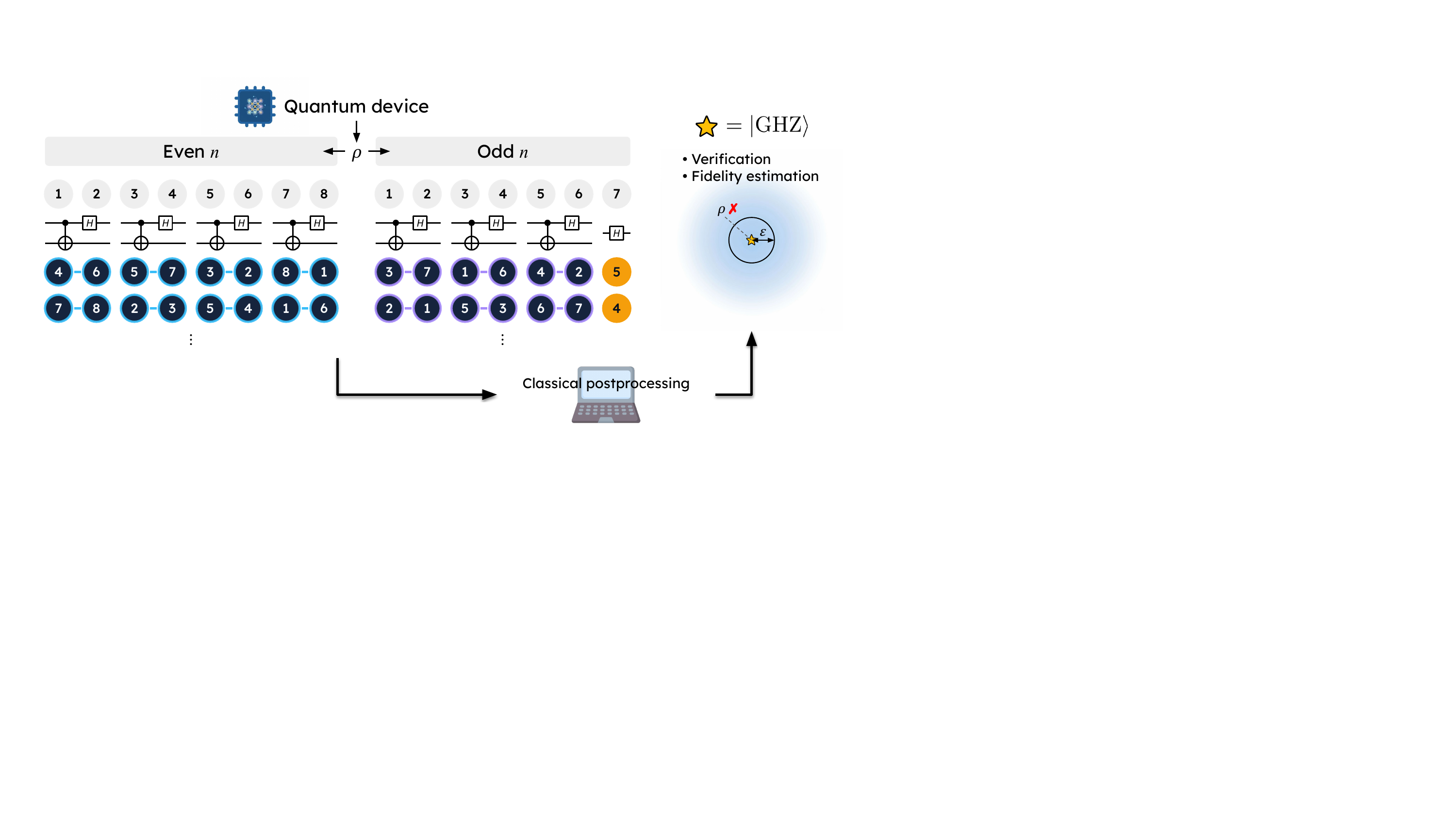}
    \caption{\justifying Schematic illustration of our measurements for GHZ certification. In each round, the protocol samples a uniformly random quasi-perfect matching of the \(n\) qubits. For every matched pair, a Bell-basis measurement is performed. For odd \(n\), one singleton qubit remains unmatched and is measured in the \(X\) basis.}
    \label{fig:schematic}
\end{figure}

Figure~\ref{fig:schematic} illustrates this restricted measurement model.
Within this family we give an optimal strategy. The main consequence is that by allowing only disjoint two-qubit Bell measurements, the Bell-matching strategy achieves spectral gap \(\nu_{\BM}(n)=1-O(1/n)\), which converges to the ideal projective value \(1\) as \(n\) grows. This gives a qualitative improvement over the local Pauli benchmark for qubit GHZ states, whose optimal spectral gap is \(2/3\) \cite{li2020optimal,dangniam2020optimal}. In particular, the Bell-matching strategy has a strictly larger spectral gap for all odd \(n\ge5\) and even \(n\ge6\).

Beyond this certification result, we also ask whether the measurement outcome contains enough information to estimate the target fidelity, rather than merely to accept or reject a copy.
We answer this by adding a postprocessing rule to the random matching outcomes, producing BM-Fid, an unbiased estimator whose Hoeffding coefficient in the sample complexity approaches the ideal value.
We then consider the complementary question of limited connectivity and show, through Brick-Cert, that on the open path \(1-2-\cdots-n\) one can achieve the optimal disjoint 2-local spectral gap \(4/5\).

\section{Preliminaries}

For a bit string \(a=(a_1,\ldots,a_n)\in\{0,1\}^n\), write
\[
\ket{a}=\ket{a_1}\otimes\cdots\otimes\ket{a_n}.
\]

\subsection{Fidelity}

\begin{definition}[Fidelity with a pure target]
Let \(\ket\psi\) be a unit vector and let \(\rho\) be a density matrix on the same Hilbert space. The fidelity of \(\rho\) with \(\ket\psi\) is
\[
F_\psi(\rho):=\bra\psi\rho\ket\psi=\Tr(\proj\psi\rho).
\]
The infidelity is \(1-F_\psi(\rho)\).
\end{definition}

The fidelity lies in \([0,1]\). It equals \(1\) exactly when \(\rho=\proj\psi\), and it equals \(0\) when \(\rho\) is completely supported on the subspace orthogonal to \(\ket\psi\).

\subsection{Pauli operators}

The Pauli \(X\) and \(Z\) matrices are
$
X=(\begin{smallmatrix}0&1\\1&0\end{smallmatrix})
$
and
$
Z=(\begin{smallmatrix}1&0\\0&-1\end{smallmatrix}).
$
If an operator acts on qubit \(i\), we write it with a subscript. For example,
\(Z_iZ_j\) means \(Z\) on qubits \(i\) and \(j\), and the identity on all other qubits. The operator
$
X^{\otimes n}:=X_1X_2\cdots X_n
$
flips every computational-basis bit.

\subsection{The GHZ target state}

\begin{definition}
For \(n\ge2\), the \(n\)-qubit GHZ state is
\[
\ket{G_n}:=\frac{\ket{0^n}+\ket{1^n}}{\sqrt2},
\]
where \(0^n=00\cdots0\) and \(1^n=11\cdots1\).
\end{definition}

The GHZ state is stabilized by the following observables:
\[
X^{\otimes n}\ket{G_n}=\ket{G_n},
\qquad
Z_iZ_j\ket{G_n}=\ket{G_n}\quad\text{for all }i\ne j.
\]

\subsection{Measurements, pass effects, and verification operators}

A two-outcome quantum test is described by a positive semidefinite operator \(E\) satisfying \(0\le E\le\id\). The probability that state \(\rho\) passes the test is \(\Tr(E\rho)\). The complementary failure probability is \(1-\Tr(E\rho)=\Tr((\id-E)\rho)\).

\begin{definition}
Let \(\ket\psi\) be the target. A test effect \(E\) has perfect completeness for \(\ket\psi\) if
\[
E\ket\psi=\ket\psi.
\]
Equivalently, \(\ket\psi\) passes the test with probability \(1\).
\end{definition}

A randomized verification strategy chooses a test effect \(E_s\) with probability \(p_s\). Its average pass effect, or verification operator, is
\[
\Omega:=\sum_s p_sE_s.
\]
If every \(E_s\) has perfect completeness, then \(\Omega\ket\psi=\ket\psi\).

\begin{definition}
Let \(\Omega\) be a verification operator with \(\Omega\ket\psi=\ket\psi\). Its second eigenvalue relative to \(\ket\psi\) is
\[
\beta(\Omega):=\bigl\| (\id-\proj\psi)\Omega(\id-\proj\psi)\bigr\|,
\]
where \(\|\cdot\|\) is the largest singular value, which equals the largest eigenvalue for positive semidefinite matrices. The spectral gap is
\[
\nu(\Omega):=1-\beta(\Omega).
\]
\end{definition}

The spectral gap is the standard copy-complexity parameter in quantum state verification \cite{pallister2018optimal,zhu2019efficient,yu2022statistical}, as explained by the following lemma.

\begin{lemma}[\cite{zhu2019efficient}]
\label{lem:many-copy}
Suppose the verifier tests \(N\) independent copies, using the same verification operator \(\Omega\) with spectral gap \(\nu\), and accepts only if every copy passes. Let the \(j\)-th copy have state \(\rho_j\) and infidelity
\[
\eps_j:=1-F_\psi(\rho_j).
\]
If the average infidelity obeys
\[
\bar\eps:=\frac1N\sum_{j=1}^N\eps_j\ge\eps,
\]
then the probability that all \(N\) tests pass is at most
\[
(1-\nu\eps)^N.
\]
Consequently, a significance level \(\delta\in(0,1)\) is guaranteed whenever
\[
N\ge
\left\lceil
\frac{\ln(1/\delta)}{-\ln(1-\nu\eps)}
\right\rceil .
\]
\end{lemma}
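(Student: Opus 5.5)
The proof has two parts: a single-copy bound from the spectral structure of \(\Omega\), then a product over the \(N\) copies combined with concavity.

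\textbf{Single-copy bound.} Write \(P=\proj\psi\) and \(Q=\id-P\). Since \(\Omega\ket\psi=\ket\psi\) and \(\Omega\) is Hermitian, we also have \(\bra\psi\Omega=\bra\psi\). Hence \(\Omega P=P\Omega=P\), and \(\Omega\) is block diagonal:
\[
\Omega=P+Q\Omega Q, \qquad Q\Omega Q\le\beta Q, \quad \beta=1-\nu .
\]
For any state \(\rho_j\) this gives
\[
\Tr(\Omega\rho_j)\le \Tr(P\rho_j)+\beta\Tr(Q\rho_j)=(1-\eps_j)+(1-\nu)\eps_j=1-\nu\eps_j .
\]

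\textbf{Many copies.} I will first treat the copies as independent, i.e.\ the joint state is \(\rho_1\otimes\cdots\otimes\rho_N\). The test applied to the joint state is \(\Omega^{\otimes N}\), so the probability that all copies pass is
\[
\prod_j \Tr(\Omega\rho_j)\le\prod_j(1-\nu\eps_j).
\]
Two facts finish this step. By AM--GM, \(\prod_j(1-\nu\eps_j)\le\bigl(1-\nu\bar\eps\bigr)^N\); equivalently, this follows from concavity of \(x\mapsto\ln(1-\nu x)\). The resulting bound is decreasing in \(\bar\eps\), so \(\bar\eps\ge\eps\) yields \((1-\nu\eps)^N\).

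\textbf{Sample complexity.} We require \((1-\nu\eps)^N\le\delta\). Taking logarithms, and noting \(-\ln(1-\nu\eps)>0\) whenever \(\nu\eps\in(0,1)\), this is equivalent to \(N\ge\ln(1/\delta)/[-\ln(1-\nu\eps)]\). Taking the ceiling gives the stated bound.

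\textbf{Main obstacle.} The delicate part is the interpretation of ``independent copies'' when the states \(\rho_j\) may be chosen adversarially or sequentially. In that setting the joint state need not be a product. If I needed the stronger claim, I would argue sequentially. Conditioned on all earlier copies passing, the post-measurement reduced state \(\rho_j'\) of copy \(j\) still obeys the single-copy bound \(1-\nu\eps_j'\). The total pass probability then becomes a product of these conditional factors. However, relating the conditional infidelities \(\eps_j'\) to the averaged \(\bar\eps\) is exactly where care is needed. I expect the intended statement is the product-state case, in which the argument above is complete.
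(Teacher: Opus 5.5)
Your proof is correct and is the standard argument behind the cited result; the paper gives no proof of its own and defers to \cite{zhu2019efficient}. The decomposition $\Omega=P+Q\Omega Q$ gives $\Tr(\Omega\rho_j)\le 1-\nu\eps_j$, independence turns the all-pass probability into $\prod_j\Tr(\Omega\rho_j)$, and AM--GM plus monotonicity in $\bar\eps$ gives $(1-\nu\eps)^N$. Your reading of ``independent copies'' as a product state is the intended one, and your caution about the correlated case is justified: there the bound genuinely fails, since a mixture of all-ideal and all-orthogonal copies passes with probability at least the weight of the ideal branch, independent of $N$.
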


\section{Main algorithm}

\subsection{Bell measurements}

The four two-qubit Bell states are
\[
\ket{\Phi^+}=\frac{\ket{00}+\ket{11}}{\sqrt2},\quad
\ket{\Phi^-}=\frac{\ket{00}-\ket{11}}{\sqrt2},
\]
\[
\ket{\Psi^+}=\frac{\ket{01}+\ket{10}}{\sqrt2},\quad
\ket{\Psi^-}=\frac{\ket{01}-\ket{10}}{\sqrt2}.
\]
A Bell-basis measurement on qubits \(i,j\) distinguishes these four states \cite{bennett1993teleporting, nielsen2010quantum}. Equivalently, it jointly measures the commuting observables \(Z_iZ_j\) and \(X_iX_j\). Its classical outcome is a pair of signs
\[
(z_{ij},x_{ij})\in\{+1,-1\}^2,
\]
where \(z_{ij}\) is the measured eigenvalue of \(Z_iZ_j\), and \(x_{ij}\) is the measured eigenvalue of \(X_iX_j\).

\begin{definition}
Let \([n]=\{1,2,\ldots,n\}\). A quasi-perfect matching \(Q\) of \([n]\) is:
\begin{itemize}[leftmargin=2em]
\item if \(n\) is even, a perfect matching, meaning a partition of \([n]\) into \(n/2\) unordered pairs;
\item if \(n\) is odd, a near-perfect matching, meaning a partition of \([n]\) into \((n-1)/2\) unordered pairs and one singleton.
\end{itemize}
We write \(M(Q)\) for the set of pairs in \(Q\). When \(n\) is odd, we write \(s(Q)\) for the unique singleton.
\end{definition}

The number of perfect matchings on \(n\) labeled vertices, for even \(n\), is
\[
(n-1)!!=(n-1)(n-3)\cdots3\cdot1.
\]
The number of near-perfect matchings on \(n\) labeled vertices, for odd \(n\), is
\[
n(n-2)!!=n(n-2)(n-4)\cdots3\cdot1.
\]
The double-factorial convention \((-1)!!=1\) will be used later.

\subsection{The BM-Cert algorithm}

\begin{algorithm}[t]
\caption{\justifying BM-Cert for \(\ket{G_n}\), with \(n\ge3\)}
\label{alg:bm-cert}
\begin{algorithmic}[1]
\Require Infidelity threshold \(\eps\in(0,1)\) and significance level \(\delta\in(0,1)\)

\State Set
\[
\nu_{\BM}(n)=
\begin{cases}
1-\dfrac{1}{n-1}, & n \text{ even},\\[5pt]
1-\dfrac{1}{n}, & n \text{ odd},
\end{cases}
\]
\Statex and
\[
N=\left\lceil\frac{\ln(1/\delta)}{-\ln(1-\nu_{\BM}(n)\eps)}\right\rceil.
\]

\For{each round \(t=1,\ldots,N\)}
    \State Sample a uniformly random quasi-perfect matching \(Q_t\) of \([n]\)

    \For{every pair \(\{i,j\}\in M(Q_t)\)}
        \State Perform a Bell-basis measurement on qubits \(i,j\)
        \State Record its outcome \(z_{ij},x_{ij}\in\{+1,-1\}\)
    \EndFor

    \If{\(n\) is odd}
        \State Measure the singleton qubit \(s(Q_t)\) in the \(X\) basis
        \State Record its outcome \(x_{s(Q_t)}\in\{+1,-1\}\)
    \EndIf

    \State Accept this round if and only if
    \Statex
    \[
     z_{ij}=+1\ \forall\{i,j\}\in M(Q_t),
    \]
    \Statex
    \[
    \begin{cases}
    \prod_{\{i,j\}\in M(Q_t)}x_{ij}=+1, & n \text{ even},\\[5pt]
    x_{s(Q_t)}\prod_{\{i,j\}\in M(Q_t)}x_{ij}=+1, & n \text{ odd}.
    \end{cases}
    \]
\EndFor

\Statex \textbf{Final decision:} Accept the preparation only if every one of the \(N\) rounds accepts
\end{algorithmic}
\end{algorithm}

Our main algorithm is presented in Algorithm~\ref{alg:bm-cert}. To obtain a uniform quasi-perfect matching, randomly permute the \(n\) vertices and then pair consecutive vertices. If \(n\) is odd, leave the last vertex as the singleton. Every quasi-perfect matching appears the same number of times under this procedure.

\section{Operator form of one Bell-matching test}

For a pair \(e=\{i,j\}\), define the projector onto the \(+1\) eigenspace of \(Z_iZ_j\) by
\[
P^Z_e:=\frac{\id+Z_iZ_j}{2}.
\]
Define also the projector onto the \(+1\) eigenspace of the global \(X\)-parity by
\[
P_X:=\frac{\id+X^{\otimes n}}{2}.
\]
For a quasi-perfect matching \(Q\), define
\[
\Pi_Q:=P_X\prod_{e\in M(Q)}P^Z_e.
\]
All projectors in this product commute, so the product is again a projector.

\begin{lemma}
\label{lem:pass-effect}
For every quasi-perfect matching \(Q\), the pass effect of the BM-Cert measurement with matching \(Q\) is
\[
\Pi_Q=P_X\prod_{e\in M(Q)}P^Z_e.
\]
\end{lemma}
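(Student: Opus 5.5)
The plan is to write the pass effect as a sum of joint spectral projectors of the commuting observables that the measurement with matching \(Q\) actually reads out, and then collapse that sum with the indicator of the acceptance rule. A Bell-basis measurement on a pair \(e=\{i,j\}\) is the joint projective measurement of the commuting observables \(Z_iZ_j\) and \(X_iX_j\). Its outcome projectors are therefore
\[
\frac{\id+z_eZ_iZ_j}{2}\cdot\frac{\id+x_eX_iX_j}{2},\qquad (z_e,x_e)\in\{\pm1\}^2.
\]
For odd \(n\), the singleton measurement has outcome projectors \((\id+x_sX_s)/2\). Operators on disjoint supports commute, so the full measurement with matching \(Q\) is the joint projective measurement of the commuting family \(\{Z_iZ_j,X_iX_j\}_{\{i,j\}\in M(Q)}\), together with \(X_{s(Q)}\) when \(n\) is odd. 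Each outcome string then corresponds to the product of the individual projectors.

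The pass effect is the sum of these outcome projectors over the accepted outcome strings. The acceptance rule factorizes into two independent conditions. The first is \(z_e=+1\) for every pair \(e\). The second is that the product of the \(x\)-outcomes, including \(x_s\) when \(n\) is odd, equals \(+1\). Summing over the \(z\)-labels subject to \(z_e=+1\) simply gives \(\prod_{e\in M(Q)}P^Z_e\). Summing over the \(x\)-labels with product constraint \(+1\) is the key identity. For commuting involutions \(A_1,\dots,A_m\),
\[
\sum_{a\in\{\pm1\}^m,\ \prod_k a_k=+1}\ \prod_k\frac{\id+a_kA_k}{2}=\frac{\id+A_1\cdots A_m}{2}.
\]
This holds because expanding \(\prod_k(\id+a_kA_k)\) gives \(\sum_{S}\prod_{k\in S}a_kA_k\), and averaging over the constrained sign vectors kills every term except \(S=\emptyset\) and \(S=[m]\). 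Equivalently, one can insert the indicator \((1+\prod_k a_k)/2\) and sum freely over all sign vectors.

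Apply this identity with the \(A_k\) taken to be the \(X_iX_j\) for \(e\in M(Q)\), plus \(X_{s(Q)}\) when \(n\) is odd. Because \(Q\) partitions \([n]\), the product of all the \(A_k\) is exactly \(X^{\otimes n}\), so the \(x\)-sum yields \(P_X\).

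Since all the \(Z\)-type and \(X\)-type projectors commute, the combined sum factorizes as \(P_X\prod_e P^Z_e=\Pi_Q\). In particular, \(Z_iZ_j\) commutes with \(X^{\otimes n}\) because it contains two \(Z\)'s. The only step needing care is the parity-sum identity and the observation that the chosen \(X\)-operators multiply to \(X^{\otimes n}\) precisely because \(Q\) covers every qubit exactly once. Everything else is bookkeeping.
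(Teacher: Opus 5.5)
Your proof is correct and follows the same route as the paper. Both split the acceptance rule into the $z$-condition, which gives $\prod_e P^Z_e$, and the $x$-parity condition, which gives $P_X$ because the blocks of $Q$ cover every qubit once and so multiply to $X^{\otimes n}$; the two factors are then combined using commutativity. The only difference is that you explicitly prove the parity-sum identity, which the paper uses implicitly when it says the product condition ``contributes $P_X$''.
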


\begin{proof}
We first consider even \(n\). Let the pairs of \(Q\) be \(e_1,\ldots,e_m\), where \(m=n/2\). The Bell measurement on pair \(e_r\) jointly measures \(Z_{e_r}:=Z_iZ_j\) and \(X_{e_r}:=X_iX_j\) for \(e_r=\{i,j\}\). The round accepts those outcomes for which
\[
Z_{e_r}=+1\quad\text{for all }r,
\qquad
\prod_{r=1}^m X_{e_r}=+1.
\]
The first condition contributes the projector \(\prod_{r=1}^mP^Z_{e_r}\). Because the pairs are disjoint,
\[
\prod_{r=1}^m X_{e_r}=X_1X_2\cdots X_n=X^{\otimes n},
\]
so the second condition contributes \(P_X=(\id+X^{\otimes n})/2\). These observables commute, and hence the joint pass effect is
\[
P_X\prod_{r=1}^mP^Z_{e_r}=\Pi_Q.
\]

Now consider odd \(n\). Let \(e_1,\ldots,e_m\) be the pairs, where \(m=(n-1)/2\), and let \(s=s(Q)\) be the singleton. The pair measurements again impose \(Z_{e_r}=+1\) for every \(r\). The total \(X\)-parity condition is now
\[
X_s\prod_{r=1}^mX_{e_r}=+1.
\]
Since the pairs together with the singleton cover every qubit exactly once,
\[
X_s\prod_{r=1}^mX_{e_r}=X_1X_2\cdots X_n=X^{\otimes n}.
\]
Thus the same joint projector \(P_X\prod_{e\in M(Q)}P^Z_e\) is obtained.
\end{proof}

\begin{lemma}
\label{lem:perfect-completeness}
For every quasi-perfect matching \(Q\),
\[
\Pi_Q\ket{G_n}=\ket{G_n}.
\]
Therefore BM-Cert has perfect completeness.
\end{lemma}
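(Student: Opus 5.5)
The plan is to check directly that every projector factor in \(\Pi_Q\) fixes \(\ket{G_n}\), using the stabilizer relations recorded in the preliminaries. By Lemma~\ref{lem:pass-effect}, the pass effect for matching \(Q\) is \(\Pi_Q=P_X\prod_{e\in M(Q)}P^Z_e\). It therefore suffices to show that \(P^Z_e\ket{G_n}=\ket{G_n}\) for every pair \(e\in M(Q)\) and that \(P_X\ket{G_n}=\ket{G_n}\). Applying these factors one at a time, in any order since they commute, then leaves \(\ket{G_n}\) unchanged.

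For a pair \(e=\{i,j\}\) with \(i\ne j\), the stabilizer identity \(Z_iZ_j\ket{G_n}=\ket{G_n}\) gives
\[
P^Z_e\ket{G_n}=\tfrac12\bigl(\ket{G_n}+Z_iZ_j\ket{G_n}\bigr)=\ket{G_n}.
\]
This identity can also be seen directly: both \(\ket{0^n}\) and \(\ket{1^n}\) have equal bits on qubits \(i\) and \(j\). Similarly, \(X^{\otimes n}\) swaps \(\ket{0^n}\) and \(\ket{1^n}\), so \(X^{\otimes n}\ket{G_n}=\ket{G_n}\), and hence \(P_X\ket{G_n}=\ket{G_n}\).

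Combining these two facts gives \(\Pi_Q\ket{G_n}=\ket{G_n}\) for every quasi-perfect matching \(Q\), with \(n\) even or odd, because Lemma~\ref{lem:pass-effect} already absorbed the singleton \(X\) measurement into \(P_X\). Consequently \(\Tr(\Pi_Q\proj{G_n})=1\), so the ideal state passes every round with probability one; this is perfect completeness. Averaging over \(Q\), the verification operator \(\Omega_{\BM}=\sum_Q p_Q\Pi_Q\) also satisfies \(\Omega_{\BM}\ket{G_n}=\ket{G_n}\).

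No step is a genuine obstacle. The only point needing care is making sure Lemma~\ref{lem:pass-effect} covers the odd case, so that the singleton measurement does not contribute a separate factor beyond \(P_X\).
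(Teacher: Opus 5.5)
Your proposal is correct and follows essentially the same route as the paper's proof. Both show that \(P_X\) fixes \(\ket{G_n}\) because \(X^{\otimes n}\) swaps \(\ket{0^n}\) and \(\ket{1^n}\), and that each \(P^Z_e\) fixes \(\ket{G_n}\) because both strings have equal bits on every pair, so the product \(\Pi_Q\) fixes the target; your remarks about the odd case and the averaged \(\Omega_{\BM}\) are harmless extras.
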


\begin{proof}
The GHZ state obeys
\[
X^{\otimes n}\ket{G_n}=\ket{G_n}
\]
because \(X^{\otimes n}\ket{0^n}=\ket{1^n}\) and \(X^{\otimes n}\ket{1^n}=\ket{0^n}\). Therefore
\[
P_X\ket{G_n}=\ket{G_n}.
\]
For any pair \(e=\{i,j\}\), the two strings \(0^n\) and \(1^n\) have equal bits at positions \(i,j\). Hence
\[
Z_iZ_j\ket{0^n}=\ket{0^n},
\qquad
Z_iZ_j\ket{1^n}=\ket{1^n},
\]
and therefore
\[
Z_iZ_j\ket{G_n}=\ket{G_n},
\qquad
P^Z_e\ket{G_n}=\ket{G_n}.
\]
Because every factor in \(\Pi_Q=P_X\prod_{e\in M(Q)}P^Z_e\) fixes \(\ket{G_n}\), their product fixes \(\ket{G_n}\).
\end{proof}

The BM-Cert verification operator is the uniform average
\[
\Omega_{\BM}:=\mathbb E_Q\,\Pi_Q,
\]
where \(Q\) is a uniformly random quasi-perfect matching.

\section{Diagonalization in a GHZ basis and the spectral gap}

For a bit string \(a\in\{0,1\}^n\), define its bitwise complement by
\[
\bar a:=a\oplus 1^n,
\]
where \(\oplus\) means bitwise addition modulo \(2\). Thus \(\bar a_i=1-a_i\). Choose one representative from each unordered pair \(\{a,\bar a\}\), and call the representative set \(R\). For each \(a\in R\) and each sign \(\tau\in\{+1,-1\}\), define
\begin{equation}
\label{eq:ghz-basis-def}
\ket{a,\tau}:=\frac{\ket a+\tau\ket{\bar a}}{\sqrt2}.
\end{equation}
The target is \(\ket{G_n}=\ket{0^n,+}\) when the representative of \(\{0^n,1^n\}\) is chosen to be \(0^n\).

\begin{lemma}[GHZ-basis eigenvectors]\label{lem:ghz-basis-eigen}
The vectors \(\ket{a,\tau}\), with \(a\in R\) and \(\tau\in\{+1,-1\}\), form an orthonormal basis of \((\mathbb C^2)^{\otimes n}\). Moreover,
\begin{equation}
\label{eq:apply-x-to-ghz-basis}
X^{\otimes n}\ket{a,\tau}=\tau\ket{a,\tau},
\end{equation}
and for every pair \(e=\{i,j\}\),
\begin{equation}
\label{eq:apply-zz-to-ghz-basis}
Z_iZ_j\ket{a,\tau}=(-1)^{a_i\oplus a_j}\ket{a,\tau}.
\end{equation}
\end{lemma}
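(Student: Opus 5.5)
The plan is to verify the three claims directly from the definition \eqref{eq:ghz-basis-def}, using only how \(X^{\otimes n}\) and \(Z_iZ_j\) act on computational basis vectors.

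\emph{Orthonormality and completeness.} Because \(\bar{\bar a}=a\) and \(\bar a\neq a\) (complementation flips every bit and \(n\ge1\)), the map \(a\mapsto\bar a\) is a fixed-point-free involution on \(\{0,1\}^n\). Hence the \(2^n\) strings split into \(2^{n-1}\) disjoint unordered pairs \(\{a,\bar a\}\), so \(|R|=2^{n-1}\) and there are exactly \(2^n\) vectors \(\ket{a,\tau}\).
\begin{itemize}
\item For distinct representatives \(a\neq b\) in \(R\), the sets \(\{a,\bar a\}\) and \(\{b,\bar b\}\) are disjoint, so \(\ket{a,\tau}\) and \(\ket{b,\tau'}\) are supported on disjoint sets of computational basis vectors and are orthogonal.
\item For a fixed \(a\), since \(\ket a\perp\ket{\bar a}\),
\[
\braket{a,\tau}{a,\tau'}=\tfrac12(1+\tau\tau'),
\]
which equals \(1\) if \(\tau=\tau'\) and \(0\) otherwise.
\end{itemize}
So we have \(2^n\) orthonormal vectors in a \(2^n\)-dimensional space, which therefore form a basis. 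Alternatively, each two-dimensional span \(\mathrm{span}\{\ket a,\ket{\bar a}\}\) is recovered as \(\mathrm{span}\{\ket{a,+},\ket{a,-}\}\).

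\emph{Action of \(X^{\otimes n}\), Eq.~\eqref{eq:apply-x-to-ghz-basis}.} Since \(X^{\otimes n}\ket a=\ket{\bar a}\) and \(X^{\otimes n}\ket{\bar a}=\ket a\),
\[
X^{\otimes n}\ket{a,\tau}=\frac{\ket{\bar a}+\tau\ket a}{\sqrt2}.
\]
Because \(\tau^2=1\), we can write \(\ket{\bar a}+\tau\ket a=\tau(\ket a+\tau\ket{\bar a})\), which gives \(X^{\otimes n}\ket{a,\tau}=\tau\ket{a,\tau}\).

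\emph{Action of \(Z_iZ_j\), Eq.~\eqref{eq:apply-zz-to-ghz-basis}.} On a basis vector, \(Z_iZ_j\ket b=(-1)^{b_i\oplus b_j}\ket b\). For the complement, \(\bar a_i\oplus\bar a_j=(1\oplus a_i)\oplus(1\oplus a_j)=a_i\oplus a_j\), so \(\ket a\) and \(\ket{\bar a}\) carry the same eigenvalue. Their linear combination \(\ket{a,\tau}\) is therefore an eigenvector with eigenvalue \((-1)^{a_i\oplus a_j}\).

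No step is a genuine obstacle. The only point needing care is the counting argument for completeness, which rests on \(a\mapsto\bar a\) having no fixed points, so that \(R\) is well defined with exactly \(2^{n-1}\) elements.
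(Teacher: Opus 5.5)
Your proof is correct and follows the paper's argument essentially step for step: counting $2^{n-1}$ complement pairs times two signs, the direct computation $X^{\otimes n}\ket{a,\tau}=\tau\ket{a,\tau}$ using $\tau^2=1$, and the observation that $\ket a$ and $\ket{\bar a}$ share the $Z_iZ_j$ eigenvalue. You are somewhat more explicit than the paper on two points it leaves implicit: that complementation has no fixed points, and that vectors built from different representatives are orthogonal because their computational-basis supports are disjoint.
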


\begin{proof}
First, for a fixed pair \(\{a,\bar a\}\), the two vectors
\[
\frac{\ket a+\ket{\bar a}}{\sqrt2},
\qquad
\frac{\ket a-\ket{\bar a}}{\sqrt2}
\]
are orthonormal. There are \(2^{n-1}\) complement pairs and two signs \(\tau\), hence \(2^n\) orthonormal vectors. The Hilbert space dimension is \(2^n\), so these vectors form an orthonormal basis.

Next,
\[
X^{\otimes n}\ket a=\ket{\bar a},
\qquad
X^{\otimes n}\ket{\bar a}=\ket a.
\]
Therefore
\[
X^{\otimes n}\ket{a,\tau}
=\frac{\ket{\bar a}+\tau\ket a}{\sqrt2}
=\tau\frac{\ket a+\tau\ket{\bar a}}{\sqrt2}
=\tau\ket{a,\tau},
\]
where the equality \(\tau^2=1\) was used.

For the \(Z_iZ_j\) statement, note that
\[
Z_iZ_j\ket a=(-1)^{a_i\oplus a_j}\ket a.
\]
Also,
\[
\bar a_i\oplus \bar a_j=(1-a_i)\oplus(1-a_j)=a_i\oplus a_j,
\]
so
\[
Z_iZ_j\ket{\bar a}=(-1)^{a_i\oplus a_j}\ket{\bar a}.
\]
Thus the same eigenvalue multiplies both terms in \(\ket{a,\tau}\), proving
\[
Z_iZ_j\ket{a,\tau}=(-1)^{a_i\oplus a_j}\ket{a,\tau}.
\]
\end{proof}

For a bit string \(a\), let
\[
A(a):=\{i\in[n]:a_i=1\},
\]
and let
\[
\wt(a):=|A(a)|.
\]
A pair \(\{i,j\}\) has \(a_i=a_j\) precisely when it does not cross the cut \((A(a),[n]\setminus A(a))\).

\begin{lemma}[Eigenvalue as a cut-avoidance probability]\label{lem:cut-prob}
Let \(Q\) be uniformly random among quasi-perfect matchings and let
\[
\Omega_{\BM}=\mathbb E_Q\Pi_Q.
\]
For every \(a\in R\) and \(\tau\in\{+1,-1\}\),
\[
\Omega_{\BM}\ket{a,\tau}=\lambda(a,\tau)\ket{a,\tau},
\]
where
\[
\lambda(a,\tau)=
\begin{cases}
\Pr_Q\bigl[\textnormal{every pair in }M(Q)\textnormal{ has equal }a\text{-bits}\bigr], & \tau=+1,\\[3pt]
0, & \tau=-1.
\end{cases}
\]
Equivalently, for \(\tau=+1\), \(\lambda(a,+)\) is the probability that the random matching avoids the cut \((A(a),[n]\setminus A(a))\).
\end{lemma}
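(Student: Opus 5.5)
The plan is to compute the action of each individual projector \(\Pi_Q\) on the GHZ-basis vector \(\ket{a,\tau}\) and then average over \(Q\) by linearity. By Lemma~\ref{lem:ghz-basis-eigen}, every vector \(\ket{a,\tau}\) is a simultaneous eigenvector of \(X^{\otimes n}\) and of every \(Z_iZ_j\). Hence it is also an eigenvector of each factor in \(\Pi_Q=P_X\prod_{e\in M(Q)}P^Z_e\). Concretely, applying \eqref{eq:apply-x-to-ghz-basis} gives
\[
P_X\ket{a,\tau}=\tfrac{1+\tau}{2}\ket{a,\tau},
\]
which equals \(\ket{a,\tau}\) when \(\tau=+1\) and vanishes when \(\tau=-1\). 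Likewise, applying \eqref{eq:apply-zz-to-ghz-basis} to a pair \(e=\{i,j\}\) gives
\[
P^Z_e\ket{a,\tau}=\tfrac{1+(-1)^{a_i\oplus a_j}}{2}\ket{a,\tau}=\mathbf 1[a_i=a_j]\ket{a,\tau}.
\]

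Multiplying these commuting scalar actions gives
\[
\Pi_Q\ket{a,\tau}=\tfrac{1+\tau}{2}\prod_{\{i,j\}\in M(Q)}\mathbf 1[a_i=a_j]\,\ket{a,\tau}.
\]
The scalar in front is the indicator that \(\tau=+1\) and that no pair of \(M(Q)\) crosses the cut \((A(a),[n]\setminus A(a))\). For odd \(n\), the singleton \(s(Q)\) imposes no \(Z\)-constraint, so it never enters this indicator. Its only role, the \(X\)-parity, is already absorbed into \(P_X\) by Lemma~\ref{lem:pass-effect}.

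Averaging over the uniform random \(Q\) and using linearity of expectation then gives
\[
\Omega_{\BM}\ket{a,\tau}=\mathbb E_Q[\text{indicator}]\,\ket{a,\tau}.
\]
For \(\tau=-1\) the indicator is identically zero, so \(\lambda(a,-)=0\). For \(\tau=+1\) the expectation of the indicator is exactly the probability that every pair of \(M(Q)\) has equal \(a\)-bits. This is the cut-avoidance probability in the statement, using the observation made just before it that a pair has equal bits if and only if it does not cross the cut.

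There is no real obstacle here. The only point needing care is that the eigenvalue depends on \(a\) only through the cut \(\{A(a),[n]\setminus A(a)\}\). It is therefore invariant under \(a\mapsto\bar a\), so the choice of representative set \(R\) is irrelevant; this invariance is already implicit in \eqref{eq:apply-zz-to-ghz-basis}.
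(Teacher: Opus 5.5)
Your proposal is correct and follows essentially the same route as the paper: read off from Lemma~\ref{lem:ghz-basis-eigen} that \(P_X\) acts as \(\tfrac{1+\tau}{2}\) and each \(P^Z_e\) acts as the indicator of equal \(a\)-bits on \(\ket{a,\tau}\), conclude that \(\Pi_Q\ket{a,\tau}\) is an indicator times \(\ket{a,\tau}\), and average over \(Q\). Your closing remark about invariance under \(a\mapsto\bar a\) is a harmless consistency check that the paper's proof does not need.
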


\begin{proof}
By Lemma~\ref{lem:ghz-basis-eigen}, \(\ket{a,\tau}\) is an eigenvector of \(X^{\otimes n}\) with eigenvalue \(\tau\). Hence
\[
P_X\ket{a,\tau}=\frac{1+\tau}{2}\ket{a,\tau}.
\]
Thus \(P_X\ket{a,+}=\ket{a,+}\), while \(P_X\ket{a,-}=0\). This proves \(
\Pi_Q\ket{a,-}=0\) for every \(Q\), and hence \(\lambda(a,-)=0\).

Now assume \(\tau=+1\). For a pair \(e=\{i,j\}\), Lemma~\ref{lem:ghz-basis-eigen} gives
\[
Z_iZ_j\ket{a,+}=(-1)^{a_i\oplus a_j}\ket{a,+}.
\]
Therefore
\[
P^Z_e\ket{a,+}=
\begin{cases}
\ket{a,+}, & a_i=a_j,\\
0, & a_i\ne a_j.
\end{cases}
\]
Because \(\Pi_Q=P_X\prod_{e\in M(Q)}P^Z_e\), and \(P_X\ket{a,+}=\ket{a,+}\), we obtain
\[
\Pi_Q\ket{a,+}=
\begin{cases}
\ket{a,+}, & a_i=a_j\text{ for every }\{i,j\}\in M(Q),\\
0, & \text{otherwise}.
\end{cases}
\]
Averaging over uniformly random \(Q\) proves the stated eigenvalue formula. The phrase ``avoids the cut'' is the condition that no matched pair has one endpoint in \(A(a)\) and one endpoint outside \(A(a)\).
\end{proof}

\begin{lemma}[Cut-avoidance probability for even \(n\)]\label{lem:even-count}
Let \(n\ge4\) be even, and let \(Q\) be a uniformly random perfect matching of \([n]\). Fix a subset \(A\subset[n]\) with \(|A|=k\). The probability that every pair of \(Q\) is contained entirely in \(A\) or entirely in \([n]\setminus A\) is
\[
E_n(k)=
\begin{cases}
\dfrac{(k-1)!!(n-k-1)!!}{(n-1)!!}, & k\textnormal{ even},\\[10pt]
0, & k\textnormal{ odd}.
\end{cases}
\]
For nontrivial cuts, meaning \(1\le k\le n-1\), the largest possible value is
\[
\max_{1\le k\le n-1}E_n(k)=\frac{1}{n-1}.
\]
\end{lemma}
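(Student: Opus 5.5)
The plan has two parts: first count the perfect matchings that avoid the cut, then maximize the resulting ratio over nontrivial \(k\).

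\textbf{Counting step.} A perfect matching \(Q\) of \([n]\) has every pair inside \(A\) or inside \([n]\setminus A\) exactly when \(Q\) is the disjoint union of a perfect matching of \(A\) and a perfect matching of \([n]\setminus A\). This correspondence is a bijection. If \(k\) is odd, then \(A\) has no perfect matching, so no matching avoids the cut and \(E_n(k)=0\). If \(k\) is even, then \(n-k\) is also even. In that case the number of cut-avoiding matchings is \((k-1)!!\,(n-k-1)!!\), using the perfect-matching count \((m-1)!!\) quoted earlier together with the convention \((-1)!!=1\) for \(k=0\) or \(k=n\). Dividing by the total count \((n-1)!!\) gives the formula for \(E_n(k)\), because \(Q\) is uniform.

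\textbf{Maximization step.} Only even \(k\) with \(2\le k\le n-2\) need be considered. At \(k=2\) the formula gives
\[
E_n(2)=\frac{1\cdot(n-3)!!}{(n-1)!!}=\frac{1}{n-1},
\]
and by the symmetry \(E_n(k)=E_n(n-k)\) the same value occurs at \(k=n-2\). It remains to show \(E_n(k)\le E_n(2)\) for all even \(k\) in this range.

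I would compare consecutive even values. For even \(k\) with \(2\le k\le n-4\),
\[
\frac{E_n(k+2)}{E_n(k)}=\frac{(k+1)!!\,(n-k-3)!!}{(k-1)!!\,(n-k-1)!!}=\frac{k+1}{n-k-1}.
\]
This ratio is at most \(1\) exactly when \(k\le (n-2)/2\). Hence \(E_n\) is nonincreasing on even \(k\) from \(2\) up to about \(n/2\), and by the symmetry it is nondecreasing from there up to \(n-2\). So the sequence is unimodal with its minimum in the middle, and its maximum over \(2\le k\le n-2\) is attained at the endpoints \(k=2\) and \(k=n-2\), giving \(1/(n-1)\).

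The only delicate point is checking the direction of this monotonicity and the endpoint bookkeeping, including the case \(n=4\), where only \(k=2\) is available and the value is \(1/3=1/(n-1)\). An alternative that avoids the case analysis is a direct probabilistic bound: conditioning on the partner of a fixed vertex \(i\in A\), that partner must lie in \(A\), which happens with probability \((k-1)/(n-1)\). I expect the ratio argument to suffice, however.
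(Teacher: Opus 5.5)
Your proof is correct and is essentially the paper's argument: the same bijection between cut-avoiding perfect matchings and pairs of perfect matchings of \(A\) and \([n]\setminus A\), then the same consecutive-ratio computation \(E_n(k+2)/E_n(k)=(k+1)/(n-k-1)\) combined with the symmetry \(E_n(k)=E_n(n-k)\). One caution about your aside: conditioning on the partner of a single vertex only gives \(E_n(k)\le (k-1)/(n-1)\), which is weaker than \(1/(n-1)\) for even \(k\ge 4\), so to replace the ratio argument it would have to be iterated via \(E_n(k)=\frac{k-1}{n-1}E_{n-2}(k-2)\) together with an induction on \(n\).
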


\begin{proof}
The total number of perfect matchings of \([n]\) is \((n-1)!!\). A perfect matching avoids the cut \((A,[n]\setminus A)\) when it separately matches the vertices inside \(A\) and the vertices inside \([n]\setminus A\). If \(k\) is odd, this is impossible because an odd number of vertices cannot be partitioned into pairs. Thus \(E_n(k)=0\) for odd \(k\).

If \(k\) is even, the number of perfect matchings inside \(A\) is \((k-1)!!\), and the number of perfect matchings inside the complement is \((n-k-1)!!\). Multiplying these independent choices gives
\[
E_n(k)=\frac{(k-1)!!(n-k-1)!!}{(n-1)!!}.
\]

It remains to maximize \(E_n(k)\) over \(1\le k\le n-1\). Odd \(k\) gives zero, so only even \(k\) matters. Also \(E_n(k)=E_n(n-k)\), so it is enough to consider even \(k\) with \(2\le k\le n/2\). For such \(k\), compute
\[
\frac{E_n(k+2)}{E_n(k)}
=\frac{(k+1)!!(n-k-3)!!}{(k-1)!!(n-k-1)!!}
=\frac{k+1}{n-k-1}.
\]
When \(2\le k\le n/2-1\), we have \(k+1\le n-k-1\), so this ratio is at most \(1\). Thus \(E_n(k)\) is nonincreasing as \(k\) moves from \(2\) toward the middle. By symmetry, the maximum is attained at \(k=2\) or \(k=n-2\). Finally,
\[
E_n(2)=\frac{1!!(n-3)!!}{(n-1)!!}
=\frac{1}{n-1}.
\]
Therefore \(\max_{1\le k\le n-1}E_n(k)=1/(n-1)\).
\end{proof}

\begin{lemma}[Cut-avoidance probability for odd \(n\)]\label{lem:odd-count}
Let \(n\ge3\) be odd, and let \(Q\) be a uniformly random near-perfect matching of \([n]\), meaning one singleton and \((n-1)/2\) pairs. Fix a subset \(A\subset[n]\) with \(|A|=k\). The probability that every pair of \(Q\) is contained entirely in \(A\) or entirely in \([n]\setminus A\) is
\[
O_n(k)=
\begin{cases}
\dfrac{k(k-2)!!(n-k-1)!!}{n(n-2)!!}, & k\textnormal{ odd},\\[12pt]
\dfrac{(n-k)(k-1)!!(n-k-2)!!}{n(n-2)!!}, & k\textnormal{ even}.
\end{cases}
\]
For nontrivial cuts, meaning \(1\le k\le n-1\), the largest possible value is
\[
\max_{1\le k\le n-1}O_n(k)=\frac1n.
\]
\end{lemma}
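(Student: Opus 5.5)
The plan mirrors the proof of Lemma~\ref{lem:even-count}: count the near-perfect matchings that avoid the cut, then maximize the ratio in \(k\). The total number of near-perfect matchings of \([n]\) is \(n(n-2)!!\). A near-perfect matching avoids the cut \((A,[n]\setminus A)\) exactly when its singleton lies on one side, the rest of that side is perfectly matched, and the other side is perfectly matched.

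\emph{Case \(k\) odd.} Here \(n-k\) is even. The singleton must lie in \(A\), since otherwise \(A\) would have to be perfectly matched with an odd number of vertices. There are \(k\) choices of singleton. Then \(A\) minus the singleton has \((k-2)!!\) perfect matchings and the complement has \((n-k-1)!!\). This gives the stated \(O_n(k)\).

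\emph{Case \(k\) even.} Now \(n-k\) is odd, so the singleton must lie in the complement. This gives \((n-k)(k-1)!!(n-k-2)!!\) avoiding matchings. The convention \((-1)!!=1\) covers the edge cases \(k=1\) and \(n-k=1\).

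\emph{Symmetry.} Replacing \(k\) by \(n-k\) swaps the parity cases and the two formulas, so \(O_n(k)=O_n(n-k)\). It therefore suffices to compare values for \(k\) on one side of the middle.

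\emph{Monotonicity.} Write \(O_n(k)\) with \(k\) odd as
\[
\frac{k!!\,(n-k-1)!!}{n!!},
\]
and with \(k\) even as
\[
\frac{(k-1)!!\,(n-k)!!}{n!!}.
\]
Within the odd case, stepping from \(k\) to \(k+2\) multiplies the value by \((k+2)/(n-k-1)\). This is at most \(1\) exactly when \(2k+3\le n\). Symmetrically, the even-\(k\) values are decreasing toward the middle. So on each parity class the maximum sits at the extremes.

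\emph{Evaluating the extremes.} The candidates are \(k=1\) (odd) and \(k=n-1\) (even), which agree by symmetry, plus \(k=2\) and \(k=n-2\).
\begin{itemize}
\item At \(k=1\): \(O_n(1)=1\cdot(n-2)!!/n!!=1/n\). This matches the intuition that the singleton must be the unique vertex of \(A\).
\item At \(k=2\): \(O_n(2)=1\cdot(n-2)!!/n!!=1/n\) as well.
\end{itemize}
Hence \(\max_{1\le k\le n-1} O_n(k)=1/n\).

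\emph{Main obstacle.} The delicate step is checking that the ratio inequality covers every \(k\) up to the middle. One must also compare the two parity classes, which is done by evaluating each class at its own extreme. For \(n=3\) the range is only \(k\in\{1,2\}\), and both values equal \(1/3\) directly.
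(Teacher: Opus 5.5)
Your proposal is correct and follows essentially the same route as the paper: the same count by where the singleton sits, the same symmetry $O_n(k)=O_n(n-k)$, and a ratio test within each parity class, with both classes peaking at the value $1/n$. Your explicit pairing of the extremes, $O_n(1)=O_n(n-1)$ and $O_n(2)=O_n(n-2)$, correctly accounts for the fact that $k\mapsto n-k$ swaps the two parity classes, which the paper's parenthetical symmetry remarks state somewhat loosely.
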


\begin{proof}
The total number of near-perfect matchings is
$
n(n-2)!!,
$
because there are \(n\) choices for the singleton and then \((n-2)!!\) perfect matchings of the remaining \(n-1\) vertices.

A near-perfect matching avoids the cut if, after removing the singleton, the remaining vertices inside \(A\) can be paired among themselves and the remaining vertices outside \(A\) can be paired among themselves.

First suppose \(k\) is odd. Then the singleton must lie in \(A\). There are \(k\) choices for the singleton, \((k-2)!!\) matchings of the remaining vertices in \(A\), and \((n-k-1)!!\) matchings of the complement. Therefore
\[
O_n(k)=\frac{k(k-2)!!(n-k-1)!!}{n(n-2)!!}
\]
when \(k\) is odd.

Now suppose \(k\) is even. Then the singleton must lie in the complement \([n]\setminus A\). There are \(n-k\) choices for the singleton, \((k-1)!!\) matchings of \(A\), and \((n-k-2)!!\) matchings of the remaining complement. Therefore
\[
O_n(k)=\frac{(n-k)(k-1)!!(n-k-2)!!}{n(n-2)!!}
\]
when \(k\) is even.

It remains to maximize \(O_n(k)\) over \(1\le k\le n-1\). We use symmetry \(O_n(k)=O_n(n-k)\).
For even \(k\), compare consecutive even values:
\[
\frac{O_n(k+2)}{O_n(k)}
=\frac{(n-k-2)(k+1)!!(n-k-4)!!}{(n-k)(k-1)!!(n-k-2)!!}
=\frac{k+1}{n-k}.
\]
As \(k\) moves from \(2\) toward the middle, this ratio is at most \(1\) until the midpoint is reached. Hence among even nontrivial \(k\), the maximum is at \(k=2\) (by symmetry, the same value is attained at \(k=n-2\)). Its value is
\[
O_n(2)=\frac{(n-2)1!!(n-4)!!}{n(n-2)!!}
=\frac{1}{n}.
\]
For odd \(k\), compare consecutive odd values:
\[
\frac{O_n(k+2)}{O_n(k)}
=\frac{(k+2)k!!(n-k-3)!!}{k(k-2)!!(n-k-1)!!}
=\frac{k+2}{n-k-1}.
\]
As \(k\) moves from \(1\) toward the middle, this ratio is at most \(1\) until the midpoint is reached. Hence among odd nontrivial \(k\), the maximum is at \(k=1\) (by symmetry, the same value is attained at \(k=n-1\)). Its value is
\[
O_n(1)=\frac{1\cdot(-1)!!(n-2)!!}{n(n-2)!!}
=\frac1n.
\]
Both parity classes have maximum \(1/n\), so the overall nontrivial maximum is \(1/n\).
\end{proof}

\begin{theorem}[BM-Cert spectral gap]\label{thm:spectral-gap}
For \(n\ge3\), the BM-Cert verification operator \(\Omega_{\BM}\) has perfect completeness for \(\ket{G_n}\). Its second eigenvalue is
\[
\beta_{\BM}(n)=
\begin{cases}
\dfrac{1}{n-1}, & n\ge4\textnormal{ even},\\[8pt]
\dfrac{1}{n}, & n\ge3\textnormal{ odd},
\end{cases}
\]
and its spectral gap is
\[
\nu_{\BM}(n)=1-\beta_{\BM}(n)=
\begin{cases}
1-\dfrac{1}{n-1}, & n\ge4\textnormal{ even},\\[8pt]
1-\dfrac{1}{n}, & n\ge3\textnormal{ odd}.
\end{cases}
\]
\end{theorem}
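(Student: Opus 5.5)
The proof assembles the lemmas already established; only a small bookkeeping step about the target eigenvector is new. Perfect completeness follows from Lemma~\ref{lem:perfect-completeness}. Every \(\Pi_Q\) fixes \(\ket{G_n}\), so the uniform average \(\Omega_{\BM}=\mathbb E_Q\Pi_Q\) does too. For the second eigenvalue, I will use the GHZ basis of Lemma~\ref{lem:ghz-basis-eigen}, choosing \(0^n\) as the representative of \(\{0^n,1^n\}\) so that \(\ket{G_n}=\ket{0^n,+}\). Lemma~\ref{lem:cut-prob} shows that \(\Omega_{\BM}\) is diagonal in this orthonormal basis. The projector \(\id-\proj{G_n}\) is then the sum of the projectors onto all basis vectors other than \(\ket{0^n,+}\). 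Hence \((\id-\proj{G_n})\Omega_{\BM}(\id-\proj{G_n})\) is diagonal, with entries \(\lambda(a,\tau)\) for \((a,\tau)\neq(0^n,+)\). Since these are nonnegative probabilities, \(\beta_{\BM}(n)\) equals their maximum.

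Next I split by \(\tau\). For \(\tau=-1\), Lemma~\ref{lem:cut-prob} gives \(\lambda(a,-)=0\), including for \(a=0^n\). For \(\tau=+1\) with \(a\in R\setminus\{0^n\}\), the cut \(A(a)\) is nontrivial: \(a\neq 0^n\) excludes \(k=0\), and \(a\neq 1^n\) (which is not in \(R\)) excludes \(k=n\). So \(1\le \wt(a)\le n-1\), and \(\lambda(a,+)\) is the cut-avoidance probability \(E_n(\wt(a))\) or \(O_n(\wt(a))\), depending on parity. Conversely, every \(k\in\{1,\dots,n-1\}\) is realized. Take any \(a\) of weight \(k\); one of \(a,\bar a\) lies in \(R\), its weight is \(k\) or \(n-k\), and \(E_n,O_n\) are symmetric under \(k\mapsto n-k\), so the value \(E_n(k)\) or \(O_n(k)\) is attained either way. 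Therefore \(\beta_{\BM}(n)=\max_{1\le k\le n-1}E_n(k)=1/(n-1)\) for even \(n\ge4\) by Lemma~\ref{lem:even-count}, and \(\beta_{\BM}(n)=\max_{1\le k\le n-1}O_n(k)=1/n\) for odd \(n\ge3\) by Lemma~\ref{lem:odd-count}. The spectral gap formula follows from \(\nu=1-\beta\).

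The only delicate step is this bookkeeping: confirming that the representative set \(R\) excludes exactly the trivial cuts, and that the maximizing weights (\(k=2\) for even \(n\), \(k=1\) for odd \(n\)) actually occur among the excluded-target eigenvalues. This ensures the maximum is attained, not merely bounded above. The symmetry \(k\mapsto n-k\) handles it. Nothing else requires new computation.
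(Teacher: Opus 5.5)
Your proposal is correct and follows the paper's proof essentially step for step: perfect completeness from Lemma~\ref{lem:perfect-completeness}, diagonalization in the GHZ basis via Lemma~\ref{lem:cut-prob} with \(\lambda(a,-)=0\), and the nontrivial-cut maxima \(1/(n-1)\) and \(1/n\) from Lemmas~\ref{lem:even-count} and~\ref{lem:odd-count}. Your explicit check that every weight \(1\le k\le n-1\) is realized by some representative in \(R\), using the \(k\mapsto n-k\) symmetry, is left implicit in the paper; it correctly shows that the maximum is attained, not merely an upper bound.
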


\begin{proof}
Perfect completeness follows from Lemma~\ref{lem:perfect-completeness}. By Lemma~\ref{lem:cut-prob}, the GHZ-basis vectors \(\ket{a,\tau}\) diagonalize \(\Omega_{\BM}\). The target vector \(\ket{0^n,+}=\ket{G_n}\) has eigenvalue \(1\), because the empty cut is always avoided. The vector \(\ket{0^n,-}\) has eigenvalue \(0\), because \(\tau=-1\).

All other basis vectors correspond to nontrivial cuts: \(1\le |A(a)|\le n-1\). For \(\tau=-1\), Lemma~\ref{lem:cut-prob} gives eigenvalue \(0\). For \(\tau=+1\), Lemma~\ref{lem:cut-prob} says that the eigenvalue is the cut-avoidance probability. If \(n\) is even, Lemma~\ref{lem:even-count} shows that the largest nontrivial cut-avoidance probability is \(1/(n-1)\). If \(n\) is odd, Lemma~\ref{lem:odd-count} shows that it is \(1/n\). Therefore the largest eigenvalue on the subspace orthogonal to \(\ket{G_n}\) is the claimed \(\beta_{\BM}(n)\).
\end{proof}

\begin{corollary}[Copy complexity of BM-Cert]
\label{cor:sample-complexity}
Let \(\eps,\delta\in(0,1)\). Suppose BM-Cert is run for
\[
N_{\BM}(n,\eps,\delta)=
\left\lceil
\frac{\ln(1/\delta)}{-\ln(1-\nu_{\BM}(n)\eps)}
\right\rceil
\]
independent rounds, accepting only if all rounds pass. If the average infidelity of the tested copies with \(\ket{G_n}\) is at least \(\eps\), then the probability of accepting is at most \(\delta\). If every tested copy is exactly \(\ket{G_n}\), then the probability of accepting is \(1\).
\end{corollary}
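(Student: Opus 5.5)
The corollary follows by combining three earlier results. Theorem~\ref{thm:spectral-gap} gives perfect completeness and the spectral gap \(\nu_{\BM}(n)\). Lemma~\ref{lem:many-copy} converts this gap into a soundness bound. Lemma~\ref{lem:perfect-completeness} handles the honest case. No new spectral analysis is needed.

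\textbf{Completeness.} Suppose every tested copy is exactly \(\ket{G_n}\). For each round, Lemma~\ref{lem:pass-effect} says the pass effect with sampled matching \(Q_t\) is the projector \(\Pi_{Q_t}\). Lemma~\ref{lem:perfect-completeness} gives \(\Pi_{Q_t}\ket{G_n}=\ket{G_n}\), so the pass probability is \(\bra{G_n}\Pi_{Q_t}\ket{G_n}=1\) for every realization of \(Q_t\). Averaging over \(Q_t\) still gives probability \(1\). Rounds use independent copies and independent matchings, so the probability that all \(N_{\BM}\) rounds pass is the product of these probabilities, which is \(1\).

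\textbf{Soundness.} Each round draws a fresh uniform \(Q_t\) independently of everything else. For a copy in state \(\rho_j\), the round therefore passes with probability \(\Tr(\Omega_{\BM}\rho_j)\), and all rounds use the same verification operator \(\Omega_{\BM}\). By Theorem~\ref{thm:spectral-gap}, \(\Omega_{\BM}\) satisfies \(\Omega_{\BM}\ket{G_n}=\ket{G_n}\) and has spectral gap \(\nu_{\BM}(n)\). These are exactly the hypotheses of Lemma~\ref{lem:many-copy}.

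That lemma then says: if the average infidelity is at least \(\eps\), the probability that all \(N\) rounds pass is at most \((1-\nu_{\BM}(n)\eps)^N\). Since \(0<\nu_{\BM}(n)\eps<1\), we have \(-\ln(1-\nu_{\BM}(n)\eps)>0\). The choice \(N=N_{\BM}(n,\eps,\delta)\) satisfies \(N\cdot(-\ln(1-\nu_{\BM}(n)\eps))\ge\ln(1/\delta)\), and exponentiating gives \((1-\nu_{\BM}(n)\eps)^N\le\delta\).

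\textbf{Main care point.} The only delicate step is checking that the i.i.d.\ random choice of matchings is correctly modeled by the averaged operator \(\Omega_{\BM}\), as Lemma~\ref{lem:many-copy} requires. This holds because, conditioned on the prepared copies, each round's acceptance depends only on its own copy and its own independent matching. The theorem is stated for \(n\ge3\), which is exactly the range of Theorem~\ref{thm:spectral-gap}. In that range \(\nu_{\BM}(n)>0\), so \(N_{\BM}\) is finite.
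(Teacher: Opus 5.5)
Your proposal is correct and follows the paper's implicit argument (the paper states this corollary without a written proof). Soundness comes from feeding perfect completeness and the gap \(\nu_{\BM}(n)\) from Theorem~\ref{thm:spectral-gap} into Lemma~\ref{lem:many-copy}, and completeness from Lemma~\ref{lem:perfect-completeness}; your check that the ceiling choice of \(N\) gives \((1-\nu_{\BM}(n)\eps)^N\le\delta\) fills in the routine step the paper leaves implicit.
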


\section{Optimality inside the Bell-matching family and comparison with Pauli verification}

We show that the spectral gap in Theorem~\ref{thm:spectral-gap} cannot be improved without leaving the stated measurement family or allowing completeness error.

\begin{definition}
\label{def:family}
A perfect-completeness Bell-matching strategy is any randomized single-copy strategy of the following kind.
\begin{enumerate}[label=(\roman*), leftmargin=2em]
\item It chooses any quasi-perfect matching \(Q\) according to any probability distribution.
\item It performs Bell-basis measurements on all pairs in \(M(Q)\), and, when \(n\) is odd, an \(X\)-basis measurement on the singleton.
\item It accepts or rejects using any classical randomized or deterministic function of the outcomes.
\item The target \(\ket{G_n}\) is accepted with probability \(1\).
\end{enumerate}
\end{definition}

\begin{lemma}
\label{lem:forced-outcomes}
Fix a quasi-perfect matching \(Q\). In the measurement associated with \(Q\), every outcome accepted by BM-Cert occurs with positive probability when the input state is \(\ket{G_n}\). Therefore, any perfect-completeness postprocessing rule for this fixed measurement must accept all BM-Cert pass outcomes. Consequently, its pass effect \(E_Q\) satisfies the operator inequality
\[
E_Q\ge \Pi_Q.
\]
\end{lemma}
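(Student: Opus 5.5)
We have to show three things: every BM-Cert pass outcome occurs with positive probability on \(\ket{G_n}\), perfect completeness forces acceptance of those outcomes, and hence \(E_Q\ge\Pi_Q\). We take them in that order.

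\textbf{Outcome distribution on the target.} Fix \(Q\) with pairs \(e_1,\ldots,e_m\). For odd \(n\) let \(s\) be the singleton. Write an outcome as \(o=(z_{e_r},x_{e_r})_r\), together with \(x_s\) when \(n\) is odd. The measurement is projective, with outcome projectors
\[
\Pi_o=\prod_r\frac{\id+z_{e_r}Z_{e_r}}{2}\,\frac{\id+x_{e_r}X_{e_r}}{2}\,\Bigl(\frac{\id+x_sX_s}{2}\Bigr),
\]
where the last factor appears only when \(n\) is odd. I will compute \(\bra{G_n}\Pi_o\ket{G_n}\) by expanding the product into a sum of commuting Pauli strings with coefficients \(2^{-n}\) times products of signs.

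The expectation of a Pauli string \(P\) in \(\ket{G_n}\) is the sign \(\pm1\) if \(\pm P\) lies in the GHZ stabilizer group, and \(0\) otherwise. The strings here are products of \(Z_{e_r}\), \(X_{e_r}\), and \(X_s\). Such a product is in the stabilizer group only in two situations. The first is when it is a product of \(Z_{e_r}\)'s alone, which has expectation \(1\). The second is when it contains every \(X\) factor, i.e.\ it has the form \(X^{\otimes n}\) times a product of \(Z_{e_r}\)'s. A partial \(X\)-product flips only some bits, so it maps \(\ket{0^n}\) outside the span of \(\ket{0^n},\ket{1^n}\) and its expectation vanishes. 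In the second situation, the anticommutation of \(Z_{e_r}\) with \(X_{e_r}\) does not matter, since \(Z_{e_r}X_{e_r}=-Y_iY_j\) and the full product still stabilizes the target up to a sign.

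The cleanest route is to carry out this calculation directly on the Bell decomposition of \(\ket{G_n}\). For even \(n\),
\[
\ket{G_n}=\frac{1}{\sqrt2}\Bigl(\bigotimes_r\ket{00}_{e_r}+\bigotimes_r\ket{11}_{e_r}\Bigr).
\]
Substituting \(\ket{00}=(\ket{\Phi^+}+\ket{\Phi^-})/\sqrt2\) and \(\ket{11}=(\ket{\Phi^+}-\ket{\Phi^-})/\sqrt2\) gives
\[
\ket{G_n}=2^{-m/2}\sqrt2\sum_{\text{even number of }\Phi^-}\bigotimes_r\ket{\Phi^{\pm}}_{e_r}.
\]
This is a uniform superposition over all patterns with \(z_{e_r}=+1\) and an even number of minus signs, i.e.\ exactly those with \(\prod_r x_{e_r}=+1\). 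Each such pattern therefore has probability \(2^{-(m-1)}>0\). For odd \(n\), the singleton is included by writing \(\ket0=(\ket++\ket-)/\sqrt2\) and \(\ket1=(\ket+-\ket-)/\sqrt2\). The same argument gives a uniform distribution over patterns with all \(z=+1\) and total \(X\)-parity \(+1\), with probability \(2^{-m}>0\) each. By Lemma~\ref{lem:pass-effect}, these are exactly the BM-Cert pass outcomes.

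\textbf{Forced acceptance.} A general postprocessing rule accepts outcome \(o\) with some probability \(f(o)\in[0,1]\), so its pass effect is \(E_Q=\sum_o f(o)\Pi_o\). Perfect completeness requires
\[
\sum_o f(o)\,\Pr[o\mid G_n]=1=\sum_o\Pr[o\mid G_n].
\]
Since \(f\le1\), this forces \(f(o)=1\) whenever \(\Pr[o\mid G_n]>0\), and by the first step that includes every BM-Cert pass outcome. Hence
\[
E_Q\ge\sum_{o\text{ pass}}\Pi_o=\Pi_Q,
\]
because the remaining terms \(f(o)\Pi_o\) are positive semidefinite.

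The main obstacle is the first step, keeping the signs in the Bell expansion straight. I would verify it on \(n=2\) and \(n=3\) before stating the general parity bookkeeping.
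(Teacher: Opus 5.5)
Your proposal is correct and follows essentially the same route as the paper. You expand \(\ket{G_n}\) in the Bell basis of the matched pairs (and the \(X\) basis of the singleton for odd \(n\)) to show that each pass outcome occurs with probability \(2^{1-m}\) or \(2^{-m}\); you then write \(E_Q=\sum_o f(o)\Pi_o\) with \(f(o)=1\) forced on every positive-probability outcome, which gives \(E_Q\ge\Pi_Q\). The preliminary Pauli-string discussion is unnecessary and can be dropped.
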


\begin{proof}
First let \(n=2m\) be even. For each pair, write
\[
\ket{00}=\frac{\ket{\Phi^+}+\ket{\Phi^-}}{\sqrt2},
\qquad
\ket{11}=\frac{\ket{\Phi^+}-\ket{\Phi^-}}{\sqrt2}.
\]
For a sign vector \(r=(r_1,\ldots,r_m)\in\{+1,-1\}^m\), let \(\ket{\Phi^{r_j}}\) mean \(\ket{\Phi^+}\) if \(r_j=+1\) and \(\ket{\Phi^-}\) if \(r_j=-1\). Expanding \(\ket{0^n}\) and \(\ket{1^n}\) across the matched pairs gives
\[
\ket{0^n}=2^{-m/2}\sum_{r\in\{\pm1\}^m}\bigotimes_{j=1}^m\ket{\Phi^{r_j}},
\]
\[
\ket{1^n}=2^{-m/2}\sum_{r\in\{\pm1\}^m}\left(\prod_{j=1}^m r_j\right)\bigotimes_{j=1}^m\ket{\Phi^{r_j}}.
\]
Therefore
\[
\ket{G_n}
=\frac{\ket{0^n}+\ket{1^n}}{\sqrt2}
=2^{(1-m)/2}
\sum_{\substack{r\in\{\pm1\}^m\\ \prod_jr_j=+1}}
\bigotimes_{j=1}^m\ket{\Phi^{r_j}}.
\]
This shows that exactly the outcomes with all \(Z_iZ_j\) signs equal to \(+1\) and total \(X_iX_j\) sign product \(+1\) occur, and each occurs with probability \(2^{1-m}>0\). These are the BM-Cert pass outcomes.

Now let \(n=2m+1\) be odd, and let the singleton be \(s\). Use the additional identities
\[
\ket0=\frac{\ket{+}+\ket{-}}{\sqrt2},
\qquad
\ket1=\frac{\ket{+}-\ket{-}}{\sqrt2},
\]
where \(X\ket{\pm}=\pm\ket{\pm}\). The same expansion gives
\[
\ket{G_n}
=2^{-m/2}
\sum_{\substack{r_0,r_1,\ldots,r_m\in\{\pm1\}\\ r_0\prod_{j=1}^m r_j=+1}}
\ket{r_0}_s\otimes\bigotimes_{j=1}^m\ket{\Phi^{r_j}},
\]
where \(\ket{r_0}_s\) means \(\ket+\) if \(r_0=+1\) and \(\ket-\) if \(r_0=-1\). Thus exactly the outcomes with all pairwise \(Z_iZ_j\) signs \(+1\) and total \(X\)-parity \(+1\) occur, each with probability \(2^{-m}>0\). Again these are the BM-Cert pass outcomes.

If a strategy used this measurement with positive probability and its postprocessing rule rejected any outcome that occurs with positive probability on \(\ket{G_n}\), then the conditional acceptance probability for \(\ket{G_n}\) under this measurement would be less than \(1\), contradicting perfect completeness of the strategy. Therefore all BM-Cert pass outcomes must be accepted. The pass effect is a sum of orthogonal outcome projectors. Since it includes every outcome projector appearing in \(\Pi_Q\), it satisfies \(E_Q\ge\Pi_Q\).
\end{proof}

\begin{theorem}
\label{thm:optimal-family}
Among all perfect-completeness Bell-matching strategies, the minimum possible second eigenvalue is
\[
\beta_{\min}(n)=
\begin{cases}
\dfrac{1}{n-1}, & n\ge4\textnormal{ even},\\[8pt]
\dfrac{1}{n}, & n\ge3\textnormal{ odd}.
\end{cases}
\]
Equivalently, the maximum possible spectral gap in this family is \(\nu_{\BM}(n)\) from Theorem~\ref{thm:spectral-gap}. Uniform BM-Cert attains this optimum.
\end{theorem}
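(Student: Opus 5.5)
Achievability is already settled by Theorem~\ref{thm:spectral-gap}, since uniform BM-Cert is a perfect-completeness Bell-matching strategy with second eigenvalue \(\beta_{\BM}(n)\). What remains is the lower bound: every strategy in Definition~\ref{def:family} has second eigenvalue at least \(1/(n-1)\) for even \(n\) and \(1/n\) for odd \(n\). Let a strategy choose \(Q\) with probability \(p_Q\) and have pass effect \(E_Q\), so \(\Omega=\sum_Q p_Q E_Q\). By Lemma~\ref{lem:forced-outcomes}, \(E_Q\ge\Pi_Q\) for every \(Q\) used with positive probability. This gives \(\Omega\ge\Omega_p:=\sum_Q p_Q\Pi_Q\) in operator order. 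Both operators fix \(\ket{G_n}\), so compressing by \(\id-\proj{G_n}\) preserves the inequality. Hence \(\beta(\Omega)\ge\beta(\Omega_p)\), and it suffices to bound \(\beta(\Omega_p)\) for an arbitrary distribution \(p\) on matchings.

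Each \(\Pi_Q\) is diagonal in the GHZ basis of Lemma~\ref{lem:ghz-basis-eigen}, so \(\Omega_p\) is diagonal too. By the argument of Lemma~\ref{lem:cut-prob}, the eigenvalue of \(\Omega_p\) on \(\ket{a,+}\) is \(\Pr_{Q\sim p}[Q \text{ avoids the cut } A(a)]\). I would then use an averaging argument over a suitable family of test vectors orthogonal to \(\ket{G_n}\), because the largest eigenvalue is at least the average over any such family.

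For even \(n\), take the vectors \(\ket{a,+}\) with \(A(a)=\{i,j\}\), one for each of the \(\binom n2\) pairs. These are distinct basis vectors for \(n\ge4\), since \(\{i,j\}\) and its complement are different sets. Every perfect matching contains exactly \(n/2\) pairs, and it avoids the cut \(\{i,j\}\) precisely when \(\{i,j\}\) is one of its pairs. So for each \(Q\) the number of avoided cuts among these test vectors is exactly \(n/2\). Averaging over the \(\binom n2\) test vectors gives a mean eigenvalue
\[
\frac{n/2}{\binom n2}=\frac{1}{n-1},
\]
independently of \(p\). Hence \(\beta(\Omega_p)\ge 1/(n-1)\).

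For odd \(n\), take instead the \(n\) vectors with \(A(a)=\{i\}\). A near-perfect matching avoids the cut \(\{i\}\) exactly when \(i\) is its singleton. Each \(Q\) therefore avoids exactly one of these \(n\) cuts, the mean eigenvalue is \(1/n\), and \(\beta(\Omega_p)\ge 1/n\). These are distinct basis vectors orthogonal to \(\ket{G_n}\) because \(n\ge3\).

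The delicate step is the first reduction, \(\beta(\Omega)\ge\beta(\Omega_p)\). It uses two facts that should be stated explicitly: \(E_Q\ge\Pi_Q\) holds even under randomized postprocessing, since accepted weights on the outcomes \(\Pi_Q\) are forced to \(1\) and all other outcome weights are nonnegative; and the operator norm is monotone on positive semidefinite operators. The counting step itself is mechanical once the right test family (pair cuts for even \(n\), singleton cuts for odd \(n\)) is chosen.
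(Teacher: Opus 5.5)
Your proof is correct and follows essentially the paper's route. You use $E_Q\ge\Pi_Q$ from Lemma~\ref{lem:forced-outcomes} and average over the vectors $\ket{a,+}$ with $A(a)=\{i,j\}$ (even $n$) or $A(a)=\{i\}$ (odd $n$); your detour through $\Omega_p$ just repackages the paper's direct bound $\bra{a,+}\Omega\ket{a,+}\ge\Pr_Q[\{i,j\}\in M(Q)]$.

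One small slip: for $n=4$ the pair vectors are not distinct (e.g.\ $A=\{1,2\}$ and $A=\{3,4\}$ give the same $\ket{a,+}$). This does not matter, because the averaging bound never needs distinctness: the largest eigenvalue on the complement of $\ket{G_n}$ dominates the average over any multiset of unit vectors orthogonal to $\ket{G_n}$.
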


\begin{proof}
Theorem~\ref{thm:spectral-gap} proves that uniform BM-Cert attains the second eigenvalue \(\beta_{\min}(n)\). It remains to prove that no perfect-completeness Bell-matching strategy can have a smaller second eigenvalue.

First let \(n\ge4\) be even. Consider an arbitrary strategy in the family. It chooses a perfect matching \(Q\) from some distribution and then uses a perfect-completeness postprocessing rule. Let \(\Omega\) be its verification operator. For any unordered pair \(\{i,j\}\), let \(a^{ij}\in\{0,1\}^n\) be the bit string with ones only at positions \(i,j\). The vector \(\ket{a^{ij},+}\) is orthogonal to \(\ket{G_n}\).

For a fixed matching \(Q\), the minimal BM-Cert projector \(\Pi_Q\) accepts \(
\ket{a^{ij},+}\) if and only if \(\{i,j\}\in M(Q)\). Indeed, if \(i\) and \(j\) are paired together, then that pair has equal \(a^{ij}\)-bits, both equal to \(1\), and every other pair has equal \(a^{ij}\)-bits, both equal to \(0\). If \(i\) and \(j\) are not paired together, then the qubit \(i\) is paired with a zero bit or the qubit \(j\) is paired with a zero bit, so some measured \(Z\)-parity is \(-1\). Thus
\[
\bra{a^{ij},+}\Pi_Q\ket{a^{ij},+}=\mathbf 1\{\{i,j\}\in M(Q)\}.
\]
By Lemma~\ref{lem:forced-outcomes}, the actual pass effect for matching \(Q\) is at least \(\Pi_Q\). Therefore
\[
\bra{a^{ij},+}\Omega\ket{a^{ij},+}
\ge \Pr_Q[\{i,j\}\in M(Q)].
\]
Average this inequality over all \(\binom n2\) unordered pairs:
\[
\frac{1}{\binom n2}\sum_{1\le i<j\le n}
\bra{a^{ij},+}\Omega\ket{a^{ij},+}
\ge
\frac{1}{\binom n2}\mathbb E_Q |M(Q)|.
\]
Every perfect matching has \(|M(Q)|=n/2\), so
\[
\frac{1}{\binom n2}\mathbb E_Q |M(Q)|
=\frac{n/2}{n(n-1)/2}
=\frac{1}{n-1}.
\]
Hence at least one pair \(\{i,j\}\) satisfies
\[
\bra{a^{ij},+}\Omega\ket{a^{ij},+}
\ge\frac{1}{n-1}.
\]
Since \(\ket{a^{ij},+}\) is a unit vector orthogonal to \(\ket{G_n}\), the largest eigenvalue of \(\Omega\) on the orthogonal subspace is at least this expectation value. Thus \(\beta(\Omega)\ge1/(n-1)\).

Now let \(n\ge3\) be odd. For each vertex \(i\), let \(a^i\in\{0,1\}^n\) be the bit string with a single one at position \(i\). The vector \(\ket{a^i,+}\) is orthogonal to \(\ket{G_n}\). For a fixed near-perfect matching \(Q\), the minimal projector \(\Pi_Q\) accepts \(\ket{a^i,+}\) if and only if \(i\) is the singleton. If \(i\) is the singleton, then all paired vertices have zero \(a^i\)-bits, so every pairwise \(Z\)-parity is \(+1\), and the global \(X\)-parity condition is satisfied because the state has \(\tau=+1\). If \(i\) is paired with any other vertex, that pair crosses the cut and has \(Z\)-parity \(-1\). Hence
\[
\bra{a^i,+}\Pi_Q\ket{a^i,+}=\mathbf 1\{s(Q)=i\}.
\]
Again Lemma~\ref{lem:forced-outcomes} implies
\[
\bra{a^i,+}\Omega\ket{a^i,+}
\ge \Pr_Q[s(Q)=i].
\]
Averaging over \(i\in[n]\),
\[
\frac1n\sum_{i=1}^n \bra{a^i,+}\Omega\ket{a^i,+}
\ge\frac1n\sum_{i=1}^n\Pr_Q[s(Q)=i]
=\frac1n,
\]
because every near-perfect matching has one singleton. Therefore some \(i\) has expectation at least \(1/n\), and since \(\ket{a^i,+}\perp\ket{G_n}\), we get \(\beta(\Omega)\ge1/n\).
\end{proof}

\begin{table}[t]
\caption{\justifying Comparison of the spectral gaps of BM-Cert with standard verification benchmarks for the \(n\)-qubit GHZ state.}
\label{tab:ghz-comparison}
\centering
\setlength{\tabcolsep}{12pt}
\begin{tabular}{@{}lll@{}}
\toprule
Measurement model & Second eigenvalue \(\beta\) & Spectral gap \(\nu=1-\beta\) \\
\midrule
Unrestricted projector & \(0\) & \(1\) \\
Optimal local Pauli \cite{li2020optimal} & \(1/3\) & \(2/3\) \\
BM-Cert, even \(n\ge4\) & \(1/(n-1)\) & \(1-1/(n-1)\) \\
BM-Cert, odd \(n\ge3\) & \(1/n\) & \(1-1/n\) \\
\bottomrule
\end{tabular}
\end{table}

The GHZ state has an optimal local Pauli verification strategy with spectral gap \(2/3\), as shown in the GHZ-specific optimal verification work of Li et al. \cite{li2020optimal}. The same value is also the upper bound for separable measurements on entangled stabilizer states in the Pauli verification analysis of Dangniam et al. \cite{dangniam2020optimal}. Table~\ref{tab:ghz-comparison} compares this benchmark with BM-Cert.

\noindent\textbf{Copy-complexity improvement over local Pauli GHZ verification.}
For every odd \(n\ge5\) and every even \(n\ge6\), BM-Cert has strictly larger spectral gap than the optimal local Pauli GHZ verifier:
\[
\nu_{\BM}(n)>\frac23.
\]
Consequently, in the high-precision regime \(\eps\to0\) with fixed \(\delta\), the leading copy coefficient is strictly smaller for BM-Cert. More precisely,
\[
\frac{N_{\BM}(n,\eps,\delta)}{N_{\mathrm{Pauli}}(\eps,\delta)}
\longrightarrow
\begin{cases}
\dfrac{2(n-1)}{3(n-2)}, & n\ge6\text{ even},\\[10pt]
\dfrac{2n}{3(n-1)}, & n\ge5\text{ odd},
\end{cases}
\]
as \(\eps\to0\), where \(N_{\mathrm{Pauli}}(\eps,\delta)\) denotes the corresponding bound with spectral gap \(2/3\). To see this, for any fixed spectral gap \(\nu\), Lemma~\ref{lem:many-copy} gives
\[
N(\nu)=
\left\lceil
\frac{\ln(1/\delta)}{-\ln(1-\nu\eps)}
\right\rceil .
\]
Using
\[
-\ln(1-\nu\eps)=\nu\eps+O(\eps^2)
\]
as \(\eps\to0\), the unrounded bound has leading term
\[
\frac{\ln(1/\delta)}{\nu\eps}.
\]
Substituting \(\nu_{\BM}(n)=(n-2)/(n-1)\) for even \(n\) gives
\[
\frac{N_{\BM}(n,\eps,\delta)}{N_{\mathrm{Pauli}}(\eps,\delta)}
\longrightarrow
\frac{2/3}{(n-2)/(n-1)}
=
\frac{2(n-1)}{3(n-2)}.
\]
Substituting \(\nu_{\BM}(n)=(n-1)/n\) for odd \(n\) gives
\[
\frac{N_{\BM}(n,\eps,\delta)}{N_{\mathrm{Pauli}}(\eps,\delta)}
\longrightarrow
\frac{2/3}{(n-1)/n}
=
\frac{2n}{3(n-1)}.
\]
Both ratios tend to \(2/3\) as \(n\to\infty\).

\section{Limits of extending Algorithm~\ref{alg:bm-cert} to general stabilizer states}
\label{sec:counterexample-arbitrary-stabilizer}

It is natural to ask whether the near-projective verification developed for GHZ states can be extended to arbitrary stabilizer targets using only 2-local measurements. We show that this is impossible, even if one allows arbitrary disjoint 2-local POVMs.

Suppose one chooses a partition \(\mathcal B\) of \([n]\) into blocks of size \(1\) or \(2\).  On each block \(B\in\mathcal B\), one performs an arbitrary POVM \(\{M^B_a\}_a\).  The verifier then accepts with an arbitrary probability \(q(a_B:B\in\mathcal B)\in[0,1]\) depending on the classical outcome string.  The resulting pass effect has the form
\[
E=\sum_{(a_B)} q(a_B:B\in\mathcal B)\bigotimes_{B\in\mathcal B}M^B_{a_B},
\]
and satisfies \(0\le E\le\id\).  A general randomized strategy samples such a round \(r\) with probability \(p_r\), and its verification operator is
\[
\Omega=\sum_r p_rE_r.
\]

For \(n\ge3\), define the stabilizer state
\[
\ket{\Psi_n}:=\ket{G_3}_{123}\otimes\ket{0^{n-3}}_{4,\ldots,n}.
\]

\begin{lemma}
\label{lem:perfect-completeness-round-by-round}
Let \(\Omega=\sum_r p_rE_r\) be a randomized verification operator with \(0\le E_r\le\id\) for every \(r\).  If \(\Omega\) has perfect completeness for a target \(\ket\phi\), then every round used with positive probability has perfect completeness for \(\ket\phi\):
\[
E_r\ket\phi=\ket\phi
\qquad\textnormal{whenever }p_r>0.
\]
\end{lemma}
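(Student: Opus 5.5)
The plan is to use a convexity argument: the unit eigenvalue must saturate every term in the average. Suppose \(\Omega\ket\phi=\ket\phi\), so that \(\bra\phi\Omega\ket\phi=1\). Expanding the average gives
\[
1=\sum_r p_r\bra\phi E_r\ket\phi .
\]
Since \(0\le E_r\le\id\) and \(\ket\phi\) is a unit vector, each number \(\bra\phi E_r\ket\phi\) lies in \([0,1]\). The probabilities \(p_r\) are nonnegative and sum to \(1\). A convex combination of numbers in \([0,1]\) can equal \(1\) only if every term carrying positive weight equals \(1\). Hence \(\bra\phi E_r\ket\phi=1\) whenever \(p_r>0\).

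The second step upgrades this expectation statement to the vector identity \(E_r\ket\phi=\ket\phi\). Because \(\id-E_r\ge0\), it has a positive semidefinite square root. From \(\bra\phi(\id-E_r)\ket\phi=0\) we get \(\|(\id-E_r)^{1/2}\ket\phi\|^2=0\). Therefore \((\id-E_r)^{1/2}\ket\phi=0\), and applying \((\id-E_r)^{1/2}\) once more yields \((\id-E_r)\ket\phi=0\), which is exactly \(E_r\ket\phi=\ket\phi\).

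Neither step presents a real obstacle; the only point requiring care is the second one. There we must use positivity of \(\id-E_r\), not merely of \(E_r\), so that a vanishing expectation value forces the vector itself to lie in the kernel. If square roots are to be avoided, the Cauchy--Schwarz inequality for the positive semidefinite form \((u,v)\mapsto\bra u(\id-E_r)\ket v\) gives the same conclusion.
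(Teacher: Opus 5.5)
Your proposal is correct and follows the same route as the paper's proof: the convex-combination argument gives \(\bra\phi E_r\ket\phi=1\) whenever \(p_r>0\), and positivity of \(\id-E_r\) via its square root then gives \(E_r\ket\phi=\ket\phi\). You also spell out the step \((\id-E_r)^{1/2}\ket\phi=0\Rightarrow(\id-E_r)\ket\phi=0\), which the paper leaves implicit.
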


\begin{proof}
Perfect completeness gives
$
1=\bra\phi\Omega\ket\phi=\sum_r p_r\bra\phi E_r\ket\phi .
$
Since \(0\le E_r\le \id\), each term \(\bra\phi E_r\ket\phi\) is at most \(1\). Hence, if \(p_r>0\), then \(\bra\phi E_r\ket\phi=1\). Therefore
$
0=\bra\phi(\id-E_r)\ket\phi
=\bigl\|(\id-E_r)^{1/2}\ket\phi\bigr\|^2 .
$
Thus \((\id-E_r)^{1/2}\ket\phi=0\), and hence \(E_r\ket\phi=\ket\phi\).
\end{proof}

\begin{lemma}
\label{lem:compression-first-three}
Let \(E_r\) be the pass effect of a round whose measurements act on blocks of size at most 2 and that has perfect completeness for \(\ket{\Psi_n}\).
Define the compressed three-qubit operator
\[
F_r:=\bigl(\id_{123}\otimes\bra{0^{n-3}}\bigr)
      E_r
      \bigl(\id_{123}\otimes\ket{0^{n-3}}\bigr),
\]
with the evident convention when \(n=3\).  Then
\[
0\le F_r\le\id_{123},
\qquad
F_r\ket{G_3}=\ket{G_3}.
\]
Moreover, \(F_r\) is separable across at least one of the three bipartitions
\[
1|23,
\qquad
2|13,
\qquad
3|12.
\]
\end{lemma}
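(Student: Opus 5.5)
The proof has three parts: positivity and boundedness of \(F_r\), the fixed-point property, and separability; the last is the only step that needs an idea.

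\emph{Positivity and boundedness.} Write \(V:=\id_{123}\otimes\ket{0^{n-3}}\), an isometry from the three-qubit space into the \(n\)-qubit space. Then \(F_r=V^\dagger E_rV\). Conjugation by \(V^\dagger(\cdot)V\) preserves operator inequalities, so \(0\le E_r\le\id\) gives
\[
0\le F_r\le V^\dagger V=\id_{123}.
\]

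\emph{Fixed point.} Perfect completeness gives \(E_r\ket{\Psi_n}=\ket{\Psi_n}\), and \(\ket{\Psi_n}=V\ket{G_3}\). Hence
\[
F_r\ket{G_3}=V^\dagger E_rV\ket{G_3}=V^\dagger V\ket{G_3}=\ket{G_3}.
\]

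\emph{Separability.} Here we use the block structure. The partition \(\mathcal B\) has blocks of size at most \(2\), so no block contains all of \(\{1,2,3\}\). Every block meets \(\{1,2,3\}\) in at most two qubits.

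\textbf{Choosing the cut.} We first choose a qubit \(k\in\{1,2,3\}\) that shares no block with the other two. Some pair among \(\{1,2\},\{1,3\},\{2,3\}\) fails to share a block, and since blocks have size at most \(2\), at most one block can contain two of the three qubits.
\begin{itemize}
\item If some block contains two of them, say \(\{i,j\}\), we take \(k\) to be the third qubit.
\item Otherwise all three qubits lie in different blocks, and any \(k\) works.
\end{itemize}
In either case we group the blocks into two families: those containing \(k\), and all others. The family for \(k\) is a single block, which is either \(\{k\}\) or \(\{k,m\}\) with \(m\ge4\). Each effect then factors as
\[
\bigotimes_B M^B_{a_B}=M^{(k)}_{a}\otimes N_{b},
\]
where \(M^{(k)}_a\ge0\) acts on the block of \(k\) and \(N_b\ge0\) acts on the remaining qubits. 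The acceptance probabilities \(q\) are nonnegative, so
\[
E_r=\sum_{a,b}q(a,b)\,M^{(k)}_a\otimes N_b
\]
is a nonnegative combination of positive product operators across the bipartition (block of \(k\)) \(|\) (rest).

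\textbf{Compressing.} Because \(\ket{0^{n-3}}\) is a product over qubits \(4,\dots,n\), the compression splits across the same cut. The part on the block of \(k\) becomes
\[
\bigl(\id_k\otimes\bra{0}_m\bigr)M^{(k)}_a\bigl(\id_k\otimes\ket0_m\bigr),
\]
or just \(M^{(k)}_a\) if \(k\) is a singleton. The part on the other qubits is compressed by \(\ket0\) on the qubits in \(\{4,\dots,n\}\setminus\{m\}\). Each compressed factor is positive semidefinite, because compression of a positive operator by \(\bra0\cdots\ket0\) is positive. Hence \(F_r\) is a nonnegative sum of products of positive operators across \(k\,|\,\{1,2,3\}\setminus\{k\}\), which is separability across that bipartition.

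The case \(n=3\) is the same argument with no compression.

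The main subtlety is the bookkeeping in the separability step: checking that the compression onto the ancilla qubits respects the cut, including when \(k\) is paired with an ancilla qubit \(m\ge4\). This works because the state used to compress, \(\ket{0^{n-3}}\), is a product across every qubit.
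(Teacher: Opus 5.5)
Your proof is correct and follows essentially the same route as the paper. Both obtain positivity and the fixed-point property by compressing against the isometry \(V\), and both obtain separability by writing \(E_r\) as a nonnegative combination of block tensor products, compressing each term against the product state \(\ket{0^{n-3}}\), and noting that the round's fixed partition induces a one-versus-two bipartition of \(\{1,2,3\}\); the only cosmetic difference is that you select the isolated qubit \(k\) before compressing, whereas the paper compresses first and then reads off the induced partition.
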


\begin{proof}
The inequalities \(0\le F_r\le\id_{123}\) follow by taking the matrix element of \(0\le E_r\le\id\) against the spectator state \(\ket{0^{n-3}}\).  Since \(E_r\ket{\Psi_n}=\ket{\Psi_n}\), we have
\[
\begin{aligned}
F_r\ket{G_3}
&=\bigl(\id_{123}\otimes\bra{0^{n-3}}\bigr)
  E_r
  \bigl(\ket{G_3}_{123}\otimes\ket{0^{n-3}}_{4,\ldots,n}\bigr) \\
&=\bigl(\id_{123}\otimes\bra{0^{n-3}}\bigr)
  \bigl(\ket{G_3}_{123}\otimes\ket{0^{n-3}}_{4,\ldots,n}\bigr) \\
&=\ket{G_3}.
\end{aligned}
\]

For the fixed round \(r\), write its block partition as \(\mathcal B_r\).  Its pass effect is a positive linear combination of tensor products over the blocks:
\[
E_r=\sum_{(a_B)}q_r(a_B:B\in\mathcal B_r)
       \bigotimes_{B\in\mathcal B_r}M^{B,r}_{a_B}.
\]
Compress one tensor-product term against \(\ket{0^{n-3}}\).  A block \(B\) disjoint from \(\{1,2,3\}\) contributes a nonnegative scalar.  A block meeting \(\{1,2,3\}\) in one qubit contributes a positive operator on that one qubit.  A block meeting \(\{1,2,3\}\) in two qubits contributes a positive operator on that pair.  No block can meet \(\{1,2,3\}\) in all three qubits, because every block has size at most \(2\).

Thus each compressed tensor-product term is a tensor product of positive operators over an induced partition of \(\{1,2,3\}\) whose blocks have size at most \(2\).  Since the original block partition \(\mathcal B_r\) is fixed throughout the round, this induced partition is the same for all outcome strings in the sum.  Summing the compressed positive tensor-product terms preserves separability across that induced bipartition.  Hence \(F_r\) is separable across at least one of the bipartitions $1|23$, $2|13$, and $3|12$.
\end{proof}

\begin{lemma}
\label{lem:ghz3-trace-lower-bound}
Let \(F\) be a positive semidefinite operator on three qubits.  Suppose that \(F\) is separable across at least one of the bipartitions \(1|23\), \(2|13\), and \(3|12\).  If
$
\bra{G_3}F\ket{G_3}=1,
$
then
$
\Tr(F)\ge2.
$
\end{lemma}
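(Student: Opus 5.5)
The plan is to use the positive partial transpose (PPT) criterion. Separability across a bipartition implies that the partial transpose on one side is positive semidefinite. I then combine this with a norm bound on the partially transposed GHZ projector. By the permutation symmetry of \(\ket{G_3}\) under relabelling the qubits, it suffices to treat the bipartition \(1|23\); the cases \(2|13\) and \(3|12\) follow by relabelling.

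\emph{Step 1 (positivity of the partial transpose).} Write the separable decomposition as
\[
F=\sum_k A_k\otimes B_k,
\]
with \(A_k\ge0\) on qubit \(1\) and \(B_k\ge0\) on qubits \(2,3\). Let \(T_1\) denote transposition on qubit \(1\) in the computational basis. Each \(A_k^{T}\) is again positive semidefinite, so
\[
F^{T_1}=\sum_k A_k^{T}\otimes B_k\ge0 .
\]
The partial transpose also preserves the trace, so \(\Tr(F^{T_1})=\Tr(F)\).

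\emph{Step 2 (moving the transpose onto the GHZ projector).} The partial transpose satisfies \(\Tr(XY)=\Tr(X^{T_1}Y^{T_1})\). Setting \(W:=(\proj{G_3})^{T_1}\), this gives
\[
1=\bra{G_3}F\ket{G_3}=\Tr(F\,\proj{G_3})=\Tr(F^{T_1}W).
\]
Expanding the projector,
\[
\proj{G_3}=\tfrac12\bigl(\proj{000}+\proj{111}+\ket{000}\!\bra{111}+\ket{111}\!\bra{000}\bigr).
\]
Transposing qubit \(1\) swaps the first bit between ket and bra in the off-diagonal terms. This yields
\[
W=\tfrac12\bigl(\proj{000}+\proj{111}+\ket{100}\!\bra{011}+\ket{011}\!\bra{100}\bigr).
\]
The operator \(W\) is block diagonal. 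It has eigenvalue \(\tfrac12\) on \(\ket{000}\) and on \(\ket{111}\). On the span of \(\{\ket{100},\ket{011}\}\) it acts as \(\tfrac12\sigma_x\), with eigenvalues \(\pm\tfrac12\). It vanishes on all remaining basis states. Hence \(\|W\|=\tfrac12\).

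\emph{Step 3 (conclusion).} Since \(F^{T_1}\ge0\), the trace of its product with a Hermitian operator is bounded by the operator norm:
\[
\Tr(F^{T_1}W)\le\|W\|\,\Tr(F^{T_1})=\tfrac12\Tr(F).
\]
Combining this with Step 2 gives \(1\le\tfrac12\Tr(F)\), that is, \(\Tr(F)\ge2\).

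The only step requiring care is the computation of \(W\) and its spectrum, which is a direct check. The rest is the standard PPT argument. Note that no upper bound \(F\le\id\) is needed, consistent with the hypotheses of the statement.
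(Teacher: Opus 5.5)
Your proof is correct, but it takes a different route from the paper. The paper works with separability directly. It writes $F=\sum_\ell c_\ell\,\proj{a_\ell}\otimes\proj{b_\ell}$ as a conic combination of rank-one product projectors. Because both Schmidt coefficients of $\ket{G_3}$ across $1|23$ equal $1/\sqrt2$, every product unit vector has squared overlap at most $1/2$ with $\ket{G_3}$. Summing term by term gives $\bra{G_3}F\ket{G_3}\le\frac12\sum_\ell c_\ell=\frac12\Tr(F)$.

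You instead relax separability to $F^{T_1}\ge0$ and move the transpose onto the target. The key quantity then becomes the operator norm $\|(\proj{G_3})^{T_1}\|=1/2$, rather than the maximal product-state overlap. The two constants coincide for a general reason. For a pure state with Schmidt coefficients $\lambda_i$, the partially transposed projector has eigenvalues $\lambda_i^2$ and $\pm\lambda_i\lambda_j$ for $i<j$. Its norm is therefore $\lambda_{\max}^2$, which is exactly the maximal squared overlap with product vectors, so the PPT relaxation loses nothing here.

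Your version proves a slightly stronger statement: positivity of the partial transpose across the cut suffices, not full separability. It also reduces the proof to a finite spectral check. The paper's version needs nothing beyond the Schmidt decomposition. It is also the same term-by-term product-vector argument the paper reuses for Lemma~\ref{lem:line-missed-cut-inequality}.
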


\begin{proof}
By symmetry, it is enough to consider the case where \(F\) is separable across \(1|23\).  Across this cut,
\[
\ket{G_3}=\frac{\ket0\ket{00}+\ket1\ket{11}}{\sqrt2}
\]
has Schmidt coefficients \(1/\sqrt2\) and \(1/\sqrt2\).  Hence every product unit vector \(\ket a\otimes\ket b\in\mathbb C^2\otimes\mathbb C^4\) satisfies
\[
\left|\bigl(\bra a\otimes\bra b\bigr)\ket{G_3}\right|^2\le\frac12.
\]

Since \(F\) is positive and separable across \(1|23\), it can be written as a finite conic combination of rank-one product projectors,
\[
F=\sum_\ell c_\ell\,\proj{a_\ell}\otimes\proj{b_\ell},
\qquad
c_\ell\ge0,
\]
where \(\ket{a_\ell}\) and \(\ket{b_\ell}\) are unit vectors on the first qubit and on the last two qubits, respectively.  Therefore
\[
\bra{G_3}F\ket{G_3}
=\sum_\ell c_\ell
\left|\bigl(\bra{a_\ell}\otimes\bra{b_\ell}\bigr)\ket{G_3}\right|^2
\le\frac12\sum_\ell c_\ell
=\frac12\Tr(F).
\]
The assumption \(\bra{G_3}F\ket{G_3}=1\) gives \(1\le\Tr(F)/2\), and hence \(\Tr(F)\ge2\).
\end{proof}

\begin{theorem}
\label{thm:counterexample-two-local}
For every \(n\ge3\), let
\[
\ket{\Psi_n}=\ket{G_3}_{123}\otimes\ket{0^{n-3}}_{4,\ldots,n}.
\]
Let \(\Omega\) be the verification operator of any perfect-completeness strategy for \(\ket{\Psi_n}\) whose rounds consist of disjoint measurements on blocks of size at most 2.  Then
\[
\beta(\Omega)\ge\frac17,
\qquad
\nu(\Omega)\le\frac67.
\]
\end{theorem}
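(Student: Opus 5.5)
Our plan is a trace argument on the compressed three-qubit operator. Let \(\Omega=\sum_r p_rE_r\) be the verification operator of any such strategy. By Lemma~\ref{lem:perfect-completeness-round-by-round}, every round with \(p_r>0\) satisfies \(E_r\ket{\Psi_n}=\ket{\Psi_n}\). Lemma~\ref{lem:compression-first-three} then applies to each such round. It gives compressed operators \(F_r\) with \(0\le F_r\le\id_{123}\) and \(F_r\ket{G_3}=\ket{G_3}\), each separable across some bipartition \(1|23\), \(2|13\) or \(3|12\). Since \(F_r\) is positive and fixes \(\ket{G_3}\), we have \(\bra{G_3}F_r\ket{G_3}=1\), so Lemma~\ref{lem:ghz3-trace-lower-bound} yields \(\Tr F_r\ge2\) for every such round.

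We then average. Define \(F:=\sum_rp_rF_r\). This equals the compression of \(\Omega\) against the spectator state \(\ket{0^{n-3}}\), i.e.\ \(F=(\id_{123}\otimes\bra{0^{n-3}})\Omega(\id_{123}\otimes\ket{0^{n-3}})\). It satisfies \(F\ket{G_3}=\ket{G_3}\) and \(\Tr F\ge 2\). Let \(P:=\id_{123}-\proj{G_3}\). Since \(\ket{G_3}\) is an eigenvector of the Hermitian \(F\), we get \(F=\proj{G_3}+PFP\), and hence \(\Tr(PFP)=\Tr F-1\ge1\). The range of \(P\) has dimension \(7\), so the largest eigenvalue of \(PFP\) is at least \(1/7\). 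Let \(\ket{\chi}\) be a unit vector in \(\operatorname{range}(P)\) attaining it, so \(\bra\chi F\ket\chi\ge1/7\) and \(\ket\chi\perp\ket{G_3}\).

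Finally we lift back to \(n\) qubits. Set \(\ket{\tilde\chi}:=\ket\chi_{123}\otimes\ket{0^{n-3}}\). It is a unit vector orthogonal to \(\ket{\Psi_n}\), because \(\braket{\Psi_n}{\tilde\chi}=\braket{G_3}{\chi}=0\). Moreover \(\bra{\tilde\chi}\Omega\ket{\tilde\chi}=\bra\chi F\ket\chi\ge1/7\). Since \(\ket{\tilde\chi}\) lies in the orthogonal complement of \(\ket{\Psi_n}\), we get \(\beta(\Omega)\ge\bra{\tilde\chi}(\id-\proj{\Psi_n})\Omega(\id-\proj{\Psi_n})\ket{\tilde\chi}\ge 1/7\), and therefore \(\nu(\Omega)\le6/7\). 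For \(n=3\) the spectator factor is simply absent and the same argument applies verbatim.

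The main work has already been done in the preceding lemmas, in particular the separability of the compressed round operator in Lemma~\ref{lem:compression-first-three}. The only delicate step here is checking that the decomposition \(F=\proj{G_3}+PFP\) holds, i.e.\ that \(\ket{G_3}\) is an eigenvector of \(F\) and not just a vector with unit expectation. This follows from \(F\ket{G_3}=\ket{G_3}\) together with the Hermiticity of \(F\), which kills the cross terms \(\proj{G_3}FP\) and \(PF\proj{G_3}\).
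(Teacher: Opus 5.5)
Your proposal is correct and follows the paper's proof essentially step for step. The steps match: round-by-round perfect completeness, the compressed operators $F_r$ with $\Tr F_r\ge2$, averaging to get $\Tr(P^\perp\overline F P^\perp)\ge1$ on the seven-dimensional complement, and lifting back through the isometry $\ket\varphi\mapsto\ket\varphi\otimes\ket{0^{n-3}}$. The only cosmetic difference is that you lift an explicit top eigenvector, whereas the paper compares norms via $V^\dagger\Omega V$.
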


\begin{proof}
Write the randomized strategy as
\[
\Omega=\sum_r p_rE_r,
\]
where each \(E_r\) is the pass effect of one 2-local round.  By Lemma~\ref{lem:perfect-completeness-round-by-round}, every \(E_r\) with \(p_r>0\) satisfies
\[
E_r\ket{\Psi_n}=\ket{\Psi_n}.
\]
For such a round, let \(F_r\) be the three-qubit compression from Lemma~\ref{lem:compression-first-three}.  Then
\[
0\le F_r\le\id_{123},
\qquad
F_r\ket{G_3}=\ket{G_3},
\]
and \(F_r\) is separable across at least one of the one-versus-two bipartitions.  In particular,
\[
\bra{G_3}F_r\ket{G_3}=1.
\]
Lemma~\ref{lem:ghz3-trace-lower-bound} therefore gives
\[
\Tr(F_r)\ge2
\qquad\textnormal{for every }r\textnormal{ with }p_r>0.
\]

Define the averaged compressed operator
\[
\overline F:=\sum_r p_rF_r
=\bigl(\id_{123}\otimes\bra{0^{n-3}}\bigr)
  \Omega
  \bigl(\id_{123}\otimes\ket{0^{n-3}}\bigr).
\]
Then
\[
0\le\overline F\le\id_{123},
\qquad
\overline F\ket{G_3}=\ket{G_3},
\qquad
\Tr(\overline F)\ge2.
\]
Let \(P=\proj{G_3}\) and \(P^\perp=\id_{123}-P\).  Since \(\overline F\ket{G_3}=\ket{G_3}\), the vector \(\ket{G_3}\) is an eigenvector of \(\overline F\) with eigenvalue \(1\), and
\[
\Tr(P^\perp\overline F P^\perp)=\Tr(\overline F)-1\ge1.
\]
The operator \(P^\perp\overline F P^\perp\) is positive semidefinite on the seven-dimensional subspace orthogonal to \(\ket{G_3}\).  Therefore its largest eigenvalue is at least its average eigenvalue:
\[
\bigl\|P^\perp\overline F P^\perp\bigr\|
\ge\frac{\Tr(P^\perp\overline F P^\perp)}{7}
\ge\frac17.
\]
Thus the compressed three-qubit operator has second eigenvalue at least \(1/7\).

Finally, let
\[
V:(\mathbb C^2)^{\otimes3}\to(\mathbb C^2)^{\otimes n},
\qquad
V\ket\varphi=\ket\varphi_{123}\otimes\ket{0^{n-3}}_{4,\ldots,n}.
\]
Then \(V\) is an isometry, \(V\ket{G_3}=\ket{\Psi_n}\), and \(\overline F=V^\dagger\Omega V\).  For every unit vector \(\ket\varphi\perp\ket{G_3}\), the vector \(V\ket\varphi\) is a unit vector orthogonal to \(\ket{\Psi_n}\), and
\[
\bra\varphi\overline F\ket\varphi
=\bra\varphi V^\dagger\Omega V\ket\varphi
=\bra{V\varphi}\Omega\ket{V\varphi}.
\]
Taking the maximum over \(\ket\varphi\perp\ket{G_3}\) gives
\[
\beta(\Omega)
=\bigl\|(\id-\proj{\Psi_n})\Omega(\id-\proj{\Psi_n})\bigr\|
\ge
\bigl\|P^\perp\overline F P^\perp\bigr\|
\ge\frac17.
\]
Since \(\nu(\Omega)=1-\beta(\Omega)\), this implies \(\nu(\Omega)\le6/7\).
\end{proof}

\section{Fidelity estimation from Bell-matching outcomes}
\label{sec:bm-fid-algorithm}

\begin{algorithm}[!htbp]
\caption{\justifying BM-Fid for \(\ket{G_n}\), with \(n\ge3\)}
\label{alg:bm-fid}
\begin{algorithmic}[1]
\Require Estimation accuracy $\eps\in(0,1)$ and failure probability $\delta\in(0,1)$

\State Set $m=\lfloor n/2\rfloor$ and
\[
R_{\BM}^{\mathrm{fid}}(n)=
\begin{cases}
1+\dfrac{1}{n-2}, & n \text{ even},\\[8pt]
1+\dfrac{1}{n-1}, & n \text{ odd},
\end{cases}
\]
\Statex and
\[
N=
\left\lceil
\frac{\bigl(R_{\BM}^{\mathrm{fid}}(n)\bigr)^2\ln(2/\delta)}{2\eps^2}
\right\rceil .
\]

\For{each round $t=1,\ldots,N$}
    \State Sample a uniformly random quasi-perfect matching $Q_t$ of $[n]$

    \For{every pair $\{i,j\}\in M(Q_t)$}
        \State Perform a Bell-basis measurement on qubits $i,j$
        \State Record its outcome $z_{ij},x_{ij}\in\{+1,-1\}$
    \EndFor

    \If{$n$ is odd}
        \State Measure the singleton qubit $s(Q_t)$ in the $X$ basis
        \State Record its outcome $x_{s(Q_t)}\in\{+1,-1\}$
    \EndIf

    \State Set
    \Statex
    \[
    \xi_t=
    \begin{cases}
    \displaystyle\prod_{\{i,j\}\in M(Q_t)}x_{ij}, & n \text{ even},\\[12pt]
    \displaystyle x_{s(Q_t)}\prod_{\{i,j\}\in M(Q_t)}x_{ij}, & n \text{ odd}.
    \end{cases}
    \]

    \State Set the single-round fidelity score
    \Statex
    \[
    Y_t=
    \frac{1+\xi_t}{2^n}
    \sum_{T\subseteq M(Q_t)}
    \frac{\binom n{2|T|}}{\binom m{|T|}}
    \prod_{\{i,j\}\in T}z_{ij}.
    \]
\EndFor

\Statex \textbf{Final output:} Return the estimate
$
\widehat F_{\BM}^{\mathrm{fid}}:=\frac1N\sum_{t=1}^NY_t .
$
\end{algorithmic}
\end{algorithm}

Algorithm~\ref{alg:bm-cert} uses the Bell-matching outcomes only through a binary accept/reject rule.  We now use the same measurement outcomes in Algorithm~\ref{alg:bm-fid} to estimate the GHZ fidelity
\[
F_{G_n}(\rho)=\bra{G_n}\rho\ket{G_n}.
\]
For a subset $U\subseteq[n]$, write
\[
Z_U:=\prod_{i\in U}Z_i,
\qquad
Z_\emptyset:=\id .
\]
For a quasi-perfect matching $Q$ and a subset $T\subseteq M(Q)$ of its matched pairs, write
\[
V_Q(T):=\bigcup_{\{i,j\}\in T}\{i,j\}.
\]
Thus $V_Q(T)$ is the set of all vertices covered by the pairs in $T$.  We also use the convention that an empty product of signs is equal to $1$.

\begin{lemma}
\label{lem:bm-fid-ghz-projector-expansion}
For every $n\ge2$,
\[
\proj{G_n}
=
\frac1{2^n}
\sum_{\substack{U\subseteq[n]\\ |U|\,\textnormal{even}}}
\left(Z_U+X^{\otimes n}Z_U\right).
\]
\end{lemma}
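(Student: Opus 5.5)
The plan is to factor the GHZ projector into two commuting stabilizer projectors and expand each one. The code space of the \(Z_iZ_j\) stabilizers is spanned by \(\ket{0^n}\) and \(\ket{1^n}\), and \(P_X\) picks out the symmetric combination. So first I will establish
\[
\proj{G_n}=P_X\,P_{\mathrm{even}},\qquad P_{\mathrm{even}}:=\proj{0^n}+\proj{1^n}.
\]
To see this, note that \(P_{\mathrm{even}}\) projects onto the span of \(\ket{0^n}\) and \(\ket{1^n}\), i.e.\ onto the span of \(\ket{0^n,+}\) and \(\ket{0^n,-}\). By Lemma~\ref{lem:ghz-basis-eigen}, \(P_X=(\id+X^{\otimes n})/2\) acts on this span as the projector onto \(\ket{0^n,+}=\ket{G_n}\). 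The two projectors commute, since both are diagonal in the GHZ basis of Lemma~\ref{lem:ghz-basis-eigen}, so the product is exactly \(\proj{G_n}\).

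The second step expands \(P_{\mathrm{even}}\) in Pauli \(Z\) strings. I will use
\[
\proj{0^n}=\prod_{i}\frac{\id+Z_i}{2}=\frac1{2^n}\sum_{U\subseteq[n]}Z_U,
\qquad
\proj{1^n}=\prod_i\frac{\id-Z_i}{2}=\frac1{2^n}\sum_{U\subseteq[n]}(-1)^{|U|}Z_U.
\]
Adding these, the odd-\(|U|\) terms cancel and the even-\(|U|\) terms double, giving
\[
P_{\mathrm{even}}=\frac{2}{2^n}\sum_{|U|\text{ even}}Z_U .
\]

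The final step multiplies by \(P_X\):
\[
P_XP_{\mathrm{even}}=\frac{\id+X^{\otimes n}}{2}\cdot\frac{2}{2^n}\sum_{|U|\text{ even}}Z_U=\frac1{2^n}\sum_{|U|\text{ even}}\bigl(Z_U+X^{\otimes n}Z_U\bigr),
\]
which is the claimed formula.

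There is no real obstacle here; the only point needing care is the identification \(P_XP_{\mathrm{even}}=\proj{G_n}\). As a cross-check, I can verify it directly: apply the product to \(\ket{G_n}\) and to \(\ket{0^n,-}\), and note that it annihilates every \(\ket{a,\tau}\) with \(a\ne0^n\), because \(P_{\mathrm{even}}\) already does.
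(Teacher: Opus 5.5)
Your proof is correct and follows essentially the paper's route: both expand \(\proj{0^n}+\proj{1^n}\) as \(2^{1-n}\sum_{|U|\text{ even}}Z_U\) using \(\prod_i(\id\pm Z_i)\), and then apply \((\id+X^{\otimes n})/2\). The only difference is how the result is identified with \(\proj{G_n}\). The paper writes \(X^{\otimes n}(\proj{0^n}+\proj{1^n})\) out explicitly as \(\ket{1^n}\!\bra{0^n}+\ket{0^n}\!\bra{1^n}\), whereas you get \(P_XP_{\mathrm{even}}=\proj{G_n}\) from the GHZ-basis eigenvalues; both are sound.
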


\begin{proof}
Let
\[
D:=\sum_{\substack{U\subseteq[n]\\ |U|\,\textnormal{even}}} Z_U =\frac12\left(\prod_{i=1}^n(\id+Z_i)+\prod_{i=1}^n(\id-Z_i)\right).
\]
Now
\[
\prod_{i=1}^n(\id+Z_i)=2^n\proj{0^n},
\qquad
\prod_{i=1}^n(\id-Z_i)=2^n\proj{1^n},
\]
and therefore
\[
D=2^{n-1}\left(\proj{0^n}+\proj{1^n}\right).
\]
Multiplying on the left by $X^{\otimes n}$ gives
\[
X^{\otimes n}D
=2^{n-1}\left(\ket{1^n}\!\bra{0^n}+\ket{0^n}\!\bra{1^n}\right).
\]
Hence
\[
\begin{aligned}
\frac1{2^n}\sum_{\substack{U\subseteq[n]\\ |U|\,\textnormal{even}}}
\left(Z_U+X^{\otimes n}Z_U\right)
&=\frac1{2^n}\left(D+X^{\otimes n}D\right)\\
&=\frac12\left(\proj{0^n}+\proj{1^n}
+\ket{1^n}\!\bra{0^n}+\ket{0^n}\!\bra{1^n}\right)\\
&=\proj{G_n}.
\end{aligned}
\]
\end{proof}

\begin{lemma}
\label{lem:bm-fid-inclusion-probability}
Let $Q$ be a uniformly random quasi-perfect matching of $[n]$, and put $m=\lfloor n/2\rfloor$.  Fix an even subset $U\subseteq[n]$ with $|U|=2r$, where $0\le r\le m$.  Then
\[
\Pr_Q\left[U=V_Q(T)\textnormal{ for some }T\subseteq M(Q)\right]
=
\frac{\binom m r}{\binom n{2r}}.
\]
\end{lemma}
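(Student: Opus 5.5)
The event \(U=V_Q(T)\) for some \(T\subseteq M(Q)\) holds exactly when every vertex of \(U\) is matched by \(Q\) to another vertex of \(U\). In that case \(T\) is forced to be the set of pairs of \(M(Q)\) inside \(U\). In particular, for odd \(n\) the singleton \(s(Q)\) must lie outside \(U\). So the plan is to count the quasi-perfect matchings whose restriction to \(U\) is a perfect matching of \(U\). I will then divide by the total number of matchings, \((n-1)!!\) for even \(n\) and \(n(n-2)!!\) for odd \(n\). This is the same kind of count as in Lemma~\ref{lem:even-count} and Lemma~\ref{lem:odd-count}, where \(A=U\) with \(k=2r\) even.

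\textbf{Even \(n=2m\).} The favorable matchings consist of a perfect matching of \(U\) together with a perfect matching of \([n]\setminus U\). There are \((2r-1)!!\,(n-2r-1)!!\) of these, so the probability is \(E_n(2r)\). It remains to show
\[
\frac{(2r-1)!!\,(2m-2r-1)!!}{(2m-1)!!}=\frac{\binom m r}{\binom{2m}{2r}}.
\]
I will use \((2k-1)!!=(2k)!/(2^k k!)\). The powers of \(2\) cancel, since \(2^{m}=2^{r}2^{m-r}\). What remains is
\[
\frac{(2r)!\,(2m-2r)!\,m!}{r!\,(m-r)!\,(2m)!},
\]
which is exactly \(\binom m r/\binom{2m}{2r}\).

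\textbf{Odd \(n=2m+1\).} The singleton can be any of the \(n-2r\) vertices outside \(U\). Then \(U\) is perfectly matched in \((2r-1)!!\) ways, and the remaining \(n-2r-1=2(m-r)\) outside vertices in \((2m-2r-1)!!\) ways. Dividing by \(n(n-2)!!=(2m+1)(2m-1)!!\) gives
\[
\frac{(2m+1-2r)\,(2r-1)!!\,(2m-2r-1)!!}{(2m+1)(2m-1)!!}.
\]
By the even case, the factor \((2r-1)!!(2m-2r-1)!!/(2m-1)!!\) equals \(\binom m r/\binom{2m}{2r}\). So I need to check
\[
\frac{2m+1-2r}{2m+1}\cdot\frac{1}{\binom{2m}{2r}}=\frac{1}{\binom{2m+1}{2r}}.
\]
This follows from \(\binom{2m+1}{2r}=\frac{2m+1}{2m+1-2r}\binom{2m}{2r}\).

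\textbf{Edge cases.} These are \(r=0\) and \(r=m\). I will check them using the convention \((-1)!!=1\). For \(r=0\) the probability is \(1\), matching \(\binom m0/\binom n0=1\). For \(r=m\) with \(n\) even, the probability is \(1\), matching \(\binom mm/\binom{2m}{2m}=1\). For \(r=m\) with \(n\) odd, the singleton is forced to be the unique vertex outside \(U\), giving probability \(1/n\), which matches \(1/\binom{2m+1}{2m}\).

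The only real obstacle is doing the double-factorial algebra carefully. The step I would check most closely is the odd-case reduction to the even-case identity.
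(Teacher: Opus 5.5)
Your proof is correct, but it takes a different route from the paper's. You count directly. You identify the event with ``$U$ is a union of pairs of $Q$'', equivalently, $Q$ avoids the cut $(U,[n]\setminus U)$, with the singleton forced outside $U$ when $n$ is odd. You then count these matchings with double factorials and convert $(2r-1)!!(2m-2r-1)!!/(2m-1)!!$ into $\binom{m}{r}/\binom{2m}{2r}$. For odd $n$ you reduce to the even case via $\binom{2m+1}{2r}=\frac{2m+1}{2m+1-2r}\binom{2m}{2r}$. All of these identities check out, including the boundary cases with $(-1)!!=1$.

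The paper instead uses a symmetry and double-counting argument. Every $Q$ has exactly $m$ pairs, so it represents exactly $\binom{m}{r}$ distinct $2r$-subsets; they are distinct because $T$ is recoverable from $V_Q(T)$. Summing $\Pr_Q[U \text{ is represented}]$ over all $\binom{n}{2r}$ subsets $U$ therefore gives $\binom{m}{r}$. Permutation invariance of the uniform distribution makes all these probabilities equal, which gives the formula.

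The paper's route needs no algebra and treats even $n$, odd $n$ and the edge cases uniformly. It also extends verbatim to any permutation-invariant distribution on matchings with a fixed number of pairs. Your route is more computational, but it shows something the paper does not state: the inclusion probability coincides with the cut-avoidance probability $E_n(2r)$ or $O_n(2r)$ of Lemmas~\ref{lem:even-count} and~\ref{lem:odd-count}. So for even $k=2r$ those quantities have the closed form $\binom{m}{r}/\binom{n}{2r}$, which you can check against $E_n(2)=1/(n-1)$ and $O_n(2)=1/n$.
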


\begin{proof}
For each fixed quasi-perfect matching $Q$, there are exactly $\binom m r$ subsets $T\subseteq M(Q)$ with $|T|=r$.  Each such $T$ determines one subset $V_Q(T)\subseteq[n]$ of size $2r$.  Hence the average number of size-$2r$ subsets represented in the form $V_Q(T)$ is $\binom m r$.

By symmetry, every fixed subset $U\subseteq[n]$ of size $2r$ has the same probability of being represented.  There are $\binom n{2r}$ such subsets.  Moreover, if $U$ is represented, the corresponding $T$ is unique.  Therefore
\[
\binom n{2r}\Pr_Q\left[U=V_Q(T)\textnormal{ for some }T\subseteq M(Q)\right]
=\binom m r.
\]
\end{proof}

\begin{theorem}
\label{thm:bm-fid-unbiasedness}
Let $\rho$ be any $n$-qubit state, and let $Y$ be the single-round score produced by Algorithm~\ref{alg:bm-fid} on input $\rho$.  Then
\[
\mathbb E[Y]=F_{G_n}(\rho)=\bra{G_n}\rho\ket{G_n}.
\]
More generally, if the tested copies have states $\rho_1,\ldots,\rho_N$, then
\[
\mathbb E\left[\widehat F_{\BM}^{\mathrm{fid}}\right]
=\frac1N\sum_{t=1}^NF_{G_n}(\rho_t).
\]
\end{theorem}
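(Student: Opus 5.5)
The plan is to write the single-round score as the outcome of measuring a fixed operator, average over the matching, and compare with the projector expansion in Lemma~\ref{lem:bm-fid-ghz-projector-expansion}.

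\textbf{Fixed matching.} Condition on $Q$. All observables $Z_iZ_j$ for $\{i,j\}\in M(Q)$ commute with each other and with $X^{\otimes n}$. The Bell outcomes, together with the singleton $X$ outcome for odd $n$, are a joint measurement of these observables. By the argument of Lemma~\ref{lem:pass-effect}, the product of the pair $X$-signs (times $x_{s(Q)}$ for odd $n$) is the measured value of $X^{\otimes n}$.

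For $T\subseteq M(Q)$, the quantity $\xi\prod_{\{i,j\}\in T}z_{ij}$ is therefore the measured value of $X^{\otimes n}Z_{V_Q(T)}$, using $\prod_{\{i,j\}\in T}Z_iZ_j=Z_{V_Q(T)}$. Likewise $\prod_{\{i,j\}\in T}z_{ij}$ is the measured value of $Z_{V_Q(T)}$. Since the outcome of a product of commuting jointly measured observables is the product of their outcomes,
\[
\mathbb E[Y\mid Q]=\frac1{2^n}\sum_{T\subseteq M(Q)}\frac{\binom n{2|T|}}{\binom m{|T|}}\Tr\!\left[\left(Z_{V_Q(T)}+X^{\otimes n}Z_{V_Q(T)}\right)\rho\right].
\]

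\textbf{Averaging over $Q$.} Rewrite the sum over $T$ as a sum over even subsets $U\subseteq[n]$ with indicator $\mathbf 1\{U=V_Q(T)\text{ for some }T\}$. This is legitimate because $T\mapsto V_Q(T)$ is injective, so each represented $U$ is hit exactly once. Taking the expectation over uniform $Q$ and applying Lemma~\ref{lem:bm-fid-inclusion-probability}, each even $U$ with $|U|=2r$ gets weight
\[
\frac{\binom n{2r}}{\binom m r}\cdot\frac{\binom m r}{\binom n{2r}}=1.
\]
Even sets with $|U|>2m$ cannot occur: for even $n$ we have $2m=n$, and for odd $n$ an even $|U|\le n$ satisfies $|U|\le n-1=2m$. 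Hence
\[
\mathbb E[Y]=\frac{1}{2^n}\sum_{|U|\text{ even}}\Tr\!\left[(Z_U+X^{\otimes n}Z_U)\rho\right]=\bra{G_n}\rho\ket{G_n}
\]
by Lemma~\ref{lem:bm-fid-ghz-projector-expansion}.

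\textbf{Many copies.} The multi-copy statement follows by linearity of expectation, since round $t$ acts only on $\rho_t$ with an independent $Q_t$.

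\textbf{Main obstacle.} The delicate step is the operator identification in the first paragraph: the pair $X$-parities and $Z$-parities come from the same Bell measurement, so one must justify that the product of recorded signs is the outcome of $X^{\otimes n}Z_{V_Q(T)}$. I will handle this by noting that the Bell basis and the singleton $X$ basis simultaneously diagonalize all these commuting Pauli products. Each product is then a function of the outcome with eigenvalue equal to the product of the signs, so its expectation is the trace against $\rho$.
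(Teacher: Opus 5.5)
Your proposal is correct and follows the paper's proof essentially step by step: you identify $\mathbb E[Y\mid Q]$ as $\Tr(H_Q\rho)$ via the joint Bell/$X$-basis measurement of the commuting observables $Z_{V_Q(T)}$ and $X^{\otimes n}Z_{V_Q(T)}$, reweight with the inclusion-probability lemma so that each even $U$ gets coefficient $1$, and conclude with the projector expansion and linearity over rounds. Your explicit remarks that $T\mapsto V_Q(T)$ is injective and that every even $U\subseteq[n]$ has $|U|\le 2m$ fill in details the paper leaves implicit.
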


\begin{proof}
Fix a quasi-perfect matching $Q$.  The Bell measurement on a pair $\{i,j\}$ jointly measures $Z_iZ_j$ and $X_iX_j$.  Therefore, for each $T\subseteq M(Q)$, the product
\[
\prod_{\{i,j\}\in T}z_{ij}
\]
is the measured eigenvalue of $Z_{V_Q(T)}$.  The sign
\[
\xi=
\begin{cases}
\displaystyle\prod_{\{i,j\}\in M(Q)}x_{ij}, & n \textnormal{ even},\\[12pt]
\displaystyle x_{s(Q)}\prod_{\{i,j\}\in M(Q)}x_{ij}, & n \textnormal{ odd},
\end{cases}
\]
is the measured eigenvalue of $X^{\otimes n}$.  Since $|V_Q(T)|$ is even, $Z_{V_Q(T)}$ commutes with $X^{\otimes n}$.  Thus
\[
\xi\prod_{\{i,j\}\in T}z_{ij}
\]
is the measured eigenvalue of $X^{\otimes n}Z_{V_Q(T)}$.

Consequently, conditional on the choice of $Q$,
\[
\mathbb E[Y\mid Q]
=\Tr(H_Q\rho),
\]
where
\[
H_Q:=
\frac1{2^n}
\sum_{T\subseteq M(Q)}
\frac{\binom n{2|T|}}{\binom m{|T|}}
\left(Z_{V_Q(T)}+X^{\otimes n}Z_{V_Q(T)}\right).
\]
Averaging $H_Q$ over a uniformly random quasi-perfect matching and using Lemma~\ref{lem:bm-fid-inclusion-probability}, each fixed even subset $U\subseteq[n]$ with $|U|=2r$ receives total coefficient
\[
\frac{\binom n{2r}}{\binom m r}
\Pr_Q\left[U=V_Q(T)\textnormal{ for some }T\subseteq M(Q)\right]
=1.
\]
Therefore
\[
\mathbb E_Q H_Q
=\frac1{2^n}
\sum_{\substack{U\subseteq[n]\\ |U|\,\textnormal{even}}}
\left(Z_U+X^{\otimes n}Z_U\right)
=\proj{G_n},
\]
where the last equality follows from Lemma~\ref{lem:bm-fid-ghz-projector-expansion}.  Hence
\[
\mathbb E[Y]
=\Tr\left((\mathbb E_QH_Q)\rho\right)
=\Tr(\proj{G_n}\rho)
=F_{G_n}(\rho).
\]
Applying this separately to the $N$ tested copies gives
\[
\mathbb E\left[\widehat F_{\BM}^{\mathrm{fid}}\right]
=\frac1N\sum_{t=1}^N\mathbb E[Y_t]
=\frac1N\sum_{t=1}^NF_{G_n}(\rho_t).
\]
\end{proof}

\begin{lemma}
\label{lem:bm-fid-score-range}
For every possible outcome of one round of Algorithm~\ref{alg:bm-fid},
\[
-\frac1{n-2}\le Y\le1
\qquad
(n\ge4\textnormal{ even}),
\]
and
\[
-\frac1{n-1}\le Y\le1
\qquad
(n\ge3\textnormal{ odd}).
\]
\end{lemma}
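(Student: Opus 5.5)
The plan is to reduce $Y$ to a function of a single integer and then analyze that function. If $\xi=-1$ then $Y=0$, which lies in both claimed intervals, so assume $\xi=+1$. In that case
\[
Y=2^{1-n}\sum_{T\subseteq M(Q)}c_{|T|}\prod_{\{i,j\}\in T}z_{ij},
\qquad
c_r:=\frac{\binom n{2r}}{\binom m r}.
\]
Let $B\subseteq M(Q)$ be the set of pairs with $z_{ij}=-1$, and put $k=|B|$. Grouping the sum by $r=|T|$, the inner sum over $T$ with $|T|=r$ is a Krawtchouk value $K_r(k)=\sum_{|T|=r}(-1)^{|T\cap B|}$. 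Hence $Y=g(k)$ with
\[
g(k):=2^{1-n}\sum_{r=0}^m\binom n{2r}\,\frac{K_r(k)}{\binom mr}.
\]
The ratio $K_r(k)/\binom mr$ equals $\mathbb E[(-1)^{H_r}]$, where $H_r$ is hypergeometric: the number of marked pairs among $r$ pairs drawn from $m$, of which $k$ are marked. Also $2^{1-n}\binom n{2r}$ is the law of $r=J/2$ when $J$ is the size of a uniformly random even subset of $[n]$. So $g(k)=\mathbb E_{J}\,\mathbb E[(-1)^{H_{J/2}}]$ is a doubly averaged parity.

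The upper bound is immediate. Since $|\mathbb E[(-1)^{H}]|\le1$, we get $g(k)\le 2^{1-n}\sum_r\binom n{2r}=1$, because $r\le m$ covers every even size in both parities of $n$. Equality holds at $k=0$.

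For the lower bound I will first compute small $k$ in closed form, which identifies the extremal value. For $k=1$, $K_r(1)/\binom mr=1-2r/m$. Using $\sum_{j\,\mathrm{even}} j\binom nj=n2^{n-2}$, this gives $g(1)=1-n/(2m)$, which is $0$ for even $n$ and $-1/(n-1)$ for odd $n$. For $k=2$, $K_r(2)/\binom mr=1-4r(m-r)/(m(m-1))$. Using $\sum_{j\,\mathrm{even}}j(n-j)\binom nj=n(n-1)2^{n-3}$ in the even case, this gives $g(2)=-1/(n-2)$. The odd case is analogous, with $m=(n-1)/2$ and the weight $4r(m-r)=j(n-1-j)$, and I expect it to give a value no smaller than $-1/(n-1)$. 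These match the claimed bounds, so the claim reduces to showing $g(k)\ge g(2)$ for even $n$ and $g(k)\ge g(1)$ for odd $n$ when $k\ge1$.

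The main obstacle is controlling $g(k)$ for general $k$. My first attempt will use an exponential generating-function identity. Writing $\binom n{2r}/\binom mr$ through a Beta integral, $1/\binom mr=(m+1)\int_0^1t^r(1-t)^{m-r}\,dt$, turns $\sum_r\binom n{2r}K_r(k)/\binom mr$ into an integral of an expression involving $(1-t+t\cdot)^{m-k}(1-t-t\cdot)^k$ against the even-binomial weights. The hope is that this yields a representation of $g(k)$ as an average of quantities like $\cos^{n}$-type terms, whose sign and magnitude decay geometrically in $k$. That would show $|g(k)|$ is small, at most about $2^{-\Omega(k)}$ plus $O(1/n^2)$, for $k\ge3$.

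If that representation is unwieldy, the fallback is a discrete route. I would derive a three-term recurrence in $k$ for $g(k)$ from the Krawtchouk recurrence. From it I would show that $g(k)+1/(n-2)$ (respectively $g(k)+1/(n-1)$) is nonnegative for all $k\ge1$, checking $k=m$ directly via $g(m)=2^{1-n}\sum_r(-1)^r\binom n{2r}=2^{1-n}\,\mathrm{Re}(1+i)^n$.
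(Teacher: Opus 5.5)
Your reduction of $Y$ to a function $g(k)$ of the number $k$ of pairs with $z_{ij}=-1$ is the paper's reduction to $B_n(s)$, and the steps you actually carry out are correct. The upper bound via $|K_r(k)|\le\binom mr$ and $\sum_{r\le m}\binom n{2r}=2^{n-1}$ holds, and it is more direct than the paper's, which reads $Y\le1$ off the closed form. Your values $g(1)=1-n/(2m)$ and, for even $n$, $g(2)=-1/(n-2)$ are also right. In the odd case, the computation you left as ``analogous'' gives exactly $g(2)=-1/(n-1)=g(1)$.

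The gap is that the lower bound for $k\ge3$, which is the real content of the lemma, is not proved. The Beta-integral route is stated only as a hope. Even a bound like $|g(k)|\le 2^{-\Omega(k)}+O(1/n^2)$ would not give the exact inequality for every $n$ without explicit constants. The fallback is not available as described. The Krawtchouk recurrence in the argument, $(m-k)K_r(k+1)=(m-2r)K_r(k)-kK_r(k-1)$, yields
\[
(m-k)\,g(k+1)+k\,g(k-1)=m\,g(k)-2^{2-n}\sum_{r=0}^m r\binom n{2r}\frac{K_r(k)}{\binom mr}.
\]
A new sum appears, so the recurrence does not close on $g$. You would need a further identity exploiting the specific weights $\binom n{2r}$, and you have not supplied one.

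The missing idea, which is how the paper closes the argument, is to sum over $k$ with weights $\binom mk y^k$. From $K_r(k)=[x^r](1-x)^k(1+x)^{m-k}$ one gets $\sum_k\binom mk K_r(k)y^k=\binom mr(1-y)^r(1+y)^{m-r}$. This cancels the troublesome $1/\binom mr$:
\[
\sum_{k=0}^m\binom mk g(k)\,y^k=2^{1-n}\sum_{r=0}^m\binom n{2r}(1-y)^r(1+y)^{m-r}.
\]
For $n=2m$ this equals $2^{-m}\bigl[(1+\sqrt{1-y^2})^m+(1-\sqrt{1-y^2})^m\bigr]$. Coefficient extraction then gives $g(k)=0$ for odd $k$, and, for $\ell\ge1$ with $2\ell\le m$,
\[
g(2\ell)=\frac{(-1)^\ell(2\ell-1)!!}{(2m-2)(2m-4)\cdots(2m-2\ell)}.
\]
Hence $|g(2\ell+2)|/|g(2\ell)|=(2\ell+1)/(2m-2\ell-2)\le1$, and the most negative value is $g(2)=-1/(n-2)$.

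For $n=2m+1$, the paper writes the generating function as a derivative of the one for $2m+2$ qubits, taken in the homogeneous variable $1+y$. This gives $g_{2m+1}(2\ell-1)=g_{2m+1}(2\ell)=g_{2m+2}(2\ell)$, so the minimum is $g(1)=g(2)=-1/(n-1)$. Your proposal still needs an exact description of $g(k)$ for all $k$, of this or some equivalent kind.
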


\begin{proof}
If the measured global $X$-parity sign is $\xi=-1$, then $Y=0$.  It remains to consider the case $\xi=+1$.  Let $s$ be the number of pairs $\{i,j\}\in M(Q)$ for which $z_{ij}=-1$.  For $0\le r\le m$, define
\[
K_r^{(m)}(s):=\sum_{j=0}^r(-1)^j\binom s j\binom{m-s}{r-j},
\]
with the convention that $\binom ab=0$ when $b<0$ or $b>a$.  This is the sum of $\prod_{\{i,j\}\in T}z_{ij}$ over all $r$-element subsets $T\subseteq M(Q)$, because $j$ counts how many of the chosen pairs have $z_{ij}=-1$.  Hence, when $\xi=+1$,
\[
Y=B_n(s),
\]
where
\[
B_n(s):=
2^{1-n}
\sum_{r=0}^m
\frac{\binom n{2r}}{\binom m r}K_r^{(m)}(s).
\]

First let $n=2m$ be even.  Define
\[
G_{2m}(y):=\sum_{s=0}^m\binom m s B_{2m}(s)y^s .
\]
From the definition of $K_r^{(m)}(s)$,
\[
K_r^{(m)}(s)=[x^r](1-x)^s(1+x)^{m-s}.
\]
Therefore
\[
\begin{aligned}
\sum_{s=0}^m\binom m sK_r^{(m)}(s)y^s
&=[x^r]\sum_{s=0}^m\binom m s\bigl(y(1-x)\bigr)^s(1+x)^{m-s}\\
&=[x^r]\bigl((1+x)+y(1-x)\bigr)^m\\
&=[x^r]\bigl((1+y)+(1-y)x\bigr)^m\\
&=\binom m r(1-y)^r(1+y)^{m-r}.
\end{aligned}
\]
It follows that
\[
\begin{aligned}
G_{2m}(y)
&=2^{1-2m}\sum_{r=0}^m\binom{2m}{2r}(1-y)^r(1+y)^{m-r}\\
&=2^{-m}\left(\bigl(1+\sqrt{1-y^2}\bigr)^m
+\bigl(1-\sqrt{1-y^2}\bigr)^m\right).
\end{aligned}
\]
The last equality is obtained by setting $a=\sqrt{1-y}$ and $b=\sqrt{1+y}$ and using
\[
\sum_{r=0}^m\binom{2m}{2r}a^{2r}b^{2m-2r}
=\frac{(a+b)^{2m}+(b-a)^{2m}}2.
\]
Since $G_{2m}(y)$ is a polynomial of degree at most $m$, it suffices to extract coefficients up to degree $m$.  For $0\le 2\ell\le m$, the term $\bigl(1-\sqrt{1-y^2}\bigr)^m$ has no contribution to the coefficient of $y^{2\ell}$.  Put $z=y^2$.  We claim that
\[
[z^\ell]\bigl(1+\sqrt{1-z}\bigr)^m
=(-1)^\ell 2^{m-2\ell}\frac{m}{m-\ell}\binom{m-\ell}{\ell}.
\]
Indeed, by Cauchy's coefficient formula and the substitution $u=\sqrt{1-z}$,
\[
\begin{aligned}
[z^\ell]\bigl(1+\sqrt{1-z}\bigr)^m
&=\operatorname*{Res}_{z=0}\frac{(1+\sqrt{1-z})^m}{z^{\ell+1}}\,dz\\
&=\operatorname*{Res}_{u=1}\frac{(1+u)^m(-2u)}{(1-u^2)^{\ell+1}}\,du\\
&=(-1)^\ell\operatorname*{Res}_{u=1}
\frac{2u(1+u)^{m-\ell-1}}{(u-1)^{\ell+1}}\,du\\
&=(-1)^\ell[v^\ell]2(1+v)(2+v)^{m-\ell-1}\\
&=(-1)^\ell 2^{m-2\ell}\frac{m}{m-\ell}\binom{m-\ell}{\ell},
\end{aligned}
\]
where $v=u-1$ in the fourth line.  Thus
\[
B_{2m}(0)=1,
\]
\[
B_{2m}(s)=0\qquad\textnormal{for odd }s,
\]
and, for $\ell\ge1$ with $2\ell\le m$,
\[
B_{2m}(2\ell)
=(-1)^\ell
\frac{(2\ell-1)!!}{(2m-2)(2m-4)\cdots(2m-2\ell)}.
\]
If both $2\ell$ and $2\ell+2$ lie in $\{0,1,\ldots,m\}$, then for $\ell\ge1$,
\[
\frac{|B_{2m}(2\ell+2)|}{|B_{2m}(2\ell)|}
=\frac{2\ell+1}{2m-2\ell-2}\le1,
\]
because $2\ell+2\le m$ implies $2\ell+1\le2m-2\ell-2$.  Thus the negative term of largest magnitude is the first one, namely
\[
B_{2m}(2)=-\frac1{2m-2}=-\frac1{n-2}.
\]
All positive values are at most $B_{2m}(0)=1$.  Hence
\[
-\frac1{n-2}\le Y\le1
\]
for even $n$.

Now let $n=2m+1$ be odd.  Define
\[
G_{2m+1}(y):=\sum_{s=0}^m\binom m sB_{2m+1}(s)y^s .
\]
The same calculation gives
\[
G_{2m+1}(y)
=2^{-2m}\sum_{r=0}^m\binom{2m+1}{2r}(1-y)^r(1+y)^{m-r}.
\]
Let $H(y)$ denote the even generating function for $2m+2$ qubits, namely
\[
H(y):=G_{2m+2}(y)
=2^{-2m-1}\sum_{r=0}^{m+1}\binom{2m+2}{2r}(1-y)^r(1+y)^{m+1-r}.
\]
Regard the last sum temporarily as a homogeneous polynomial in two variables $A=1-y$ and $B=1+y$.  Since
\[
\binom{2m+1}{2r}=\frac{m+1-r}{m+1}\binom{2m+2}{2r},
\]
we have
\[
G_{2m+1}(y)=\frac{2}{m+1}\frac{\partial H}{\partial B}.
\]
On the line $A=1-y$, $B=1+y$, homogeneity gives
\[
A\frac{\partial H}{\partial A}+B\frac{\partial H}{\partial B}=(m+1)H,
\]
while differentiation with respect to $y$ gives
\[
H'(y)=-\frac{\partial H}{\partial A}+\frac{\partial H}{\partial B}.
\]
Solving these two equations and using $A+B=2$ yields
\[
G_{2m+1}(y)=H(y)+\frac{1-y}{m+1}H'(y).
\]
Since $H(y)$ has only even powers, write
\[
H(y)=\sum_{k=0}^{m+1}h_ky^k,
\qquad
h_k=\binom{m+1}{k}B_{2m+2}(k).
\]
Then $h_k=0$ for odd $k$.  From
$G_{2m+1}(y)=H(y)+(1-y)H'(y)/(m+1)$, the coefficient $g_s$ of $y^s$ in $G_{2m+1}(y)$ is
\[
g_s=\frac{m+1-s}{m+1}h_s+\frac{s+1}{m+1}h_{s+1}.
\]
For $s=2\ell-1$,
\[
g_{2\ell-1}=\frac{2\ell}{m+1}\binom{m+1}{2\ell}B_{2m+2}(2\ell)
=\binom m{2\ell-1}B_{2m+2}(2\ell).
\]
For $s=2\ell$,
\[
g_{2\ell}=\frac{m+1-2\ell}{m+1}\binom{m+1}{2\ell}B_{2m+2}(2\ell)
=\binom m{2\ell}B_{2m+2}(2\ell).
\]
Dividing by the corresponding binomial coefficients in the definition of $G_{2m+1}$ proves that, whenever the arguments lie between $0$ and $m$,
\[
B_{2m+1}(2\ell-1)=B_{2m+1}(2\ell)=B_{2m+2}(2\ell).
\]
Therefore
\[
B_{2m+1}(0)=1,
\]
and, for $\ell\ge1$,
\[
B_{2m+1}(2\ell-1)=B_{2m+1}(2\ell)
=(-1)^\ell
\frac{(2\ell-1)!!}{(2m)(2m-2)\cdots(2m-2\ell+2)},
\]
again whenever the arguments are in $\{0,1,\ldots,m\}$.  For subsequent nonzero values,
\[
\frac{|B_{2m+1}(2\ell+1)|}{|B_{2m+1}(2\ell-1)|}
=\frac{2\ell+1}{2m-2\ell}\le1
\]
whenever $2\ell+1\le m$.  Hence the negative term of largest magnitude is the first one, namely
\[
B_{2m+1}(1)=B_{2m+1}(2)=-\frac1{2m}=-\frac1{n-1},
\]
where the expression $B_{2m+1}(2)$ is ignored if $m=1$.  All positive values are at most $1$.  Hence
\[
-\frac1{n-1}\le Y\le1
\]
for odd $n$.
\end{proof}

\begin{lemma}[Hoeffding's inequality]
\label{lem:bm-fid-concentration}
Let $Y_1,\ldots,Y_N$ be independent real random variables with $a\le Y_t\le b$ for every $t$.  Let
\[
\widehat\mu:=\frac1N\sum_{t=1}^NY_t,
\qquad
\overline\mu:=\frac1N\sum_{t=1}^N\mathbb E[Y_t].
\]
Then, for every $\eps>0$,
\[
\Pr\left[|\widehat\mu-\overline\mu|\ge\eps\right]
\le
2\exp\left(-\frac{2N\eps^2}{(b-a)^2}\right).
\]
\end{lemma}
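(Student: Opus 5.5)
The statement is the standard two-sided Hoeffding inequality. I would prove it through the usual Chernoff-bound route, which has two ingredients: Hoeffding's lemma for each bounded summand, and Markov's inequality applied to an exponential moment.

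\textbf{Step 1 (Hoeffding's lemma).} For a random variable \(W\) with \(a\le W\le b\) and every \(\lambda\in\mathbb R\), the goal is
\[
\mathbb E\bigl[e^{\lambda(W-\mathbb E W)}\bigr]\le\exp\bigl(\lambda^2(b-a)^2/8\bigr).
\]
First I would use convexity of \(w\mapsto e^{\lambda w}\) on \([a,b]\). Writing \(w\) as a convex combination of the endpoints gives
\[
e^{\lambda w}\le\frac{b-w}{b-a}e^{\lambda a}+\frac{w-a}{b-a}e^{\lambda b}.
\]
Taking expectations bounds the moment generating function by \(e^{L(h)}\), where \(h=\lambda(b-a)\), \(p=(\mathbb E W-a)/(b-a)\), and
\[
L(h)=-hp+\ln(1-p+pe^h).
\]
One checks that \(L(0)=L'(0)=0\). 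The second derivative is \(L''(h)=q(1-q)\le1/4\) with \(q=pe^h/(1-p+pe^h)\). Taylor's theorem then gives \(L(h)\le h^2/8\).

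\textbf{Step 2 (the upper tail).} Set \(S=\sum_t(Y_t-\mathbb E Y_t)\) and take \(\lambda>0\). Markov's inequality gives
\[
\Pr[S\ge N\eps]\le e^{-\lambda N\eps}\,\mathbb E e^{\lambda S}.
\]
Independence factorizes the exponential moment into a product over \(t\). Applying Step 1 to each factor yields
\[
\Pr[S\ge N\eps]\le\exp\bigl(-\lambda N\eps+N\lambda^2(b-a)^2/8\bigr).
\]
Optimizing with \(\lambda=4\eps/(b-a)^2\) gives the bound \(\exp(-2N\eps^2/(b-a)^2)\). Note that \(\widehat\mu-\overline\mu=S/N\), so this is the one-sided tail of the statement.

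\textbf{Step 3 (the lower tail and conclusion).} Apply the same argument to the variables \(-Y_t\), which lie in \([-b,-a]\), an interval of the same width. This bounds \(\Pr[S\le-N\eps]\) by the same quantity. A union bound over the two tails produces the factor \(2\) and completes the proof.

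The only nontrivial step is Step 1, specifically the bound \(L''\le1/4\), which is just \(q(1-q)\le1/4\). Alternatively, since this is a textbook result, one could simply cite it. In the paper it will be applied with \(a=-1/(n-2)\) or \(a=-1/(n-1)\) and \(b=1\), using Lemma~\ref{lem:bm-fid-score-range}.
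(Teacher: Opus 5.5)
Your proof is correct. It is the standard Chernoff-bound argument: Hoeffding's lemma via convexity and $L''=q(1-q)\le 1/4$, then Markov's inequality applied to $e^{\lambda S}$ with $\lambda=4\eps/(b-a)^2$, then the mirrored tail and a union bound. It implicitly assumes $b>a$, which the statement also needs for its right-hand side to make sense. The paper gives no proof and simply cites this as the textbook Hoeffding inequality, then applies it with $b=1$ and $a=-1/(n-2)$ or $a=-1/(n-1)$ from Lemma~\ref{lem:bm-fid-score-range}, exactly as you anticipate.
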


\begin{corollary}[Copy complexity of BM-Fid]
\label{cor:bm-fid-copy-complexity}
Let $n\ge3$ and $\eps,\delta\in(0,1)$.  Suppose Algorithm~\ref{alg:bm-fid} is run on independent copies with states $\rho_1,\ldots,\rho_N$, and define the average fidelity
\[
\overline F:=\frac1N\sum_{t=1}^NF_{G_n}(\rho_t).
\]
If
\[
N\ge
\left\lceil
\frac{\bigl(R_{\BM}^{\mathrm{fid}}(n)\bigr)^2\ln(2/\delta)}{2\eps^2}
\right\rceil,
\]
where
\[
R_{\BM}^{\mathrm{fid}}(n)=
\begin{cases}
1+\dfrac{1}{n-2}, & n \textnormal{ even},\\[8pt]
1+\dfrac{1}{n-1}, & n \textnormal{ odd},
\end{cases}
\]
then
\[
\Pr\left[\left|\widehat F_{\BM}^{\mathrm{fid}}-\overline F\right|\ge\eps\right]
\le\delta .
\]
In particular, for identical copies $\rho_1=\cdots=\rho_N=\rho$,
\[
\Pr\left[\left|\widehat F_{\BM}^{\mathrm{fid}}-F_{G_n}(\rho)\right|\ge\eps\right]
\le\delta .
\]
\end{corollary}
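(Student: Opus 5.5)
The plan is to combine Theorem~\ref{thm:bm-fid-unbiasedness}, Lemma~\ref{lem:bm-fid-score-range}, and Hoeffding's inequality (Lemma~\ref{lem:bm-fid-concentration}).

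First, we fix the states \(\rho_1,\ldots,\rho_N\) and note that the rounds use independently sampled matchings \(Q_t\) and act on independent copies. Hence the scores \(Y_1,\ldots,Y_N\) are independent real random variables. Theorem~\ref{thm:bm-fid-unbiasedness}, applied round by round, gives \(\mathbb E[Y_t]=F_{G_n}(\rho_t)\). In the notation of Lemma~\ref{lem:bm-fid-concentration}, this means \(\overline\mu=\overline F\) and \(\widehat\mu=\widehat F_{\BM}^{\mathrm{fid}}\).

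Second, Lemma~\ref{lem:bm-fid-score-range} gives deterministic bounds \(a\le Y_t\le b\). For even \(n\) we take \(b=1\) and \(a=-1/(n-2)\), and for odd \(n\) we take \(b=1\) and \(a=-1/(n-1)\). In both cases \(b-a=R_{\BM}^{\mathrm{fid}}(n)\).

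Third, Hoeffding's inequality then yields
\[
\Pr\bigl[|\widehat F_{\BM}^{\mathrm{fid}}-\overline F|\ge\eps\bigr]\le 2\exp\!\left(-\frac{2N\eps^2}{(R_{\BM}^{\mathrm{fid}}(n))^2}\right).
\]
Under the stated bound \(N\ge (R_{\BM}^{\mathrm{fid}}(n))^2\ln(2/\delta)/(2\eps^2)\), the exponent is at most \(-\ln(2/\delta)\), so the right-hand side is at most \(\delta\). The identical-copies statement is the special case \(\overline F=F_{G_n}(\rho)\).

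There is no real obstacle here. The only points needing care are the independence of the \(Y_t\), which holds because each round draws a fresh matching and measures a separate copy, and the fact that the range bound is uniform over all outcomes and all matchings, which is exactly what Lemma~\ref{lem:bm-fid-score-range} provides.
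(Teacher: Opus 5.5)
Your proposal is correct and follows the same route the paper takes: the corollary combines Theorem~\ref{thm:bm-fid-unbiasedness} (so \(\overline\mu=\overline F\)), Lemma~\ref{lem:bm-fid-score-range} (so \(b-a=R_{\BM}^{\mathrm{fid}}(n)\)), and Hoeffding's inequality (Lemma~\ref{lem:bm-fid-concentration}) with the stated \(N\). Your remarks on the independence of the \(Y_t\) and on the range bound holding uniformly over all outcomes and matchings spell out points the paper leaves implicit.
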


In Corollary~\ref{cor:bm-fid-copy-complexity}, the leading Hoeffding coefficient multiplying $\ln(2/\delta)/\eps^2$ is
\[
C_{\BM}^{\mathrm{fid}}(n)=
\begin{cases}
\dfrac12\left(1+\dfrac1{n-2}\right)^2, & n\ge4\text{ even},\\[12pt]
\dfrac12\left(1+\dfrac1{n-1}\right)^2, & n\ge3\text{ odd}.
\end{cases}
\]
Thus $C_{\BM}^{\mathrm{fid}}(n)\to1/2$ as $n\to\infty$, matching the leading Hoeffding coefficient of an ideal direct measurement of the projector $\proj{G_n}$.

\section{Certification protocol for a linear nearest neighbor architecture}
\label{sec:open-line-certification}

The near-projective behavior of BM-Cert comes from sampling matchings on the complete graph of qubits. In this section we analyze what remains possible when the hardware graph is instead the open path
\[
1-2-\cdots-n ,
\]
which is the standard linear nearest neighbor architecture \cite{saeedi2011synthesis, shafaei2013optimization, wille2014exact}.

On each copy, the verifier may partition \([n]\) into connected blocks of size one or two along the path, perform an arbitrary POVM on each block, and then accept or reject using arbitrary classical postprocessing of all outcomes. Thus every two-qubit block must be an adjacent pair \(\{j,j+1\}\), and the two-qubit blocks used in a single round form a matching of the path. As before, a randomized strategy is represented by its verification operator \(\Omega=\sum_s p_sE_s\), where each \(E_s\) is the pass effect of one such line-local disjoint 2-local test. More explicitly, if a setting \(s\) uses a block partition \(\mathcal P_s\), block POVMs \(\{F^{(b)}_{y_b}\}_{y_b}\) for \(b\in\mathcal P_s\), and an acceptance function \(f_s(y)\in[0,1]\), then
$
E_s=
\sum_y f_s(y)\bigotimes_{b\in\mathcal P_s}F^{(b)}_{y_b}.
$

For this open-boundary model define
\[
\beta_{\mathrm{line}}^*(n)
:=
\inf \bigl\{\beta(\Omega):\Omega \textnormal{ is achievable in the above model and }\Omega\ket{G_n}=\ket{G_n}\bigr\},
\]
and \(\nu_{\mathrm{line}}^*(n):=1-\beta_{\mathrm{line}}^*(n)\). We prove that, for every \(n\ge3\),
\[
\beta_{\mathrm{line}}^*(n)=\frac15,
\qquad
\nu_{\mathrm{line}}^*(n)=\frac45 .
\]
Thus open-boundary linear connectivity cannot reproduce the \(1-O(1/n)\) gap of all-to-all Bell matching, although it still allows a strictly larger gap than the local Pauli GHZ verification gap \(2/3\).

\begin{figure}
    \centering
    \includegraphics[width=0.69\linewidth]{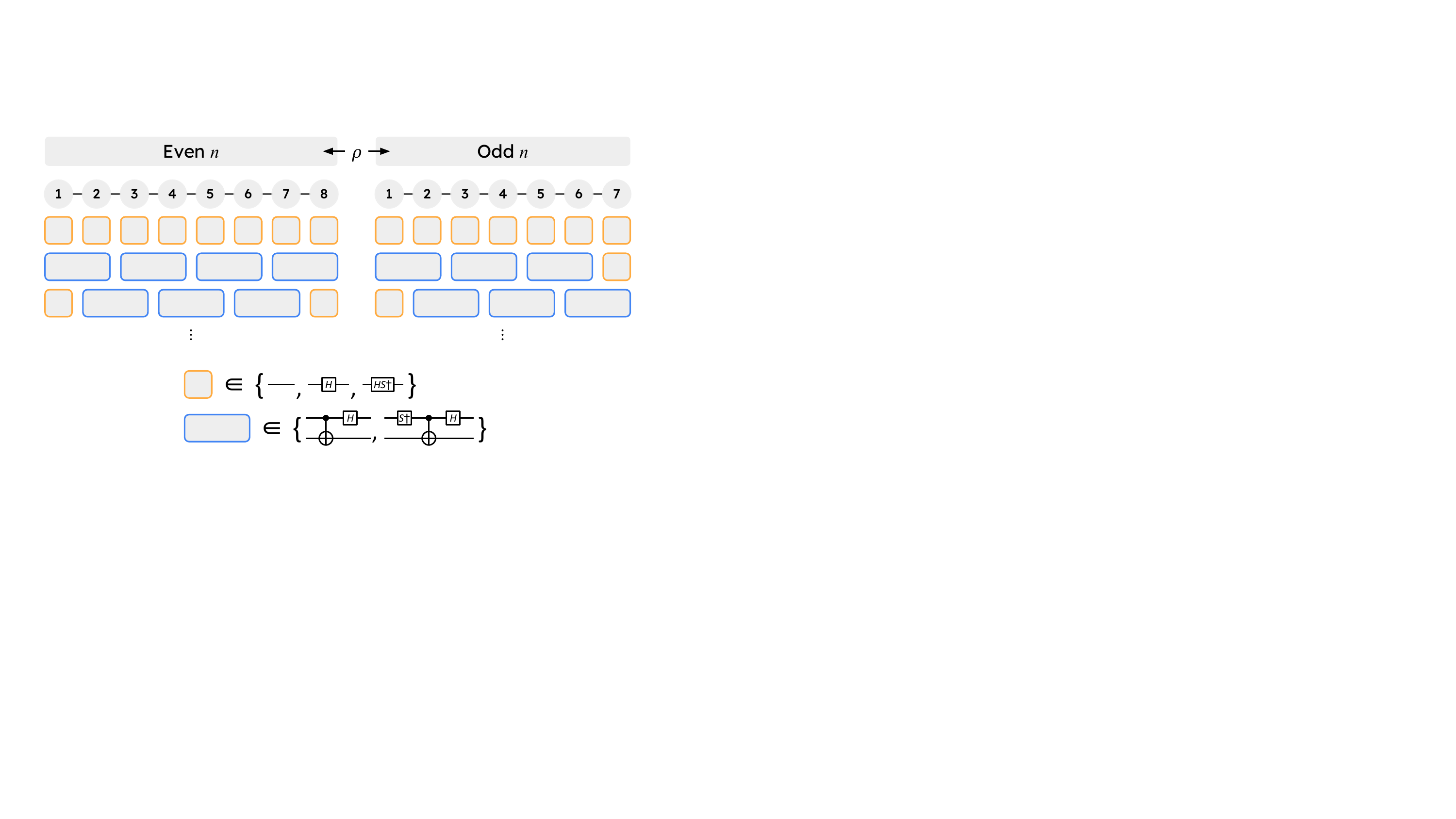}
    \caption{Measurement setups in the Brick-Cert protocol.}
    \label{fig:brick-cert-measurements}
\end{figure}

\subsection{A line-local protocol with spectral gap \(4/5\)}

Let
\begin{align}
\label{eq:odd-even-brick}
M_{\mathrm{odd}}
:=
\bigl\{\{1,2\},\{3,4\},\{5,6\},\ldots\bigr\},\nonumber\\
M_{\mathrm{even}}
:=
\bigl\{\{2,3\},\{4,5\},\{6,7\},\ldots\bigr\}
\end{align}
be the two brickwork matchings of the open path, with only pairs contained in \([n]\) included. For any path matching \(M\), define
\[
C_M:=\prod_{e\in M}P^Z_e,
\qquad
P^Z_{\{i,j\}}=\frac{\id+Z_iZ_j}{2},
\]
and define the two-dimensional population projector and the GHZ coherence operator by
\[
P_{\mathrm{eq}}:=\proj{0^n}+\proj{1^n},
\qquad
\Gamma_n:=\ket{0^n}\!\bra{1^n}+\ket{1^n}\!\bra{0^n}.
\]
Here \(P_{\mathrm{eq}}\) is implemented by measuring every qubit in the computational basis and accepting exactly the two outcomes \(0^n\) and \(1^n\).
We shall also use the Pauli matrix
$
Y=(\begin{smallmatrix}0&-i\\ i&0\end{smallmatrix}).
$

\begin{algorithm}[!htbp]
\caption{\justifying Brick-Cert for \(\ket{G_n}\), with \(n\ge3\)}
\label{alg:brick-cert}
\begin{algorithmic}[1]
\Require Infidelity threshold \(\eps\in(0,1)\) and significance level \(\delta\in(0,1)\)

\State Set
$
\nu_{\mathrm{brick}}=4/5
$
and
\[
N=\left\lceil\frac{\ln(1/\delta)}{-\ln(1-\nu_{\mathrm{brick}}\eps)}\right\rceil.
\]

\State Define the two brickwork matchings
$
M_{\mathrm{odd}}
$
and
$
M_{\mathrm{even}}
$
as in Eq.~\eqref{eq:odd-even-brick}.

\For{each round \(t=1,\ldots,N\)}
    \State Sample \(T_t\in\{\mathrm{pop},\mathrm{odd},\mathrm{even}\}\) with probabilities \(1/5,2/5,2/5\)

    \If{\(T_t=\mathrm{pop}\)}
        \State Measure every qubit in the computational basis
        \State Record the outcome \(a=(a_1,\ldots,a_n)\in\{0,1\}^n\)
        \State Accept this round if and only if \(a=0^n\) or \(a=1^n\)
    \Else
        \State Set \(M_t=M_{\mathrm{odd}}\) if \(T_t=\mathrm{odd}\), and \(M_t=M_{\mathrm{even}}\) if \(T_t=\mathrm{even}\)
        \State Let \(\mathcal B_t\) be the partition with pairs \(M_t\) and all unmatched singletons
        \State Sample a uniformly random even subset \(S_t\subseteq\mathcal B_t\)

        \For{every pair block \(b=\{i,i+1\}\in M_t\)}
            \If{\(b\in S_t\)}
                \State Jointly measure \(Z_iZ_{i+1}\) and \(Y_iX_{i+1}\)
            \Else
                \State Jointly measure \(Z_iZ_{i+1}\) and \(X_iX_{i+1}\)
            \EndIf
            \State Record outcomes \(z_b,h_b\in\{+1,-1\}\)
        \EndFor

        \For{every singleton block \(b=\{i\}\in\mathcal B_t\)}
            \If{\(b\in S_t\)}
                \State Measure \(Y_i\) and record its outcome \(h_b\in\{+1,-1\}\)
            \Else
                \State Measure \(X_i\) and record its outcome \(h_b\in\{+1,-1\}\)
            \EndIf
        \EndFor
\algstore{brickcert}
\end{algorithmic}
\end{algorithm}
\begin{algorithm}[!htbp]
\ContinuedFloat
\caption{\justifying Brick-Cert for \(\ket{G_n}\), with \(n\ge3\) \emph{(continued)}}
\begin{algorithmic}[1]
\algrestore{brickcert}
        \State Accept this round if and only if
        \[
        z_b=+1 \qquad \text{for all } b\in M_t
        \]
        \State and
        \[
        (-1)^{|S_t|/2}\prod_{b\in\mathcal B_t}h_b=+1.
        \]
    \EndIf
\EndFor

\Statex \textbf{Final decision:} Accept the preparation only if every one of the \(N\) rounds accepts
\end{algorithmic}
\end{algorithm}

\begin{lemma}[Line-local logical phase test]
\label{lem:line-logical-phase-test}
Let \(M\) be any matching of the open path. There is a line-local disjoint 2-local test with perfect completeness for \(\ket{G_n}\) whose pass effect is
\[
\Phi_M=\frac12 C_M+\frac12\Gamma_n .
\]
\end{lemma}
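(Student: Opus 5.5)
The test I would use is exactly the non-pop branch of Brick-Cert (Algorithm~\ref{alg:brick-cert}) for the given matching \(M\). Let \(\mathcal B\) be the partition consisting of the pairs of \(M\) together with all unmatched singletons. The plan is to sample a uniformly random even subset \(S\subseteq\mathcal B\). For each block \(b\), set \(W_b\) to be \(Y_iX_{i+1}\) or \(Y_i\) if \(b\in S\), and \(X_iX_{i+1}\) or \(X_i\) otherwise. We then jointly measure \(Z_iZ_{i+1}\) and \(W_b\) on pair blocks, and \(W_b\) on singletons, accepting when all \(z_b=+1\) and \((-1)^{|S|/2}\prod_b h_b=+1\).

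\textbf{Step 1: measurability.} \(Z_iZ_{i+1}\) commutes with \(Y_iX_{i+1}\), since the two operators anticommute on each of the two sites. Hence each pair measurement is a legitimate two-outcome-pair projective measurement on an adjacent block.

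\textbf{Step 2: effect for fixed \(S\).} All measured observables commute, so the pass effect is
\[
E_S=C_M\,\frac{\id+(-1)^{|S|/2}W_S}{2},
\qquad W_S:=\prod_b W_b .
\]
The operator \(W_S\) is the global Pauli with \(Y\) on one qubit of each block in \(S\) and \(X\) everywhere else. Writing \(Y=iXZ\), we get \(W_S=i^{|S|}X^{\otimes n}Z_{U_S}\), where \(U_S\) is the set of \(Y\)-positions and \(|U_S|=|S|\). Since \(|S|\) is even, \(i^{|S|}=(-1)^{|S|/2}\), so the sign cancels and
\[
(-1)^{|S|/2}W_S=X^{\otimes n}Z_{U_S}.
\]

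\textbf{Step 3: completeness.} \(X^{\otimes n}Z_U\) with \(|U|\) even fixes \(\ket{G_n}\), and so does \(C_M\). Hence every \(E_S\) fixes \(\ket{G_n}\).

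\textbf{Step 4: averaging.} Averaging over \(S\) gives
\[
\Phi_M=\tfrac12 C_M+\tfrac12 C_M\,X^{\otimes n}\,\mathbb E_S Z_{U_S}.
\]
The key claim, and the main obstacle, is that \(C_M X^{\otimes n}\mathbb E_S Z_{U_S}=\Gamma_n\). Because \(U_S\) picks exactly one qubit per block of \(S\), we can use \(C_M\) to replace each \(Z_i\) by \(Z_{b}\) for any qubit of the block. Both \(X^{\otimes n}\) and \(Z_{U_S}\) commute with \(C_M\), since every relevant product has even overlap with each pair.

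Now work in the computational basis. Let \(\ket{a}\) lie in the range of \(C_M\), so that \(a\) is constant on blocks. Write the block values as \(c\in\{0,1\}^{|\mathcal B|}\). Then \(Z_{U_S}\ket a=(-1)^{\sum_{b\in S}c_b}\ket a\), and
\[
\mathbb E_{S\ \mathrm{even}}(-1)^{c\cdot S}
=\begin{cases}1,& c\in\{0,\mathbf 1\},\\ 0,&\text{otherwise},\end{cases}
\]
by the standard character sum over the even-weight code, whose dual is \(\{0,\mathbf 1\}\). So \(C_M\mathbb E_S Z_{U_S}=\proj{0^n}+\proj{1^n}=P_{\mathrm{eq}}\). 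Multiplying by \(X^{\otimes n}\) gives \(X^{\otimes n}P_{\mathrm{eq}}=\Gamma_n\), which yields \(\Phi_M=\tfrac12C_M+\tfrac12\Gamma_n\).

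Finally, \(\Phi_M\) is a convex combination of projectors, so \(0\le\Phi_M\le\id\). The part I would check most carefully is the sign bookkeeping in \(i^{|S|}\) and the operator ordering \(Y=iXZ\); any global phase mismatch there would flip the sign of the \(\Gamma_n\) term.
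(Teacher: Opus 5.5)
Your proposal is correct and is essentially the paper's proof: the fixed-$S$ pass effect $C_M(\id+H_S)/2$, the identification of the signed parity observable with $X^{\otimes n}Z_{U_S}$ on the range of $C_M$, and the even-subset character sum collapsing to $\proj{0^n}+\proj{1^n}$, which $X^{\otimes n}$ turns into $\Gamma_n$. The only difference is cosmetic. The paper phrases the middle step with logical Paulis on the range of $C_M$, while you work with physical Paulis via $Y=iXZ$, and your sign check $i^{|S|}(-1)^{|S|/2}=1$ is right.
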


\begin{proof}
The matching \(M\) partitions the path into pair blocks \(e\in M\) and singleton blocks consisting of the unmatched vertices. Denote this block partition by \(\mathcal B(M)\), and let \(m:=|\mathcal B(M)|\). We define a logical qubit on each block.

For a singleton block \(b=\{i\}\), set
\[
\ket{0_b}:=\ket{0_i},
\qquad
\ket{1_b}:=\ket{1_i},
\qquad
X_b:=X_i,
\qquad
Y_b:=Y_i .
\]
For a pair block \(b=\{i,i+1\}\in M\), set
\[
\ket{0_b}:=\ket{0_i0_{i+1}},
\qquad
\ket{1_b}:=\ket{1_i1_{i+1}},
\qquad
X_b:=X_iX_{i+1},
\qquad
Y_b:=Y_iX_{i+1}.
\]
The code space of the pair block is the \(+1\) eigenspace of \(Z_iZ_{i+1}\), namely \(\operatorname{span}\{\ket{0_b},\ket{1_b}\}\). The two operators \(X_b\) and \(Y_b\) commute with \(Z_iZ_{i+1}\), since each contains an even number of single-qubit factors that anticommute with the corresponding \(Z\). On this two-dimensional code space they act as the usual logical Pauli matrices:
\[
X_b\ket{0_b}=\ket{1_b},
\quad
X_b\ket{1_b}=\ket{0_b},
\]
and
\[
Y_b\ket{0_b}=i\ket{1_b},
\quad
Y_b\ket{1_b}=-i\ket{0_b}.
\]
The same statements are immediate for singleton blocks with the ordinary single-qubit Pauli operators.

For each even subset \(S\subseteq\mathcal B(M)\), define the blockwise observable
\[
H_S:=(-1)^{|S|/2}
\prod_{b\in S}Y_b
\prod_{b\in\mathcal B(M)\setminus S}X_b .
\]
All factors act on disjoint blocks, and \(H_S\) commutes with every pair-code projector in \(C_M\). A valid line-local implementation is as follows. Sample an even subset \(S\subseteq\mathcal B(M)\) uniformly. On each pair block \(b=\{i,i+1\}\), jointly measure \(Z_iZ_{i+1}\) and either \(Y_b\), if \(b\in S\), or \(X_b\), if \(b\notin S\). On each singleton block, measure either \(Y_b\) or \(X_b\) according to the same rule. Accept if and only if all pair-code outcomes are \(+1\) and the measured eigenvalues have signed product \(+1\), with the sign \((-1)^{|S|/2}\). For this fixed \(S\), the pass effect is therefore
\[
C_M\frac{\id+H_S}{2}.
\]
Averaging over the \(2^{m-1}\) even subsets gives
\[
\Phi_M
=
\frac1{2^{m-1}}\sum_{S\subseteq\mathcal B(M):\ |S|\ \mathrm{even}}
C_M\frac{\id+H_S}{2}
=
\frac12 C_M+\frac12 C_M\overline H,
\]
where
\[
\overline H:=
\frac1{2^{m-1}}\sum_{S\subseteq\mathcal B(M):\ |S|\ \mathrm{even}}H_S.
\]
It remains to identify \(C_M\overline H\).

Choose an ordering of the blocks and write
\[
\ket{x}_L:=\bigotimes_{r=1}^m\ket{x_r}_{b_r},
\qquad
x=(x_1,\ldots,x_m)\in\{0,1\}^m,
\]
for the logical computational basis of the range of \(C_M\). On this code space, the block operators are ordinary logical Pauli operators. Since \(Y=iXZ\), for even \(S\) we have, on the code space,
\[
H_S=X_L^{\otimes m}\prod_{b_r\in S}Z_{L,r}.
\]
Hence
\[
\overline H
=
X_L^{\otimes m}
\left(
\frac1{2^{m-1}}\sum_{S:\ |S|\ \mathrm{even}}
\prod_{b_r\in S}Z_{L,r}
\right)
\]
on the code space. The operator in parentheses is diagonal in the logical computational basis. Its eigenvalue on \(\ket{x}_L\) is
\[
\frac1{2^{m-1}}\sum_{S:\ |S|\ \mathrm{even}}(-1)^{\sum_{b_r\in S}x_r}.
\]
This average is equal to \(1\) when \(x=0^m\) or \(x=1^m\), and equal to \(0\) otherwise. Therefore
\[
\frac1{2^{m-1}}\sum_{S:\ |S|\ \mathrm{even}}
\prod_{b_r\in S}Z_{L,r}
=
\proj{0_L^m}+\proj{1_L^m},
\]
and so
\[
\overline H
=
\ket{1_L^m}\!\bra{0_L^m}+\ket{0_L^m}\!\bra{1_L^m}
\]
on the code space. The logical all-zero and all-one vectors are the physical vectors \(\ket{0^n}\) and \(\ket{1^n}\). Consequently
\[
C_M\overline H
=
\ket{0^n}\!\bra{1^n}+\ket{1^n}\!\bra{0^n}
=
\Gamma_n .
\]
Substituting this into the averaged pass effect gives \(\Phi_M=\frac12C_M+\frac12\Gamma_n\). Since \(C_M\ket{G_n}=\ket{G_n}\) and \(\Gamma_n\ket{G_n}=\ket{G_n}\), the test has perfect completeness.
\end{proof}

Define the open-line verification operator
\[
\Omega_{\mathrm{line}}
:=
\frac15P_{\mathrm{eq}}
+
\frac25\Phi_{M_{\mathrm{odd}}}
+
\frac25\Phi_{M_{\mathrm{even}}}.
\]
Equivalently, using Lemma~\ref{lem:line-logical-phase-test},
\[
\Omega_{\mathrm{line}}
=
\frac15P_{\mathrm{eq}}
+
\frac15C_{M_{\mathrm{odd}}}
+
\frac15C_{M_{\mathrm{even}}}
+
\frac25\Gamma_n .
\]
Operationally, the verifier performs the population test with probability \(1/5\), the logical phase test for \(M_{\mathrm{odd}}\) with probability \(2/5\), and the logical phase test for \(M_{\mathrm{even}}\) with probability \(2/5\).
We call this open-boundary line-local certification protocol Brick-Cert. The measurements for both even and odd $n$ are visualized in Figure~\ref{fig:brick-cert-measurements}, and the full verification protocol is detailed in Algorithm~\ref{alg:brick-cert}.

\begin{proposition}[Spectrum of the open-line protocol]
\label{prop:line-protocol-spectrum}
For every \(n\ge3\), the verification operator \(\Omega_{\mathrm{line}}\) has perfect completeness for \(\ket{G_n}\). Its second eigenvalue and spectral gap are
\[
\beta(\Omega_{\mathrm{line}})=\frac15,
\qquad
\nu(\Omega_{\mathrm{line}})=\frac45 .
\]
\end{proposition}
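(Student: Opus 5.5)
We use the explicit decomposition
\[
\Omega_{\mathrm{line}}=\tfrac15P_{\mathrm{eq}}+\tfrac15C_{M_{\mathrm{odd}}}+\tfrac15C_{M_{\mathrm{even}}}+\tfrac25\Gamma_n
\]
and diagonalize it in the GHZ basis $\ket{a,\tau}$ of Lemma~\ref{lem:ghz-basis-eigen}. Perfect completeness is immediate: each of $P_{\mathrm{eq}}$, $C_{M_{\mathrm{odd}}}$, $C_{M_{\mathrm{even}}}$ and $\Gamma_n$ fixes $\ket{G_n}$ (for the middle two this is Lemma~\ref{lem:line-logical-phase-test}, or directly from $Z_iZ_j\ket{G_n}=\ket{G_n}$). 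The weights sum to $1$, so $\Omega_{\mathrm{line}}\ket{G_n}=\ket{G_n}$.

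For the spectrum, every term is diagonal in the GHZ basis. By Lemma~\ref{lem:ghz-basis-eigen}, $C_M\ket{a,\tau}=c_M(a)\ket{a,\tau}$, where $c_M(a)=1$ if every pair of $M$ has equal $a$-bits and $c_M(a)=0$ otherwise. $P_{\mathrm{eq}}$ acts as the identity on $\ket{0^n,\pm}$ and as $0$ on all other basis vectors. $\Gamma_n$ is $X^{\otimes n}$ restricted to $\operatorname{span}\{\ket{0^n},\ket{1^n}\}$, so it has eigenvalue $\tau$ on $\ket{0^n,\tau}$ and vanishes elsewhere. This gives three cases.
\begin{itemize}[leftmargin=2em]
\item The target $\ket{0^n,+}$ has eigenvalue $\tfrac15+\tfrac15+\tfrac15+\tfrac25=1$.
\item The vector $\ket{0^n,-}$ has eigenvalue $\tfrac15+\tfrac15+\tfrac15-\tfrac25=\tfrac15$.
\item Every $\ket{a,\tau}$ with $a\notin\{0^n,1^n\}$ has eigenvalue $\tfrac15\bigl(c_{M_{\mathrm{odd}}}(a)+c_{M_{\mathrm{even}}}(a)\bigr)$.
\end{itemize}

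The key combinatorial step is that $c_{M_{\mathrm{odd}}}(a)$ and $c_{M_{\mathrm{even}}}(a)$ cannot both equal $1$ for nonconstant $a$. The union $M_{\mathrm{odd}}\cup M_{\mathrm{even}}$ is the whole edge set $\{\{j,j+1\}:1\le j\le n-1\}$ of the path. If both factors were $1$, then $a_j=a_{j+1}$ for all $j$, so $a$ would be constant. Hence every such eigenvalue is at most $\tfrac15$. The value $\tfrac15$ is attained, for example, by $\ket{0^n,-}$, or by $a$ with a single cut crossing an edge of $M_{\mathrm{even}}$ (which exists since $n\ge3$). Therefore $\beta(\Omega_{\mathrm{line}})=\tfrac15$ and $\nu=\tfrac45$.

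The only point needing care is confirming that $\Gamma_n$ and $P_{\mathrm{eq}}$ are diagonal in the chosen basis. This follows because $\ket{0^n,\pm}$ spans exactly their support, so none of the steps above is a real obstacle.
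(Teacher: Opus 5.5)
Your proposal is correct and follows essentially the same route as the paper. Both diagonalize $\Omega_{\mathrm{line}}=\frac15P_{\mathrm{eq}}+\frac15C_{M_{\mathrm{odd}}}+\frac15C_{M_{\mathrm{even}}}+\frac25\Gamma_n$ in the GHZ basis $\ket{a,\tau}$, read off eigenvalue $1$ on $\ket{G_n}$ and $\frac15$ on $\ket{0^n,-}$, and use that the two brickwork layers cover every path edge, so at most one of $C_{M_{\mathrm{odd}}}$, $C_{M_{\mathrm{even}}}$ accepts a nonconstant $a$; you state this as an upper bound, while the paper splits explicitly into eigenvalue $0$ or $\frac15$ according to whether $\partial a$ meets both layers or only one.
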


\begin{proof}
Recall the GHZ-basis vectors
$
\ket{a,\tau}
$
in Eq.~\eqref{eq:ghz-basis-def}.
These vectors form an orthonormal basis.
Also recall Eq.~\eqref{eq:apply-zz-to-ghz-basis}.
For a bit string \(a\), define its open-path boundary by
\[
\partial a:=\{j\in\{1,\ldots,n-1\}:a_j\ne a_{j+1}\}.
\]
If \(M\) is a path matching, then
\[
C_M\ket{a,\tau}
=
\begin{cases}
\ket{a,\tau}, & \partial a\cap\{j:\{j,j+1\}\in M\}=\varnothing,\\[3pt]
0, & \textnormal{otherwise}.
\end{cases}
\]
Also,
\[
P_{\mathrm{eq}}\ket{0^n,\tau}=\ket{0^n,\tau},
\qquad
\Gamma_n\ket{0^n,\tau}=\tau\ket{0^n,\tau},
\]
while \(P_{\mathrm{eq}}\ket{a,\tau}=0\) and \(\Gamma_n\ket{a,\tau}=0\) whenever \(a\) is nonconstant.

The target is \(\ket{G_n}=\ket{0^n,+}\). It is fixed by \(P_{\mathrm{eq}}\), both code projectors \(C_{M_{\mathrm{odd}}}\) and \(C_{M_{\mathrm{even}}}\), and \(\Gamma_n\). Therefore
\[
\Omega_{\mathrm{line}}\ket{G_n}=\ket{G_n}.
\]
For the phase-flipped GHZ vector \(\ket{0^n,-}\), the first three projectors still have eigenvalue \(1\), whereas \(\Gamma_n\) has eigenvalue \(-1\). Hence
\[
\Omega_{\mathrm{line}}\ket{0^n,-}
=
\left(\frac15+\frac15+\frac15-\frac25\right)\ket{0^n,-}
=
\frac15\ket{0^n,-}.
\]

Now let \(a\) be nonconstant. Then \(\partial a\ne\varnothing\), and the two brickwork matchings partition the path edges. If \(\partial a\) contains both an odd edge and an even edge, then both \(C_{M_{\mathrm{odd}}}\) and \(C_{M_{\mathrm{even}}}\) reject \(\ket{a,\tau}\), so the eigenvalue is \(0\). If \(\partial a\) is nonempty and contained entirely in one of the two brickwork layers, exactly one of the two code projectors accepts and the other rejects, so the eigenvalue is \(1/5\). Thus every GHZ-basis vector orthogonal to \(\ket{G_n}\) has eigenvalue either \(0\) or \(1/5\).
Therefore the largest eigenvalue on the subspace orthogonal to \(\ket{G_n}\) is \(1/5\), proving the claim.
\end{proof}

\begin{corollary}[Copy complexity of Brick-Cert]
\label{cor:sample-complexity-line}
Let \(\eps,\delta\in(0,1)\). Suppose Brick-Cert is run for
\[
N_{\mathrm{brick}}(\eps,\delta)=
\left\lceil
\frac{\ln(1/\delta)}{-\ln(1-\nu(\Omega_{\mathrm{line}})\eps)}
\right\rceil
\]
independent rounds, accepting only if all rounds pass. If the average infidelity of the tested copies with \(\ket{G_n}\) is at least \(\eps\), then the probability of accepting is at most \(\delta\). If every tested copy is exactly \(\ket{G_n}\), then the probability of accepting is \(1\).
\end{corollary}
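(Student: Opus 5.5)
This corollary follows by direct substitution, so the plan is to combine Lemma~\ref{lem:many-copy} with Proposition~\ref{prop:line-protocol-spectrum}, exactly as Corollary~\ref{cor:sample-complexity} followed from Theorem~\ref{thm:spectral-gap}. The first step is to record the structure of Brick-Cert: every round independently samples a setting from \(\{\mathrm{pop},\mathrm{odd},\mathrm{even}\}\) with probabilities \(1/5,2/5,2/5\), together with the internal uniform even subset \(S_t\). Averaged over this randomness, and using the pass effects \(P_{\mathrm{eq}}\), \(\Phi_{M_{\mathrm{odd}}}\) and \(\Phi_{M_{\mathrm{even}}}\) identified in Lemma~\ref{lem:line-logical-phase-test}, every round's pass effect is the same operator \(\Omega_{\mathrm{line}}\). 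The copies are measured independently, one per round.

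For soundness, I will apply Lemma~\ref{lem:many-copy} with \(\Omega=\Omega_{\mathrm{line}}\). By Proposition~\ref{prop:line-protocol-spectrum}, this operator has spectral gap \(\nu=4/5\). If the average infidelity of the copies is at least \(\eps\), the lemma bounds the probability that all \(N\) rounds pass by \((1-\nu\eps)^N\). Since \(0<\nu\eps<1\), the quantity \(-\ln(1-\nu\eps)\) is positive. The choice \(N\ge \ln(1/\delta)/(-\ln(1-\nu\eps))\) then gives \(N\ln(1-\nu\eps)\le\ln\delta\), so the acceptance probability is at most \(\delta\). The ceiling only increases \(N\) and therefore preserves the bound.

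For completeness, Proposition~\ref{prop:line-protocol-spectrum} gives \(\Omega_{\mathrm{line}}\ket{G_n}=\ket{G_n}\). By Lemma~\ref{lem:perfect-completeness-round-by-round}, or directly from the perfect completeness of each test in Lemma~\ref{lem:line-logical-phase-test} and of \(P_{\mathrm{eq}}\), each round then accepts \(\ket{G_n}\) with probability \(1\). Hence all \(N\) rounds accept with probability \(1\).

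I do not expect a real obstacle here. The only point needing care is that Lemma~\ref{lem:many-copy} is stated for a fixed verification operator applied to independent copies with possibly different states. Brick-Cert fits this setting because the per-round randomness over \(T_t\) and \(S_t\) is fresh and independent of the copies. Conditioning on the prepared states, the pass probability of round \(t\) is therefore \(\Tr(\Omega_{\mathrm{line}}\rho_t)\), which is what the lemma requires.
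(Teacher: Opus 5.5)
Your proposal is correct and matches the paper's route. The paper gives no separate proof: it obtains the corollary by substituting the spectral gap \(\nu(\Omega_{\mathrm{line}})=4/5\) and the perfect completeness \(\Omega_{\mathrm{line}}\ket{G_n}=\ket{G_n}\) from Proposition~\ref{prop:line-protocol-spectrum} into Lemma~\ref{lem:many-copy}, just as Corollary~\ref{cor:sample-complexity} follows from Theorem~\ref{thm:spectral-gap}. Your added details are accurate: each Brick-Cert round realizes the same averaged pass effect \(\Omega_{\mathrm{line}}\), and \(N\bigl(-\ln(1-\nu\eps)\bigr)\ge\ln(1/\delta)\) gives \((1-\nu\eps)^N\le\delta\).
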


\subsection{Optimality of Brick-Cert}

We now prove that no line-local disjoint 2-local certification strategy can have spectral gap larger than \(4/5\). Define
\[
\ket{G_n^-}:=\frac{\ket{0^n}-\ket{1^n}}{\sqrt2}.
\]
For each cut edge \(j\in\{1,\ldots,n-1\}\), define the two domain-wall GHZ states
\[
\ket{D_j^+}:=\frac{\ket{0^j1^{n-j}}+\ket{1^j0^{n-j}}}{\sqrt2},
\qquad
\ket{D_j^-}:=\frac{\ket{0^j1^{n-j}}-\ket{1^j0^{n-j}}}{\sqrt2}.
\]
The four states \(\ket{G_n}\), \(\ket{G_n^-}\), \(\ket{D_j^+}\), and \(\ket{D_j^-}\) are the Bell basis of the two-qubit logical subspace across the bipartition \(\{1,\ldots,j\}\mid\{j+1,\ldots,n\}\) generated by the all-zero and all-one strings on each side.

\begin{lemma}
\label{lem:line-missed-cut-inequality}
Fix \(j\in\{1,\ldots,n-1\}\). Let \(E\) be a positive semidefinite operator satisfying \(0\le E\le\id\) and \(E\ket{G_n}=\ket{G_n}\). If \(E\) is separable across the bipartition
\[
\{1,\ldots,j\}\mid\{j+1,\ldots,n\},
\]
then
\[
\bra{G_n^-}E\ket{G_n^-}
+
\bra{D_j^+}E\ket{D_j^+}
+
\bra{D_j^-}E\ket{D_j^-}
\ge1 .
\]
\end{lemma}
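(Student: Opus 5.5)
The plan is to reduce the statement to a two-qubit problem by compressing \(E\) onto the logical subspace across the cut, and then reuse the Schmidt-coefficient trace argument of Lemma~\ref{lem:ghz3-trace-lower-bound}.

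Let \(L=\{1,\ldots,j\}\) and \(R=\{j+1,\ldots,n\}\). Define the isometries \(V_L:\mathbb C^2\to(\mathbb C^2)^{\otimes j}\) and \(V_R:\mathbb C^2\to(\mathbb C^2)^{\otimes(n-j)}\) by
\[
V_L\ket x=\ket{x^j},\qquad V_R\ket y=\ket{y^{n-j}},\qquad x,y\in\{0,1\}.
\]
Put \(V:=V_L\otimes V_R\), which is an isometry of product form across \(L|R\). Under \(V\) the two-qubit Bell states map as follows:
\[
V\ket{\Phi^+}=\ket{G_n},\quad V\ket{\Phi^-}=\ket{G_n^-},\quad V\ket{\Psi^+}=\ket{D_j^+},\quad V\ket{\Psi^-}=\ket{D_j^-}.
\]
Now set \(F:=V^\dagger EV\), an operator on two qubits.

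First, taking matrix elements of \(0\le E\le\id\) through the isometry gives \(0\le F\le\id\). Second, since \(E\ket{G_n}=\ket{G_n}\), we get \(\bra{\Phi^+}F\ket{\Phi^+}=\bra{G_n}E\ket{G_n}=1\). Third, \(F\) is separable on \(\mathbb C^2\otimes\mathbb C^2\). To see this, write \(E=\sum_\ell c_\ell A_\ell\otimes B_\ell\) with \(c_\ell\ge0\) and \(A_\ell,B_\ell\ge0\) acting on \(L\) and \(R\). Because \(V\) is a product isometry, the compression is
\[
F=\sum_\ell c_\ell\,(V_L^\dagger A_\ell V_L)\otimes(V_R^\dagger B_\ell V_R),
\]
which is again a conic combination of positive product operators. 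Refining each factor spectrally, \(F\) becomes a conic combination \(F=\sum_k d_k\proj{u_k}\otimes\proj{w_k}\) of rank-one product projectors, with \(d_k\ge0\).

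Next, the Bell basis gives
\[
\Tr F=\bra{\Phi^+}F\ket{\Phi^+}+\bra{\Phi^-}F\ket{\Phi^-}+\bra{\Psi^+}F\ket{\Psi^+}+\bra{\Psi^-}F\ket{\Psi^-}.
\]
Pulling back through \(V\), this equals \(1\) plus exactly the left-hand side of the claimed inequality. So it suffices to show \(\Tr F\ge2\).

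For this I follow the proof of Lemma~\ref{lem:ghz3-trace-lower-bound}. The state \(\ket{\Phi^+}\) has Schmidt coefficients \(1/\sqrt2,1/\sqrt2\), so every product unit vector satisfies \(|(\bra u\otimes\bra w)\ket{\Phi^+}|^2\le\tfrac12\). Therefore
\[
1=\bra{\Phi^+}F\ket{\Phi^+}\le\tfrac12\sum_k d_k=\tfrac12\Tr F,
\]
which gives \(\Tr F\ge2\) and hence the inequality.

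The only step needing care is the identification of the four lifted vectors with the Bell basis, together with checking that compression by a product isometry preserves separability. Both are routine. No use of the upper bound \(E\le\id\) beyond positivity is actually needed for the inequality itself.
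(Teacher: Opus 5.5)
Your proof is correct and is essentially the paper's argument. The paper also decomposes $E$ into rank-one product terms and applies Cauchy--Schwarz to the coefficients $\alpha_0,\alpha_1,\beta_0,\beta_1$, which are exactly the components of $V_L^\dagger\ket\alpha$ and $V_R^\dagger\ket\beta$, to get $2\bra{G_n}E\ket{G_n}\le{}$ the total weight on the four logical Bell states; this is your bound $\bra{\Phi^+}F\ket{\Phi^+}\le\tfrac12\Tr F$, stated before rather than after compressing to two qubits, and in both versions only the positivity of $E$ is used.
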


\begin{proof}
Write \(L=\{1,\ldots,j\}\) and \(R=\{j+1,\ldots,n\}\). Since \(E\) is separable across \(L\mid R\), it can be written as a finite conic combination of rank-one product operators,
\[
E=\sum_\ell c_\ell
\proj{\alpha_\ell}_L\otimes\proj{\beta_\ell}_R,
\qquad
c_\ell\ge0,
\]
where the vectors \(\ket{\alpha_\ell}\) and \(\ket{\beta_\ell}\) need not be normalized. It is therefore enough to prove the corresponding inequality term by term for an arbitrary product vector \(\ket\alpha_L\otimes\ket\beta_R\).

Let
\[
\alpha_0:=\braket{0^j}{\alpha},
\quad
\alpha_1:=\braket{1^j}{\alpha},
\quad
\beta_0:=\braket{0^{n-j}}{\beta},
\quad
\beta_1:=\braket{1^{n-j}}{\beta}.
\]
The four states \(\ket{G_n}\), \(\ket{G_n^-}\), \(\ket{D_j^+}\), and \(\ket{D_j^-}\) form an orthonormal basis of
\[
\operatorname{span}\bigl\{
\ket{0^j0^{n-j}},
\ket{1^j1^{n-j}},
\ket{0^j1^{n-j}},
\ket{1^j0^{n-j}}
\bigr\}.
\]
Hence the total squared overlap of \(\ket\alpha_L\ket\beta_R\) with these four states is
\[
(|\alpha_0|^2+|\alpha_1|^2)(|\beta_0|^2+|\beta_1|^2).
\]
On the other hand,
\[
2\bigl|\bra{G_n}(\ket\alpha_L\ket\beta_R)\bigr|^2
=
\bigl|\alpha_0\beta_0+\alpha_1\beta_1\bigr|^2
\le
(|\alpha_0|^2+|\alpha_1|^2)(|\beta_0|^2+|\beta_1|^2)
\]
by the Cauchy--Schwarz inequality. Therefore, for every rank-one product operator \(F=\proj\alpha_L\otimes\proj\beta_R\),
\[
2\bra{G_n}F\ket{G_n}
\le
\bra{G_n}F\ket{G_n}
+
\bra{G_n^-}F\ket{G_n^-}
+
\bra{D_j^+}F\ket{D_j^+}
+
\bra{D_j^-}F\ket{D_j^-}.
\]
Multiplying by \(c_\ell\) and summing over \(\ell\) gives
\[
2\bra{G_n}E\ket{G_n}
\le
\bra{G_n}E\ket{G_n}
+
\bra{G_n^-}E\ket{G_n^-}
+
\bra{D_j^+}E\ket{D_j^+}
+
\bra{D_j^-}E\ket{D_j^-}.
\]
Finally, \(E\ket{G_n}=\ket{G_n}\) implies \(\bra{G_n}E\ket{G_n}=1\). Cancelling one copy of this term gives the desired inequality.
\end{proof}

\begin{theorem}
\label{thm:optimal-open-line-certification}
For every \(n\ge3\), the optimal second eigenvalue and spectral gap among all open-boundary line-local disjoint 2-local certification strategies with perfect completeness for \(\ket{G_n}\) are
\[
\beta_{\mathrm{line}}^*(n)=\frac15,
\qquad
\nu_{\mathrm{line}}^*(n)=\frac45 .
\]
\end{theorem}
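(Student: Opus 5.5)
Achievability is already settled: Proposition~\ref{prop:line-protocol-spectrum} shows that Brick-Cert attains \(\beta=1/5\), so \(\beta_{\mathrm{line}}^*(n)\le 1/5\). The work is the matching lower bound \(\beta(\Omega)\ge1/5\) for every admissible \(\Omega=\sum_s p_sE_s\) with perfect completeness. I will prove it by exhibiting a small set of unit test vectors orthogonal to \(\ket{G_n}\). The candidates are \(\ket{G_n^-}\) and \(\ket{D_j^\pm}\) for \(j=1,\ldots,n-1\). All of these are GHZ-basis vectors different from \(\ket{G_n}\), so they are orthonormal and orthogonal to the target. It then suffices to show that a weighted average of their \(\Omega\)-expectations is at least \(1/5\).

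\textbf{Step 1 (round-by-round separability).} By Lemma~\ref{lem:perfect-completeness-round-by-round}, each round \(E_s\) with \(p_s>0\) satisfies \(E_s\ket{G_n}=\ket{G_n}\). A line-local round uses blocks that are adjacent pairs forming a matching \(M_s\) of the path. For any edge \(j\) with \(\{j,j+1\}\notin M_s\), no block straddles the cut \(\{1,\ldots,j\}\mid\{j+1,\ldots,n\}\). Hence \(E_s\) is a conic combination of tensor products over blocks that each lie on one side, and so \(E_s\) is separable across that cut.

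\textbf{Step 2 (per-round inequality).} Lemma~\ref{lem:line-missed-cut-inequality} then gives, for every edge \(j\) not in \(M_s\),
\[
g_s+d_{s,j}\ge1,\qquad g_s:=\bra{G_n^-}E_s\ket{G_n^-},\quad d_{s,j}:=\bra{D_j^+}E_s\ket{D_j^+}+\bra{D_j^-}E_s\ket{D_j^-}.
\]
Since \(M_s\) is a matching on a path, it cannot contain two consecutive edges \(j,j+1\). So for each pair of consecutive edges \((j,j+1)\), with \(1\le j\le n-2\), at least one of them is missed.

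\textbf{Step 3 (averaging, the main obstacle).} I will pick a weighting that turns the per-round inequalities into a bound of \(1/5\). The natural choice uses the edges \(j=1,2\), which exist since \(n\ge3\), and the five vectors \(\ket{G_n^-},\ket{D_1^\pm},\ket{D_2^\pm}\). Each round misses edge 1 or edge 2, so \(g_s+d_{s,1}+d_{s,2}\ge1\) with both \(d\) terms nonnegative. Worse, if a round misses both edges it gives \(2g_s+d_{s,1}+d_{s,2}\ge2\).

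I want \(2g_s+d_{s,1}+d_{s,2}\ge1\) for every round. When only one edge is missed, say edge 1, we have \(g_s+d_{s,1}\ge1\), and since \(g_s\ge0\) also \(2g_s+d_{s,1}\ge1\). This bound is tight for Brick-Cert, where \(g=1/5\) and the \(D\) values sum accordingly.

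The quantity \(2g+d_1+d_2\) is the expectation of \(\Omega\) in the weighted sum of five projectors \(2\proj{G_n^-}+\sum_{\pm}(\proj{D_1^\pm}+\proj{D_2^\pm})\), whose total weight is \(6\). This gives only \(\beta\ge1/6\), so I must refine the weighting. Write \(w\) for the weight on \(\proj{G_n^-}\) and set the \(D\)-vector weights to \(1\), so that \(\max\ge(\text{sum})/(w+4)\) and the lower bound per round is \(\min(w\,g+d_{\text{missed}})\).

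The cleaner route is to avoid the trace-of-\(\Omega\) averaging and instead work with the compressed operator on the five-dimensional span of \(\ket{G_n^-},\ket{D_1^\pm},\ket{D_2^\pm}\). There I would use the convex combination in which \(\ket{G_n^-}\) receives weight \(1/5\) and each \(D\) vector receives weight \(1/5\). The hard step is to show that each round contributes at least \(1/5\). That means checking \(g_s+d_{s,1}+d_{s,2}\ge1\), which follows because at least one edge is missed and all terms are nonnegative.

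With that check in hand, \(\tfrac15\bigl(g+d_1+d_2\bigr)\) averaged over rounds is at least \(1/5\). Since this quantity is an average of five expectation values of \(\Omega\) on unit vectors orthogonal to \(\ket{G_n}\), one of those expectations is at least \(1/5\). Therefore \(\beta(\Omega)\ge1/5\), and hence \(\nu(\Omega)\le4/5\).
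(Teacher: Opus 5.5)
Your final argument is correct and is essentially the paper's proof. It uses the same ingredients: round-by-round completeness, separability across a missed cut, Lemma~\ref{lem:line-missed-cut-inequality}, and the fact that edges $1$ and $2$ cannot both be matched. Your summed inequality $\bra{G_n^-}\Omega\ket{G_n^-}+\sum_{j=1,2}\sum_{\pm}\bra{D_j^\pm}\Omega\ket{D_j^\pm}\ge1$ compactly repackages the paper's choice of an edge $j_*$ with $r_{j_*}\le W/2$, followed by $\max\{U,(1-U)/4\}\ge1/5$. You can drop the exploratory detour with weight $2$ on $\ket{G_n^-}$, and the remark about avoiding trace averaging, because the uniform average over the five test vectors already closes the proof.
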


\begin{proof}
The upper bound \(\beta_{\mathrm{line}}^*(n)\le1/5\) follows from Proposition~\ref{prop:line-protocol-spectrum}. It remains to prove the converse.

Let \(\Omega=\sum_s p_sE_s\) be any open-boundary line-local disjoint 2-local verification operator with \(\Omega\ket{G_n}=\ket{G_n}\). By Lemma~\ref{lem:perfect-completeness-round-by-round}, every setting \(E_s\) used with positive probability satisfies \(E_s\ket{G_n}=\ket{G_n}\). For each setting \(s\), let \(M_s\subseteq\{1,\ldots,n-1\}\) be the set of path edges \(j\) such that the two-qubit block \(\{j,j+1\}\) is measured in that setting. Because the two-qubit blocks are disjoint, \(M_s\) represents a matching of the path.

For a cut edge \(j\), if \(j\notin M_s\), then no measured block crosses the bipartition \(\{1,\ldots,j\}\mid\{j+1,\ldots,n\}\). The pass effect \(E_s\) is then separable across this bipartition. Therefore Lemma~\ref{lem:line-missed-cut-inequality} gives, for every \(j\notin M_s\),
\[
\bra{D_j^+}E_s\ket{D_j^+}
+
\bra{D_j^-}E_s\ket{D_j^-}
\ge
1-u_s,
\]
where
\[
u_s:=\bra{G_n^-}E_s\ket{G_n^-}.
\]
Let
\[
U:=\bra{G_n^-}\Omega\ket{G_n^-}=\sum_s p_su_s.
\]
For each path edge \(j\), set
\[
V_{j,+}:=\bra{D_j^+}\Omega\ket{D_j^+},
\qquad
V_{j,-}:=\bra{D_j^-}\Omega\ket{D_j^-}.
\]
The preceding inequality implies
\[
V_{j,+}+V_{j,-}
\ge
\sum_{s:\ j\notin M_s}p_s(1-u_s).
\]
Now define nonnegative weights
\[
w_s:=p_s(1-u_s),
\qquad
W:=\sum_s w_s=1-U,
\]
and, for each edge \(j\),
\[
r_j:=\sum_{s:\ j\in M_s}w_s.
\]
Since every \(M_s\) represents a matching, adjacent path edges cannot both belong to \(M_s\). Thus for every \(j=1,\ldots,n-2\),
\[
r_j+r_{j+1}\le W .
\]
Because \(n\ge3\), the path has at least two adjacent edges, and hence at least one edge \(j_*\) satisfies \(r_{j_*}\le W/2\). For this edge,
\[
V_{j_*,+}+V_{j_*,-}
\ge
\sum_{s:\ j_*\notin M_s}w_s
=
W-r_{j_*}
\ge
\frac W2
=
\frac{1-U}{2}.
\]
Consequently at least one sign \(\sigma\in\{+,-\}\) satisfies
\[
V_{j_*,\sigma}\ge\frac{1-U}{4}.
\]
The states \(\ket{G_n^-}\) and \(\ket{D_{j_*}^{\sigma}}\) are both orthogonal to \(\ket{G_n}\). Since \(\Omega\) is positive semidefinite,
\[
\beta(\Omega)
\ge
\max\left\{
\bra{G_n^-}\Omega\ket{G_n^-},
\bra{D_{j_*}^{\sigma}}\Omega\ket{D_{j_*}^{\sigma}}
\right\}
\ge
\max\left\{U,\frac{1-U}{4}\right\}.
\]
If \(U\ge1/5\), this maximum is at least \(1/5\). If \(U\le1/5\), then \((1-U)/4\ge1/5\). Hence \(\beta(\Omega)\ge1/5\) for every admissible \(\Omega\), proving \(\beta_{\mathrm{line}}^*(n)\ge1/5\). Together with the construction, this gives \(\beta_{\mathrm{line}}^*(n)=1/5\), and therefore \(\nu_{\mathrm{line}}^*(n)=4/5\).
\end{proof}

\section{Experiments}

\begin{figure}[t]
    \centering

    \begin{subfigure}[b]{0.15\textwidth}
        \centering
        \includegraphics[width=\textwidth]{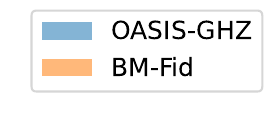}
        \label{fig:top_center}
    \end{subfigure}
    

    \begin{subfigure}[b]{0.32\textwidth}
        \centering
        \includegraphics[width=\textwidth]{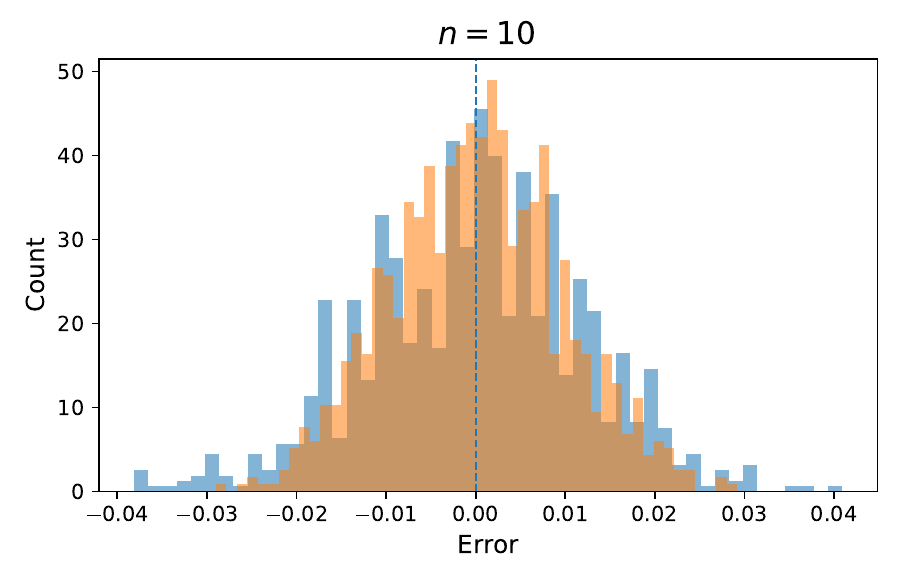}
        \label{fig:sub1}
    \end{subfigure}
    \hfill
    \begin{subfigure}[b]{0.32\textwidth}
        \centering
        \includegraphics[width=\textwidth]{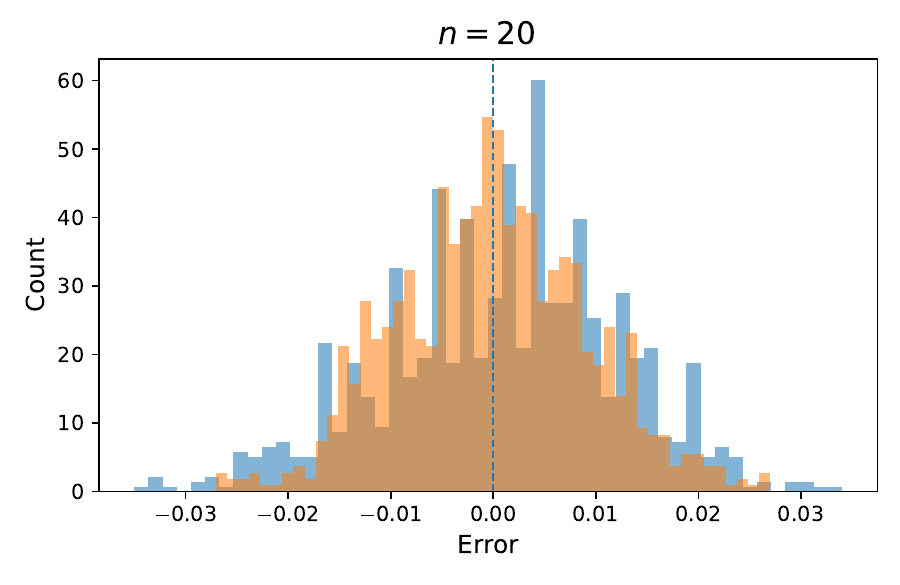}
        \label{fig:sub2}
    \end{subfigure}
    \hfill
    \begin{subfigure}[b]{0.32\textwidth}
        \centering
        \includegraphics[width=\textwidth]{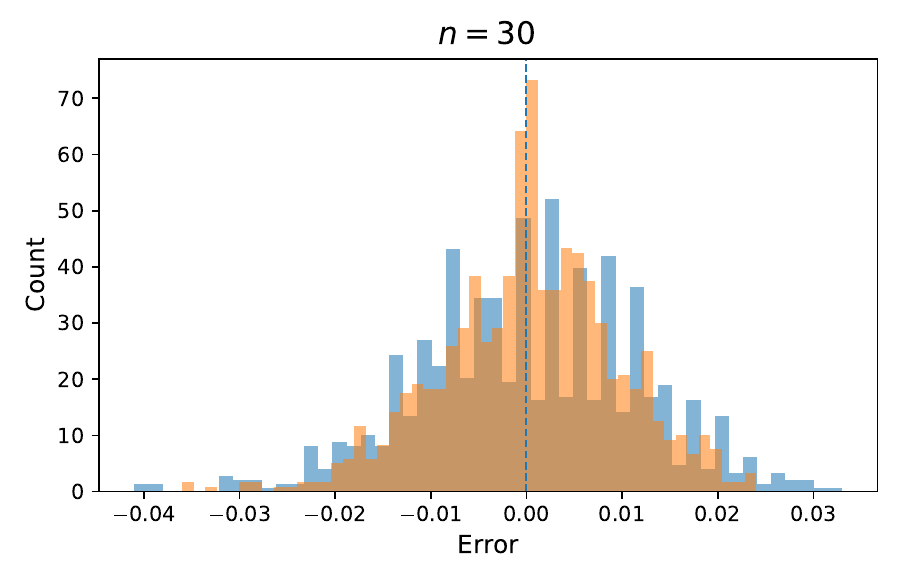}
        \label{fig:sub3}
    \end{subfigure}
    
    \vspace{0.3cm}
    
    \begin{subfigure}[b]{0.32\textwidth}
        \centering
        \includegraphics[width=\textwidth]{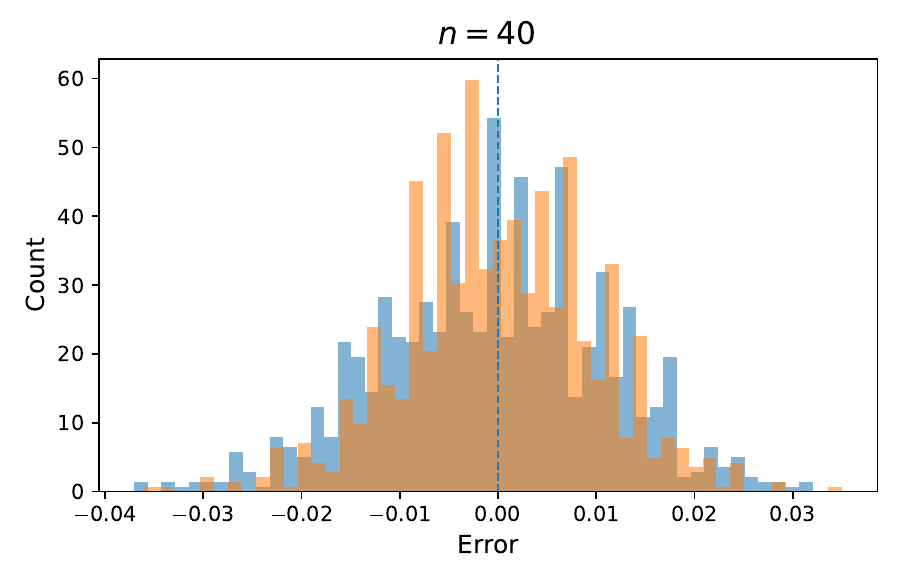}
        \label{fig:sub4}
    \end{subfigure}
    \hfill
    \begin{subfigure}[b]{0.32\textwidth}
        \centering
        \includegraphics[width=\textwidth]{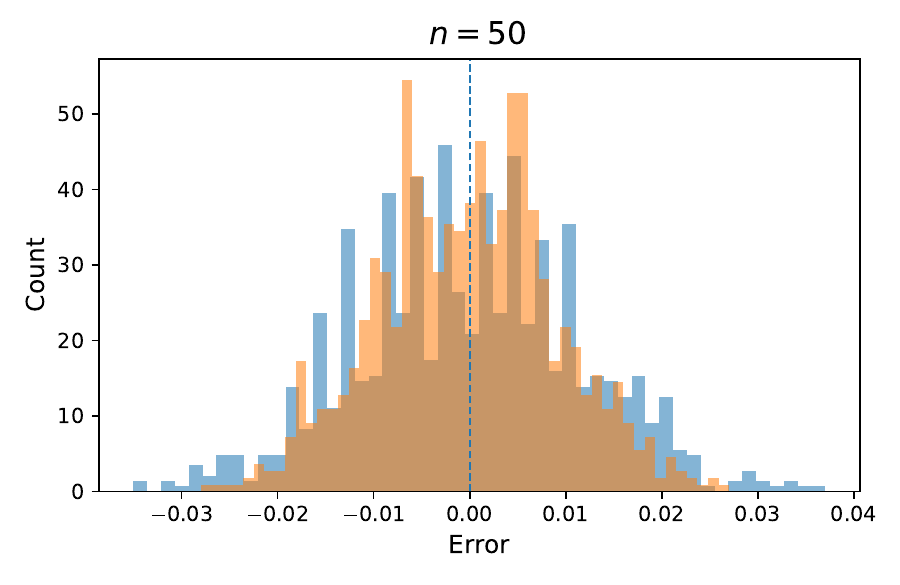}
        \label{fig:sub5}
    \end{subfigure}
    \hfill
    \begin{subfigure}[b]{0.32\textwidth}
        \centering
        \includegraphics[width=\textwidth]{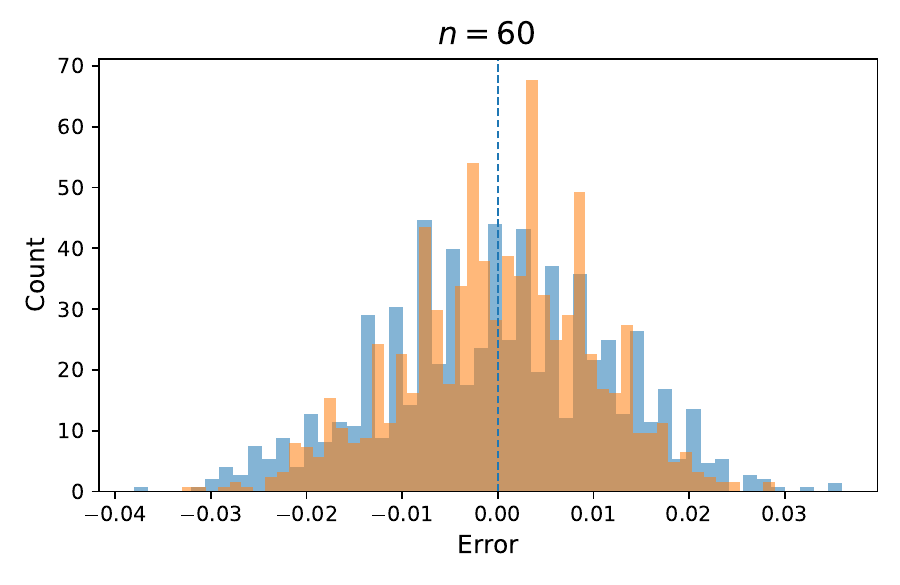}
        \label{fig:sub6}
    \end{subfigure}

    \caption{\justifying Histograms of errors in GHZ fidelity estimation for OASIS-GHZ \cite{cha2026operator} and BM-Fid under a matched copy budget.}
    \label{fig:main_grid}
\end{figure}

We compare BM-Fid against the OASIS-GHZ fidelity estimator in \cite{cha2026operator}.
We focus on fidelity estimation because it has a direct quantitative performance metric, namely the mean squared error of the estimator.
For each
$
n\in\{10,\allowbreak20,\allowbreak30,\allowbreak40,\allowbreak50,\allowbreak60\},
$
we used the depolarized GHZ state
$$
\rho_n
=0.9\,\proj{G_n}
+0.1\,\frac{\id}{2^n}.
$$
For each value of \(n\), each method was given the same copy budget, with one independent estimation experiment using \(N=1000\) copies of \(\rho_n\).  We repeated this independent \(N\)-copy experiment 1000 times for OASIS-GHZ and 1000 times for BM-Fid.  For every independent run, we recorded the estimation error
$
\widehat F - F_{G_n}(\rho_n),
$
where \(\widehat F\) denotes the fidelity estimate produced by the corresponding method.

Figure~\ref{fig:main_grid} shows the resulting error distributions.  Across all tested system sizes, BM-Fid gives a narrower error distribution than OASIS-GHZ under the same copy budget. Quantitatively, BM-Fid reduced the empirical MSE by more than 30\% for every tested value of \(n\).

\section{Conclusion}

In this work, we introduced BM-Cert, a single-copy verification protocol for the \(n\)-qubit GHZ state using only disjoint two-qubit Bell-basis measurements, together with one single-qubit \(X\)-basis measurement when \(n\) is odd.
The remarkable point is that once we allow only the additional capability of two-qubit entangling measurements, the spectral gap improves to \(1-O(1/n)\). Consequently, the leading copy coefficient approaches that of the ideal projective certification asymptotically.
BM-Fid further shows that the Bell-matching measurement primitive is not exhausted by its pass/fail verifier. With different classical postprocessing, it also performs fidelity estimation at almost the same cost as an ideal projection measurement.
Finally, Brick-Cert shows that under open-boundary linear nearest neighbor connectivity, disjoint 2-local certification achieves the optimal spectral gap \(4/5\).

Several directions remain open.
The BM-Cert analysis assumes uniformly random matchings on the complete interaction graph, which requires the ability to Bell-measure arbitrary pairs. Brick-Cert complements this by treating the open-boundary linear nearest neighbor architecture. More general connectivity graphs, however, would require a different analysis.
Another limitation is that memory-assisted and collective-copy strategies are outside the present model \cite{o2026instance}. Such strategies can be stronger, as demonstrated by recent work on quantum-memory-assisted verification \cite{chen2025quantum}. Finally, while we prove optimality inside the Bell-matching family, it remains open whether BM-Cert is optimal among all perfect-completeness verification strategies based on at most 2-local measurements.

\section*{Acknowledgments}

This work is in part supported by the National Research Foundation of Korea (NRF, RS-2024-00451435 (20\%), RS-2024-00413957 (20\%)), Institute of Information \& communications Technology Planning \& Evaluation (IITP, RS-2025-02305453 (15\%), RS-2025-02273157 (15\%), RS-2025-25442149 (15\%), RS-2021-II211343 (15\%)) grant funded by the Ministry of Science and ICT (MSIT), Institute of New Media and Communications (INMAC), and the BK21 FOUR program of the Education, Artificial Intelligence Graduate School Program (Seoul National University), and Research Program for Future ICT Pioneers, Seoul National University in 2026.

\section*{Author contributions}

H.C. conceptualized the framework, conducted the experiments, and wrote the manuscript. J.L. provided supervision and reviewed the manuscript.

\section*{Data availability}
The data that support the findings of this article are publicly available \cite{cha2026bmfidcode}.

\bibliography{main}

@incollection{greenberger1989going,
  title={{Going beyond Bell's theorem}},
  author={Greenberger, Daniel M and Horne, Michael A and Zeilinger, Anton},
  booktitle={Bell's theorem, quantum theory and conceptions of the universe},
  pages={69--72},
  year={1989},
  publisher={Springer}
}

@article{flammia2011direct,
  title={{Direct fidelity estimation from few Pauli measurements}},
  author={Flammia, Steven T and Liu, Yi-Kai},
  journal={Physical Review Letters},
  volume={106},
  number={23},
  pages={230501},
  year={2011},
  publisher={APS}
}

@article{pallister2018optimal,
  title={{Optimal verification of entangled states with local measurements}},
  author={Pallister, Sam and Linden, Noah and Montanaro, Ashley},
  journal={Physical Review Letters},
  volume={120},
  number={17},
  pages={170502},
  year={2018},
  publisher={APS}
}

@article{zhu2019efficient,
  title = {{Efficient Verification of Pure Quantum States in the Adversarial Scenario}},
  author = {Zhu, Huangjun and Hayashi, Masahito},
  journal = {Physical Review Letters},
  volume = {123},
  issue = {26},
  pages = {260504},
  year = {2019},
  publisher = {APS}
}

@article{li2020optimal,
  title={{Optimal verification of Greenberger--Horne--Zeilinger states}},
  author={Li, Zihao and Han, Yun-Guang and Zhu, Huangjun},
  journal={Physical Review Applied},
  volume={13},
  number={5},
  pages={054002},
  year={2020},
  publisher={APS}
}

@article{dangniam2020optimal,
  title={{Optimal verification of stabilizer states}},
  author={Dangniam, Ninnat and Han, Yun-Guang and Zhu, Huangjun},
  journal={Physical Review Research},
  volume={2},
  number={4},
  pages={043323},
  year={2020},
  publisher={APS}
}

@article{huang2020predicting,
  title={{Predicting many properties of a quantum system from very few measurements}},
  author={Huang, Hsin-Yuan and Kueng, Richard and Preskill, John},
  journal={Nature Physics},
  volume={16},
  number={10},
  pages={1050--1057},
  year={2020},
  publisher={Nature Publishing Group UK London}
}

@article{yu2022statistical,
  title={{Statistical methods for quantum state verification and fidelity estimation}},
  author={Yu, Xiao-Dong and Shang, Jiangwei and G{\"u}hne, Otfried},
  journal={Advanced Quantum Technologies},
  volume={5},
  number={5},
  pages={2100126},
  year={2022},
  publisher={Wiley Online Library}
}

@article{han2021optimal,
  title={{Optimal verification of the Bell state and Greenberger--Horne--Zeilinger states in untrusted quantum networks}},
  author={Han, Yun-Guang and Li, Zihao and Wang, Yukun and Zhu, Huangjun},
  journal={npj Quantum Information},
  volume={7},
  number={1},
  pages={164},
  year={2021},
  publisher={Nature Publishing Group UK London}
}

@article{liu2021universally,
  title={{Universally optimal verification of entangled states with nondemolition measurements}},
  author={Liu, Ye-Chao and Shang, Jiangwei and Han, Rui and Zhang, Xiangdong},
  journal={Physical Review Letters},
  volume={126},
  number={9},
  pages={090504},
  year={2021},
  publisher={APS}
}

@article{chen2025quantum,
  title={{Quantum memory assisted entangled state verification with local measurements}},
  author={Chen, Siyuan and Xie, Wei and Xu, Ping and Wang, Kun},
  journal={Physical Review Research},
  volume={7},
  number={1},
  pages={013003},
  year={2025},
  publisher={APS}
}

@article{aolita2015reliable,
  title={{Reliable quantum certification of photonic state preparations}},
  author={Aolita, Leandro and Gogolin, Christian and Kliesch, Martin and Eisert, Jens},
  journal={Nature Communications},
  volume={6},
  number={1},
  pages={8498},
  year={2015},
  publisher={Nature Publishing Group UK London}
}

@article{kliesch2021theory,
  title={{Theory of quantum system certification}},
  author={Kliesch, Martin and Roth, Ingo},
  journal={PRX Quantum},
  volume={2},
  number={1},
  pages={010201},
  year={2021},
  publisher={APS}
}

@article{greenberger1990bell,
  title={{Bell's theorem without inequalities}},
  author={Greenberger, Daniel M and Horne, Michael A and Shimony, Abner and Zeilinger, Anton},
  journal={American Journal of Physics},
  volume={58},
  number={12},
  pages={1131--1143},
  year={1990},
  publisher={American Association of Physics Teachers}
}

@article{mermin1990extreme,
  title={{Extreme quantum entanglement in a superposition of macroscopically distinct states}},
  author={Mermin, N David},
  journal={Physical Review Letters},
  volume={65},
  number={15},
  pages={1838},
  year={1990},
  publisher={APS}
}

@article{bouwmeester1999observation,
  title={{Observation of three-photon Greenberger-Horne-Zeilinger entanglement}},
  author={Bouwmeester, Dik and Pan, Jian-Wei and Daniell, Matthew and Weinfurter, Harald and Zeilinger, Anton},
  journal={Physical Review Letters},
  volume={82},
  number={7},
  pages={1345},
  year={1999},
  publisher={APS}
}

@article{pan2000experimental,
  title={{Experimental test of quantum nonlocality in three-photon Greenberger--Horne--Zeilinger entanglement}},
  author={Pan, Jian-Wei and Bouwmeester, Dik and Daniell, Matthew and Weinfurter, Harald and Zeilinger, Anton},
  journal={Nature},
  volume={403},
  number={6769},
  pages={515--519},
  year={2000},
  publisher={Nature Publishing Group UK London}
}

@article{hillery1999quantum,
  title={{Quantum secret sharing}},
  author={Hillery, Mark and Bu{\v{z}}ek, Vladim{\'\i}r and Berthiaume, Andr{\'e}},
  journal={Physical Review A},
  volume={59},
  number={3},
  pages={1829},
  year={1999},
  publisher={APS}
}

@article{karlsson1998quantum,
  title={{Quantum teleportation using three-particle entanglement}},
  author={Karlsson, Anders and Bourennane, Mohamed},
  journal={Physical Review A},
  volume={58},
  number={6},
  pages={4394},
  year={1998},
  publisher={APS}
}

@article{pickston2023conference,
  title={{Conference key agreement in a quantum network}},
  author={Pickston, Alexander and Ho, Joseph and Ulibarrena, Andr{\'e}s and Grasselli, Federico and Proietti, Massimiliano and Morrison, Christopher L and Barrow, Peter and Graffitti, Francesco and Fedrizzi, Alessandro},
  journal={npj Quantum Information},
  volume={9},
  number={1},
  pages={82},
  year={2023},
  publisher={Nature Publishing Group UK London}
}

@article{toth2005entanglement,
  title={{Entanglement detection in the stabilizer formalism}},
  author={T{\'o}th, G{\'e}za and G{\"u}hne, Otfried},
  journal={Physical Review A},
  volume={72},
  number={2},
  pages={022340},
  year={2005},
  publisher={APS}
}

@inproceedings{haah2016sample,
  title={{Sample-optimal tomography of quantum states}},
  author={Haah, Jeongwan and Harrow, Aram W and Ji, Zhengfeng and Wu, Xiaodi and Yu, Nengkun},
  booktitle={Proceedings of the Forty-Eighth Annual ACM Symposium on Theory of Computing},
  pages={913--925},
  year={2016}
}

@inproceedings{o2016efficient,
  title={{Efficient quantum tomography}},
  author={O'Donnell, Ryan and Wright, John},
  booktitle={Proceedings of the Forty-Eighth Annual ACM Symposium on Theory of Computing},
  pages={899--912},
  year={2016}
}

@article{cramer2010efficient,
  title={{Efficient quantum state tomography}},
  author={Cramer, Marcus and Plenio, Martin B and Flammia, Steven T and Somma, Rolando and Gross, David and Bartlett, Stephen D and Landon-Cardinal, Olivier and Poulin, David and Liu, Yi-Kai},
  journal={Nature Communications},
  volume={1},
  number={1},
  pages={149},
  year={2010},
  publisher={Nature Publishing Group UK London}
}

@article{gross2010quantum,
  title={{Quantum state tomography via compressed sensing}},
  author={Gross, David and Liu, Yi-Kai and Flammia, Steven T and Becker, Stephen and Eisert, Jens},
  journal={Physical Review Letters},
  volume={105},
  number={15},
  pages={150401},
  year={2010},
  publisher={APS}
}

@article{da2011practical,
  title={{Practical characterization of quantum devices without tomography}},
  author={da Silva, Marcus P and Landon-Cardinal, Olivier and Poulin, David},
  journal={Physical Review Letters},
  volume={107},
  number={21},
  pages={210404},
  year={2011},
  publisher={APS}
}

@inproceedings{aaronson2018shadow,
  title={{Shadow tomography of quantum states}},
  author={Aaronson, Scott},
  booktitle={Proceedings of the 50th Annual ACM SIGACT Symposium on Theory of Computing},
  pages={325--338},
  year={2018}
}

@article{bennett1993teleporting,
  title={{Teleporting an unknown quantum state via dual classical and Einstein-Podolsky-Rosen channels}},
  author={Bennett, Charles H and Brassard, Gilles and Cr{\'e}peau, Claude and Jozsa, Richard and Peres, Asher and Wootters, William K},
  journal={Physical Review Letters},
  volume={70},
  number={13},
  pages={1895},
  year={1993},
  publisher={APS}
}

@article{cha2025efficient,
  title={{Efficient sampling for Pauli-measurement-based shadow tomography in direct fidelity estimation}},
  author={Cha, Hyunho and Lee, Jungwoo},
  journal={Physical Review A},
  volume={112},
  number={3},
  pages={032427},
  year={2025},
  publisher={APS}
}

@article{cha2026operator,
  title={{Operator-aware shadow importance sampling for accurate fidelity estimation}},
  author={Cha, Hyunho and Hong, Sangwoo and Lee, Jungwoo},
  journal={Physical Review A},
  volume={113},
  number={4},
  pages={042602},
  year={2026},
  publisher={APS}
}

@article{barbera2025sampling,
  title={{Sampling groups of pauli operators to enhance direct fidelity estimation}},
  author={Barber{\`a}-Rodr{\'\i}guez, J{\'u}lia and Navarro, Mariana and Zambrano, Leonardo},
  journal={Quantum},
  volume={9},
  pages={1784},
  year={2025},
  publisher={Verein zur F{\"o}rderung des Open Access Publizierens in den Quantenwissenschaften}
}

@article{takeuchi2018verification,
  title={{Verification of many-qubit states}},
  author={Takeuchi, Yuki and Morimae, Tomoyuki},
  journal={Physical Review X},
  volume={8},
  number={2},
  pages={021060},
  year={2018},
  publisher={APS}
}

@article{govcanin2022sample,
  title={{Sample-efficient device-independent quantum state verification and certification}},
  author={Go{\v{c}}anin, Aleksandra and {\v{S}}upi{\'c}, Ivan and Daki{\'c}, Borivoje},
  journal={PRX Quantum},
  volume={3},
  number={1},
  pages={010317},
  year={2022},
  publisher={APS}
}

@article{nguyen2022optimizing,
  title={{Optimizing shadow tomography with generalized measurements}},
  author={Nguyen, H Chau and B{\"o}nsel, Jan Lennart and Steinberg, Jonathan and G{\"u}hne, Otfried},
  journal={Physical Review Letters},
  volume={129},
  number={22},
  pages={220502},
  year={2022},
  publisher={APS}
}

@article{hu2023classical,
  title={{Classical shadow tomography with locally scrambled quantum dynamics}},
  author={Hu, Hong-Ye and Choi, Soonwon and You, Yi-Zhuang},
  journal={Physical Review Research},
  volume={5},
  number={2},
  pages={023027},
  year={2023},
  publisher={APS}
}

@article{king2025triply,
  title = {{Triply Efficient Shadow Tomography}},
  author = {King, Robbie and Gosset, David and Kothari, Robin and Babbush, Ryan},
  journal = {PRX Quantum},
  volume = {6},
  issue = {1},
  pages = {010336},
  year = {2025},
  publisher = {APS}
}

@article{eisert2020quantum,
  title={{Quantum certification and benchmarking}},
  author={Eisert, Jens and Hangleiter, Dominik and Walk, Nathan and Roth, Ingo and Markham, Damian and Parekh, Rhea and Chabaud, Ulysse and Kashefi, Elham},
  journal={Nature Reviews Physics},
  volume={2},
  number={7},
  pages={382--390},
  year={2020},
  publisher={Nature Publishing Group UK London}
}

@article{james2001measurement,
  title={{Measurement of qubits}},
  author={James, Daniel FV and Kwiat, Paul G and Munro, William J and White, Andrew G},
  journal={Physical Review A},
  volume={64},
  number={5},
  pages={052312},
  year={2001},
  publisher={APS}
}

@article{christandl2012reliable,
  title={{Reliable quantum state tomography}},
  author={Christandl, Matthias and Renner, Renato},
  journal={Physical Review Letters},
  volume={109},
  number={12},
  pages={120403},
  year={2012},
  publisher={APS}
}

@article{blume2010optimal,
  title={{Optimal, reliable estimation of quantum states}},
  author={Blume-Kohout, Robin},
  journal={New Journal of Physics},
  volume={12},
  number={4},
  pages={043034},
  year={2010}
}

@article{zhang2021direct,
  title={{Direct fidelity estimation of quantum states using machine learning}},
  author={Zhang, Xiaoqian and Luo, Maolin and Wen, Zhaodi and Feng, Qin and Pang, Shengshi and Luo, Weiqi and Zhou, Xiaoqi},
  journal={Physical Review Letters},
  volume={127},
  number={13},
  pages={130503},
  year={2021},
  publisher={APS}
}

@article{zhang2021experimental,
  title={{Experimental quantum state measurement with classical shadows}},
  author={Zhang, Ting and Sun, Jinzhao and Fang, Xiao-Xu and Zhang, Xiao-Ming and Yuan, Xiao and Lu, He},
  journal={Physical Review Letters},
  volume={127},
  number={20},
  pages={200501},
  year={2021},
  publisher={APS}
}

@article{ippoliti2024classical,
  title={{Classical shadows based on locally-entangled measurements}},
  author={Ippoliti, Matteo},
  journal={Quantum},
  volume={8},
  pages={1293},
  year={2024},
  publisher={Verein zur F{\"o}rderung des Open Access Publizierens in den Quantenwissenschaften}
}

@article{li2026universal,
  title={{Universal and Efficient Quantum State Verification via Schmidt Decomposition and Mutually Unbiased Bases}},
  author={Li, Yunting and Zhu, Huangjun},
  journal={Quantum},
  volume={10},
  pages={2011},
  year={2026},
  publisher={Verein zur F{\"o}rderung des Open Access Publizierens in den Quantenwissenschaften}
}

@article{liu2019efficient,
  title = {{Efficient Verification of Dicke States}},
  author = {Liu, Ye-Chao and Yu, Xiao-Dong and Shang, Jiangwei and Zhu, Huangjun and Zhang, Xiangdong},
  journal = {Physical Review Applied},
  volume = {12},
  issue = {4},
  pages = {044020},
  year = {2019},
  publisher = {APS}
}

@article{zhu2019efficienthyp,
  title={{Efficient verification of hypergraph states}},
  author={Zhu, Huangjun and Hayashi, Masahito},
  journal={Physical Review Applied},
  volume={12},
  number={5},
  pages={054047},
  year={2019},
  publisher={APS}
}

@article{li2023robust,
  title={{Robust and efficient verification of graph states in blind measurement-based quantum computation}},
  author={Li, Zihao and Zhu, Huangjun and Hayashi, Masahito},
  journal={npj Quantum Information},
  volume={9},
  number={1},
  pages={115},
  year={2023},
  publisher={Nature Publishing Group UK London}
}

@book{nielsen2010quantum,
  title={{Quantum computation and quantum information}},
  author={Nielsen, Michael A and Chuang, Isaac L},
  year={2010},
  publisher={Cambridge University Press}
}

@misc{cha2026bmfidcode,
  author       = {Cha, Hyunho},
  title        = {Code repository},
  howpublished = {\url{https://github.com/HyunHoCha/GHZ-2loc}},
  year         = {2026}
}

@inproceedings{shafaei2013optimization,
  title={{Optimization of quantum circuits for interaction distance in linear nearest neighbor architectures}},
  author={Shafaei, Alireza and Saeedi, Mehdi and Pedram, Massoud},
  booktitle={Proceedings of the 50th Annual Design Automation Conference},
  pages={1--6},
  year={2013}
}

@article{wille2014exact,
  title={{Exact reordering of circuit lines for nearest neighbor quantum architectures}},
  author={Wille, Robert and Lye, Aaron and Drechsler, Rolf},
  journal={IEEE Transactions on Computer-Aided Design of Integrated Circuits and Systems},
  volume={33},
  number={12},
  pages={1818--1831},
  year={2014},
  publisher={IEEE}
}

@inproceedings{buadescu2019quantum,
  title={{Quantum state certification}},
  author={B{\u{a}}descu, Costin and O'Donnell, Ryan and Wright, John},
  booktitle={Proceedings of the 51st Annual ACM SIGACT Symposium on Theory of Computing},
  pages={503--514},
  year={2019}
}

@article{somma2006lower,
  title={{Lower bounds for the fidelity of entangled-state preparation}},
  author={Somma, Rolando D and Chiaverini, John and Berkeland, Dana J},
  journal={Physical Review A},
  volume={74},
  number={5},
  pages={052302},
  year={2006},
  publisher={APS}
}

@article{huang2025certifying,
  title={{Certifying almost all quantum states with few single-qubit measurements}},
  author={Huang, Hsin-Yuan and Preskill, John and Soleimanifar, Mehdi},
  journal={Nature Physics},
  volume={21},
  number={11},
  pages={1834--1841},
  year={2025},
  publisher={Nature Publishing Group UK London}
}

@article{dur2000three,
  title={{Three qubits can be entangled in two inequivalent ways}},
  author={D{\"u}r, Wolfgang and Vidal, Guifre and Cirac, J Ignacio},
  journal={Physical Review A},
  volume={62},
  number={6},
  pages={062314},
  year={2000},
  publisher={APS}
}

@article{hahn2020anonymous,
  title={{Anonymous quantum conference key agreement}},
  author={Hahn, Frederik and De Jong, Jarn and Pappa, Anna},
  journal={PRX Quantum},
  volume={1},
  number={2},
  pages={020325},
  year={2020},
  publisher={APS}
}

@article{zhang2020experimental,
  title={{Experimental optimal verification of entangled states using local measurements}},
  author={Zhang, Wen-Hao and Zhang, Chao and Chen, Zhe and Peng, Xing-Xiang and Xu, Xiao-Ye and Yin, Peng and Yu, Shang and Ye, Xiang-Jun and Han, Yong-Jian and Xu, Jin-Shi and others},
  journal={Physical Review Letters},
  volume={125},
  number={3},
  pages={030506},
  year={2020},
  publisher={APS}
}

@article{saeedi2011synthesis,
  title={{Synthesis of quantum circuits for linear nearest neighbor architectures}},
  author={Saeedi, Mehdi and Wille, Robert and Drechsler, Rolf},
  journal={Quantum Information Processing},
  volume={10},
  number={3},
  pages={355--377},
  year={2011},
  publisher={Springer}
}

@inproceedings{o2026instance,
  title={{Instance-optimal quantum state certification with entangled measurements}},
  author={O'Donnell, Ryan and Wadhwa, Chirag},
  booktitle={Proceedings of the 58th Annual ACM Symposium on Theory of Computing},
  pages={398--409},
  year={2026}
}

@article{toth2005detecting,
  title={{Detecting genuine multipartite entanglement with two local measurements}},
  author={T{\'o}th, G{\'e}za and G{\"u}hne, Otfried},
  journal={Physical Review Letters},
  volume={94},
  number={6},
  pages={060501},
  year={2005},
  publisher={APS}
}

@article{kiesel2005experimental,
  title={{Experimental analysis of a four-qubit photon cluster state}},
  author={Kiesel, Nikolai and Schmid, Christian and Weber, Ulrich and T{\'o}th, G{\'e}za and G{\"u}hne, Otfried and Ursin, Rupert and Weinfurter, Harald},
  journal={Physical Review Letters},
  volume={95},
  number={21},
  pages={210502},
  year={2005},
  publisher={APS}
}

@article{vallone2007realization,
  title={{Realization and characterization of a two-photon four-qubit linear cluster state}},
  author={Vallone, Giuseppe and Pomarico, Enrico and Mataloni, Paolo and De Martini, Francesco and Berardi, Vincenzo},
  journal={Physical Review Letters},
  volume={98},
  number={18},
  pages={180502},
  year={2007},
  publisher={APS}
}

@article{chen2007experimental,
  title={{Experimental realization of one-way quantum computing with two-photon four-qubit cluster states}},
  author={Chen, Kai and Li, Che-Ming and Zhang, Qiang and Chen, Yu-Ao and Goebel, Alexander and Chen, Shuai and Mair, Alois and Pan, Jian-Wei},
  journal={Physical Review Letters},
  volume={99},
  number={12},
  pages={120503},
  year={2007},
  publisher={APS}
}

@article{gong2019genuine,
  title={{Genuine 12-qubit entanglement on a superconducting quantum processor}},
  author={Gong, Ming and Chen, Ming-Cheng and Zheng, Yarui and Wang, Shiyu and Zha, Chen and Deng, Hui and Yan, Zhiguang and Rong, Hao and Wu, Yulin and Li, Shaowei and others},
  journal={Physical Review Letters},
  volume={122},
  number={11},
  pages={110501},
  year={2019},
  publisher={APS}
}

@article{thomas2022efficient,
  title={{Efficient generation of entangled multiphoton graph states from a single atom}},
  author={Thomas, Philip and Ruscio, Leonardo and Morin, Olivier and Rempe, Gerhard},
  journal={Nature},
  volume={608},
  number={7924},
  pages={677--681},
  year={2022},
  publisher={Nature Publishing Group UK London}
}

@article{toth2010permutationally,
  title={{Permutationally invariant quantum tomography}},
  author={T{\'o}th, G{\'e}za and Wieczorek, Witlef and Gross, David and Krischek, Roland and Schwemmer, Christian and Weinfurter, Harald},
  journal={Physical Review Letters},
  volume={105},
  number={25},
  pages={250403},
  year={2010},
  publisher={APS}
}

@article{schwemmer2014experimental,
  title={{Experimental comparison of efficient tomography schemes for a six-qubit state}},
  author={Schwemmer, Christian and T{\'o}th, G{\'e}za and Niggebaum, Alexander and Moroder, Tobias and Gross, David and G{\"u}hne, Otfried and Weinfurter, Harald},
  journal={Physical Review Letters},
  volume={113},
  number={4},
  pages={040503},
  year={2014},
  publisher={APS}
}

@article{cai2023quantum,
  title={{Quantum error mitigation}},
  author={Cai, Zhenyu and Babbush, Ryan and Benjamin, Simon C and Endo, Suguru and Huggins, William J and Li, Ying and McClean, Jarrod R and O’Brien, Thomas E},
  journal={Reviews of Modern Physics},
  volume={95},
  number={4},
  pages={045005},
  year={2023},
  publisher={APS}
}

@article{yu2025efficient,
  title={{Efficient separate quantification of state preparation errors and measurement errors on quantum computers and their mitigation}},
  author={Yu, Hongye and Wei, Tzu-Chieh},
  journal={Quantum},
  volume={9},
  pages={1724},
  year={2025},
  publisher={Verein zur F{\"o}rderung des Open Access Publizierens in den Quantenwissenschaften}
}

@article{haupt2026mitigating,
  title={{Mitigating errors in state preparation and measurement with noncomputational states}},
  author={Haupt, Conrad J and Carrera Vazquez, Almudena and Fischer, Laurin E and Woerner, Stefan and Egger, Daniel J},
  journal={Physical Review Applied},
  volume={25},
  number={2},
  pages={024079},
  year={2026},
  publisher={APS}
}

@article{khan2024multiaxis,
  title={{Multiaxis quantum noise spectroscopy robust to errors in state preparation and measurement}},
  author={Khan, Muhammad Qasim and Dong, Wenzheng and Norris, Leigh M and Viola, Lorenza},
  journal={Physical Review Applied},
  volume={22},
  number={2},
  pages={024074},
  year={2024},
  publisher={APS}
}

@article{dasgupta2024impact,
  title={{Impact of unreliable devices on stability of quantum computations}},
  author={Dasgupta, Samudra and Humble, Travis},
  journal={ACM Transactions on Quantum Computing},
  volume={5},
  number={4},
  pages={1--23},
  year={2024},
  publisher={ACM New York, NY}
}

@article{geller2020rigorous,
  title={{Rigorous measurement error correction}},
  author={Geller, Michael R},
  journal={Quantum Science \& Technology},
  volume={5},
  number={3},
  pages={03LT01},
  year={2020},
  publisher={IOP Publishing}
}

\end{document}